\definecolor{dark-gray}{gray}{0.3}
\definecolor{dkgray}{rgb}{.4,.4,.4}
\definecolor{dkblue}{rgb}{0,0,.5}
\definecolor{medblue}{rgb}{0,0,.75}
\definecolor{rust}{rgb}{0.5,0.1,0.1}
\providecommand{\mathbold}[1]{\bm{#1}} 
\theoremstyle{plain}
\newtheorem{theorem}{Theorem}[section]
\newtheorem{lemma}[theorem]{Lemma}
\newtheorem{proposition}[theorem]{Proposition}
\newtheorem{fact}[theorem]{Fact}
\newtheorem{corollary}[theorem]{Corollary}
\theoremstyle{definition}
\newtheorem{definition}[theorem]{Definition}
\newtheorem{example}[theorem]{Example}
\newtheorem{remark}[theorem]{Remark}
\newcommand{\vct}[1]{\mathbold{#1}}
\newcommand{\mtx}[1]{\mathbold{#1}}
\newcommand{\defeq}{\ensuremath{\mathrel{\mathop{:}}=}}
\newcommand{\eqdef}{\ensuremath{=\mathrel{\mathop{:}}}}
\newcommand{\minprog}[3]{
  \left.
  \begin{array}{ll} 
    \text{minimize}_{#1} & {#2}  \\[4pt]
    \text{subject to} & {#3} 
  \end{array} 
  \right. 
}
\numberwithin{equation}{section}
\renewcommand{\mid}{\mathrel{\mathop{:}}}
\renewcommand{\phi}{\varphi}
\newcommand{\eps}{\varepsilon}
\newcommand{\Fcone}{\mathcal{F}} %
\newcommand{\econst}{\mathrm{e}}
\newcommand{\zerovct}{\vct{0}}
\newcommand{\Id}{\mathbf{I}}
\newcommand{\R}{\mathbb{R}}
\newcommand{\abs}[1]{\left\vert {#1} \right\vert}
\newcommand{\sgn}[1]{\operatorname{sgn}{#1}}
\newcommand{\argmin}{\operatorname*{arg\; min}}
\newcommand{\Prob}[1]{\mathbb{P}\left\{ {#1} \right\}}
\def\@cite#1#2{{\normalfont[{\bfseries#1\if@tempswa , \normalfont{#2}\fi}]}}
\newcommand{\adj}{*}
\newcommand{\rank}{\operatorname{rank}}
\newcommand{\nnz}{\operatorname{nnz}}
\renewcommand{\vec}{\operatorname{vec}}
\newcommand{\Proj}{\ensuremath{\mtx{\Pi}}}
\newcommand{\norm}[1]{\left\Vert {#1} \right\Vert}
\newcommand{\lone}[1]{\norm{#1}_{\ell_1}}
\newcommand{\smlone}[1]{\|#1\|_{\ell_1}}
\newcommand{\linf}[1]{\norm{#1}_{\ell_\infty}}
\newcommand{\sone}[1]{\norm{#1}_{S_1}}
\newcommand{\opnorm}[1]{\norm{#1}_{\mathrm{Op}}}
\newcommand{\CDMq}{convex demixing method}
\newcommand{\Index}{\mathcal{D}}
\title{Sharp recovery bounds
 for convex demixing, \\with applications}
\author{Michael B. McCoy\footnote{Corresponding author.}\,\,\,\footnote{  The authors are with the Department
    of Computing \& Mathematical Sciences, California Institute
    of Technology, 1200 E California Blvd., Pasadena, CA 91125.
    \emph{Email:}~\texttt{\{mccoy,jtropp\}@cms.caltech.edu}.   Tel.: (626) 395-4059
  Fax: (626) 578-0124.  Research
    supported by ONR awards N00014-08-1-0883 and N00014-11-1002, AFOSR
    award FA9550-09-1-0643, DARPA award N66001-08-1-2065, and a Sloan
    Research Fellowship.}~{ } \and Joel
  A. Tropp\footnotemark[\value{footnote}]}
\date{\small Received: May 20, 2012. \\  Accepted: January 8, 2014}
\newcommand{\figwidth}{0.6}
\begin{document}

\maketitle
\begin{abstract}
  Demixing refers to the challenge of identifying two structured  signals given only the sum of the two signals and prior information  about their structures. Examples include the problem of  separating a signal that is sparse with respect to one basis from a  signal that is sparse with respect to a second basis, and the problem of decomposing an observed matrix into a low-rank matrix plus a sparse matrix.  This paper describes and analyzes a framework, based on convex optimization,  for solving these demixing problems, and many others.    This work introduces a randomized signal model which ensures that  the two structures are incoherent, i.e., generically oriented.  For  an observation from this model, this approach identifies  a summary statistic that reflects the complexity of a particular  signal.  The difficulty of separating two structured, incoherent  signals depends only on the total complexity of the two structures.  Some applications include (i) demixing two signals that are  sparse in mutually incoherent bases; (ii) decoding spread-spectrum  transmissions in the presence of impulsive errors; and (iii) removing  sparse corruptions from a low-rank matrix.  In each case, the  theoretical analysis of the convex demixing method  closely matches its empirical behavior.

\end{abstract}

\vspace{12pt}
\noindent
{\footnotesize {Communicated by} Emmanuel Cand\`es.
\\[-2pt]
\noindent
{Keywords:} \emph{Demixing, sparsity, integral geometry, convex optimization}
\\[-2pt]
\noindent
{AMS subject classifications (MSC2010):} 60D05, 52B55, 52A22 (primary) 94B75 (secondary)
}

\section{Introduction}
\label{sec:introduction}
In modern data-intensive science, it is common to observe a
superposition of multiple information-bearing
signals. \emph{Demixing} refers to the challenge of separating
out the constituent signals from the observation.  A fundamental
computational question is to understand when a tractable algorithm can
successfully complete the demixing.  Problems of this sort arise
in fields as diverse as acoustics~\cite{5946407},
astronomy~\cite{Starck2003},
communications~\cite{Bobin2006}, \cite{Bobin2007},
geophysics~\cite{Taylor1979}, image
processing~\cite{Starck2005}, \cite{Elad2005}, machine
learning~\cite{Chandrasekaran2010a}, and statistics~\cite{Candes2009}.
Some well-known examples of convex methods for demixing include
morphological component analysis~\cite{Starck2003}, robust principal
component analysis~\cite{Chandrasekaran2011}, \cite{Candes2009}, and
inpainting~\cite{Elad2005}.

This work presents a general framework for demixing based on
convex optimization.  We study
the geometry of the optimization problem, and we develop 
conditions that describe precisely when our method succeeds.  Let us
illustrate the major aspects of our approach through a concrete
example. 

\subsection{A first application: Morphological component analysis}
\label{sec:first-appl-deconv}

Starck et al.\ use demixing to model the problem of
distinguishing stars from galaxies in an astronomical
image~\cite{Starck2003}.  This task requires hypotheses on the two
types of objects.  First, we must assume that stars and galaxies
exhibit different kinds of structure: stars appear as localized bright
points, while galaxies are wispy or filamented.  Second, we must
insist that the image is not so full of stars, nor of galaxies, that
they obscure one another.  These two properties are modeled by the
notions of \emph{incoherence} and \emph{sparsity}.  With these hypotheses, we can solve the demixing problem using a method known as morphological component
analysis (MCA)~\cite{Starck2003}, \cite{Starck2005}, \cite{Elad2005}, \cite{ Bobin2006}, \cite{WriMa:10}.
  
\subsubsection{The MCA signal model}
\label{sec:regi-succ-constr}
We model the observation \(\vct z_0 \in \R^d\) as the 
superposition  of two structured signals:
\begin{equation*}
  \vct z_0 = \mtx A \vct x_0 + \mtx B \vct y_0 \in
  \R^d.
\end{equation*}
The matrices \(\mtx A\) and \(\mtx B\) are known, while
the vectors \(\vct x_0\) and \(\vct y_0\) are unknown.  Each column of
\(\mtx A\) contains an elementary structure that might appear in the
first signal; the columns of \(\mtx B\) reflect the structures in the
second signal.  The vector \(\vct x_0\) selects the columns of
\(\mtx A\) that appear in the first signal, e.g., stars in different
locations, while \(\vct y_0\) selects the columns of \(\vct B\) that generate the second signal, e.g., galaxies in different locations. \emph{Incoherence}
demands that the columns of \(\mtx A\) and \(\mtx B\) are weakly
correlated, and \emph{sparsity} requires that \(\vct x_0\) and \(\vct y_0\)
have few nonzero elements.

For simplicity, we assume that \(\mtx A \) and \(\mtx B\) are
orthonormal bases. By changing coordinates, we may take \(\mtx A =
\Id\), the identity matrix.  The observation then has the form
\begin{equation*}\label{eq:mca-signal}
  \vct z_0 = \vct x_0 + \mtx Q\vct y_0
\end{equation*}
for a known orthogonal matrix \(\mtx Q\).  The specialization to
orthonormal bases is
standard~\cite{Donoho2001}, \cite{Elad2005}, \cite{Starck2005}, \cite{HegBar:12}.

Instead of restricting our attention to specific choices of \(\mtx Q\)
that are incoherent with the identity matrix, we consider an idealized
model for incoherence where \(\mtx Q\) is a uniformly random
orthogonal matrix.  This formulation ensures that the structures in
the two signals are oriented generically with respect to each other.
Other authors have also used this approach to study
incoherence~\cite{Donoho2001}, \cite{Elad2005}.

We quantify the sparsity of the two constituent signals by fixing
parameters \(\tau_{\vct x}\) and \(\tau_{\vct y}\) in the interval
\([0,1]\) such that the unknown signals \(\vct x_0\) and \(\vct y_0\)
satisfy
\begin{equation*}
  \mathrm{nnz}(\vct x_0) =\lceil \tau_{\vct x} d\rceil\; \quad
  \text{and} \quad \;
  \mathrm{nnz}(\vct y_0) =\lceil \tau_{\vct y} d\rceil,
\end{equation*}
where \(\mathrm{nnz}(\vct x)\) denotes the \textit{n}umber of \textit{n}on\textit{z}ero elements
of \(\vct x\).\footnote{We prefer the notation \(\mathrm{nnz}(\cdot)\) over   \(\|\cdot\|_{\ell_0}\) because the number of nonzero elements in a vector is not a norm.   }   In other words, \(\tau_{\vct x}\) and \(\tau_{\vct
  y}\) measure the proportion of nonzero entries in \(\vct x_0\) and
\(\vct y_0\).  These sparsity parameters emerge as the major factor
that determines how hard it is to extract  \(\vct x_0\) and \(\vct
y_0\) from the observation \(\vct z_0\). 

\subsubsection{The constrained MCA demixing procedure}
\label{sec:deconv-proc-1}

The goal of morphological component analysis is to identify the pair
\((\vct x_0,\vct y_0)\) of sparse vectors given the observation \(\vct
z_0 \) and the matrix \(\vct Q\).  A natural technique for finding a
sparse vector that satisfies certain conditions is to minimize the
\(\ell_1\) norm subject to these constraints~\cite{Chen1999}, where
the \(\ell_1\) norm is defined as \(\lone{\vct x} \defeq\sum_{i=1}^d
|x_i|\).

Assume that we have access to side information \(\alpha
=\smlone{\vct y_0}\).  Then the intuition above leads us
to frame the following convex optimization problem for demixing:
\begin{equation}\label{eq:l1-const-1}
  \minprog{}{\lone{\vct x}}{\smlone{ \vct y} \le \alpha \;\; \text{and}\;\;
    \vct x + \mtx Q\vct y = \vct z_0, }
\end{equation}
where the decision variables are \(\vct x,\vct y\in \R^d\).  We call
this optimization problem \emph{constrained MCA}, and we say that it
\emph{succeeds} if \((\vct x_0,\vct y_0)\) is the unique optimal point
of~\eqref{eq:l1-const-1}.  Since~\eqref{eq:l1-const-1} can be written
as a linear program, constrained MCA offers a tractable procedure for
attempting to identify the underlying components \((\vct x_0,\vct y_0)\), provided the
observation \(\vct z_0\), the orthogonal matrix \(\mtx Q\), and the side
information \(\alpha = \smlone{\vct y_0}\).

Constrained MCA is closely related to the standard MCA procedure,
which is a Lagrangian formulation of~\eqref{eq:l1-const-1} that does
not require the side information
\(\alpha\)~\cite[{Eq.~(4)}]{Starck2005}.  The constrained
problem~\eqref{eq:l1-const-1} is more powerful than the standard MCA
procedure, so it provides hard limits on the effectiveness of the
usual approach.  In most cases, the two methods are equivalent,
provided that we can choose the Lagrange multiplier correctly---a
nontrivial task in itself.  See Section~\ref{sec:did-we-assume} for
more details.

\begin{figure}[t!]
  \centering
  \includegraphics[width=\figwidth\columnwidth]{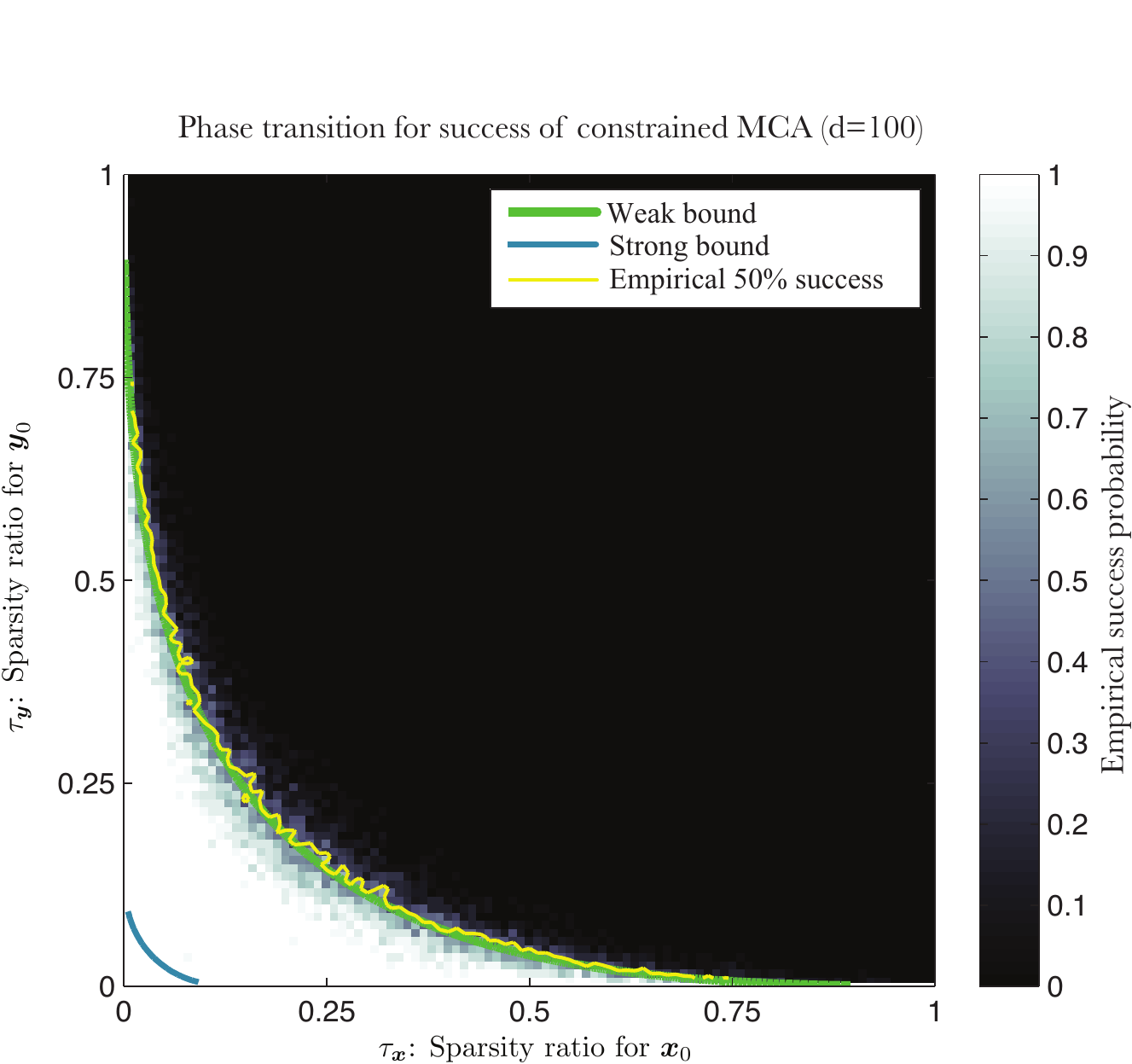}
  \caption{\textsl{Performance of constrained MCA.}  The variables
    \(\tau_{\vct x}\) and \(\tau_{\vct y}\) on the axes represent the
    fraction of components in \(\vct x_0\) and \(\vct y_0\) that are
    nonzero.  The background shading indicates the empirical
    probability that the constrained MCA problem~\eqref{eq:l1-const-1}
    identifies the pair \((\vct x_0,\vct y_0)\) from the observation
    \(\vct z_0 = \vct x_0 + \mtx Q \vct y_0\), where \(\mtx Q\) is a
    random orthogonal basis. The yellow curve marks the empirical \(50\%\)
    success threshold.  The green curve locates the theoretical phase
    transition for demixing a single pair \((\vct x_0,\vct y_0)\).
    For sparsity levels below the blue curve, constrained
    MCA~\eqref{eq:l1-const-1} provably recovers all \((\tau_{\vct
      x},\tau_{\vct y})\)-sparse pairs with high probability in the
    dimension.  Further details are available in
    Section~\ref{sec:regi-succ-fail}
    and Section~\ref{sec:deconv-sparse-vect}.  }
  \label{fig:l1-l1-demixing}
\end{figure}

\subsubsection{Numerical and theoretical results for constrained MCA}
\label{sec:regi-succ-fail}
Figure~\ref{fig:l1-l1-demixing} displays the result of a
numerical experiment on constrained MCA.  We fix the dimension
\(d=100\).  For sparsity levels \((\tau_{\vct x},\tau_{\vct y})\)
varying over the unit square \([0,1]^2\), we form vectors \(\vct x_0\)
and \(\vct y_0\) with sparsity levels \(\nnz(\vct x_0) = \lceil \tau_{\vct x}
d\rceil\) and \(\nnz(\vct y_0 ) = \lceil \tau_{\vct y} d\rceil\).
(The manner in which we choose the nonzero entries is irrelevant.)  We
draw a random orthogonal matrix \(\mtx Q\) and construct the observation \(\vct
z_0 = \vct x_0 + \mtx Q \vct y_0\).  Then we solve the constrained MCA
problem~\eqref{eq:l1-const-1} to identify the pair \((\vct x_0,\vct
y_0)\).  The background of the figure shows the empirical probability
of success over the randomness in \(\mtx Q\); dark areas denote low
probability of success, while light areas denote high success rates.
The yellow curve marks the \(50\%\) success threshold.

This work establishes two theoretical results for constrained MCA.
The first result provides a phase transition curve, parameterized by
the sparsity \((\tau_{\vct x},\tau_{\vct y})\), for the probability
that constrained MCA will demix a single pair \((\vct x_0,\vct
y_0)\) from the associated observation \(\vct z_0\).  This \emph{weak
  bound} is marked by the green line in
Figure~\ref{fig:l1-l1-demixing}.  Observe that the green line
coincides almost perfectly with the empirical phase transition.

Second, we establish a \emph{strong bound}. For a fixed instantiation of
the random orthogonal basis \(\mtx Q\), with high probability,
constrained MCA~\eqref{eq:l1-const-1} can identify \emph{every}
sufficiently sparse pair \((\vct x_0,\vct y_0)\) from the associated
observation \(\vct z_0\).  The blue curve in the bottom left corner of
Figure~\ref{fig:l1-l1-demixing} is a lower estimate for the
sparsity pairs \((\tau_{\vct x},\tau_{\vct y})\)  where this uniform
guarantee holds. Section~\ref{sec:deconv-sparse-vect} provides the
details regarding the computation of the weak and strong bounds as
well as a fully detailed description of our numerical experiment.

\subsection{A recipe for demixing}
\label{sec:recipe-demixing}

This work is not primarily about MCA.  We are interested in developing
methods that apply to a whole spectrum of demixing problems.  The
following two sections describe how to construct a convex program that
can separate two structured signals. 

\subsubsection{Structured signals and atomic gauges}
\label{sec:struct-atom-guag}
\begin{figure}[t!]
  \centering
  { 
\includegraphics[width=0.4\columnwidth]{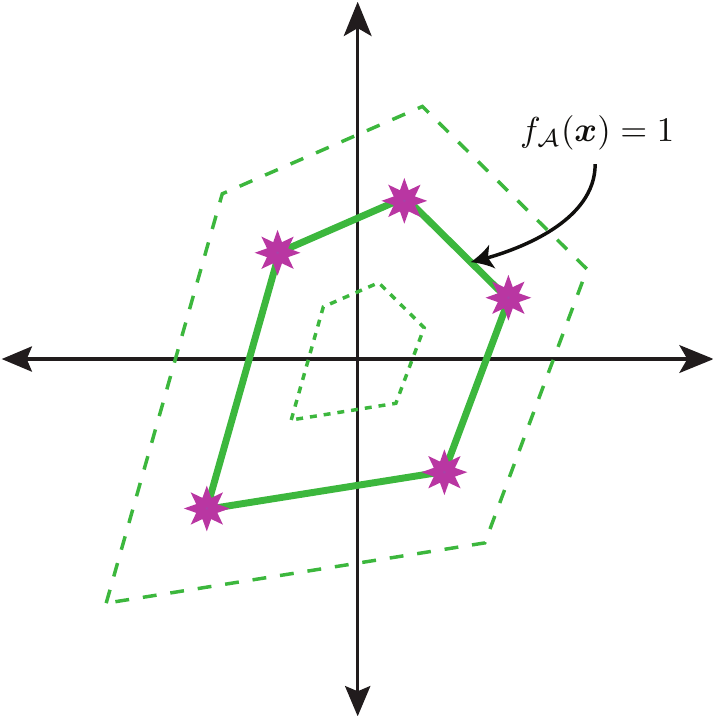}\label{subf:left-panel-atom-gauge}}
\hspace{24pt}
%
{\includegraphics[width=0.4\columnwidth]{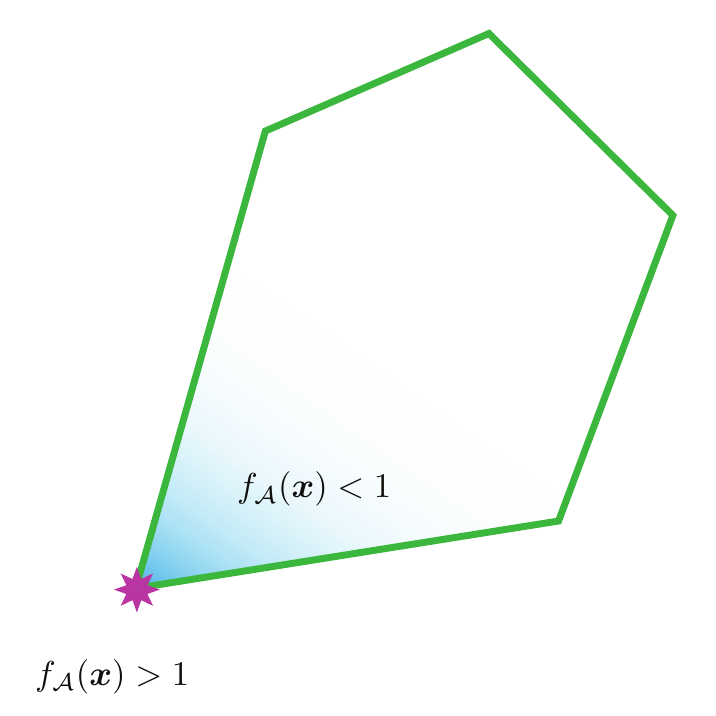}}
\caption{\textsl{Atomic gauge.} [Left] Let \(\mathcal{A}\) be an atomic
  set consisting of five atoms (stars).  The ``unit ball'' of the
  atomic gauge \(f_{\mathcal{A}}\) is the closed convex hull of
  \(\mathcal{A}\) (heavy line).  Other level sets (dashed lines) of
  \(f_{\mathcal{A}}\) are dilations of the unit ball. [Right] At an atom (star),
  the unit ball of \(f_\mathcal{A}\) tends to have sharp corners.
  Most perturbations away from this atom increase the value of
  \(f_{\mathcal{A}}\), so the atomic gauge is an effective measure of
  the complexity of an atomic signal.}
  \label{fig:atomic-gauge}
\end{figure}
The \(\ell_1\) norm is a convex complexity measure that tends to be
small near sparse vectors, so we can minimize the \(\ell_1\) norm 
to promote sparsity.  We now describe a method for building complexity
measures that are appropriate for other types of structure.  This
construction was originally introduced in the nonlinear approximation
literature~\cite{MR1399379}, \cite{MR1951502}.  The recent
paper~\cite{ChaRecPar:12} explains how to apply these ideas to
solve signal processing problems.

In practice, we often encounter signals that are formed as a positive
linear combination of a few elementary structures, called
\emph{atoms}, drawn from a fixed collection.  For example, a sparse
vector in \(\R^d\) is a conic combination of a small number elements
from the set \(\{\pm \vct e_i \mid i=1,\dotsc,d\}\) of signed standard basis
vectors.  We want to construct a function that reflects the complexity
of an atomic signal.

We define the \emph{atomic gauge} of a vector \(\vct x\in \R^d\) with respect
to a set \(\mathcal{A}\subset \R^d\) of atoms by
\begin{equation*}
  f_{\mathcal{A}}(\vct x) \defeq \inf\;\bigl
  \{\lambda\ge 0 \mid  
  \vct x \in \lambda\cdot \mathrm{conv}(\mathcal{A})\bigr\},
\end{equation*}
with the convention that \(f_{\mathcal{A}}(\vct x) = +\infty\) if the
set is empty.  Then \(f_{\mathcal{A}}\) is a homogeneous convex
function.  The ``unit ball'' of \(f_\mathcal{A}\) is
\(\overline{\mathrm{conv}}(\mathcal{A})\), and the level sets of
\(f_{\mathcal{A}}\) are dilations of this unit ball.  The atomic gauge
\(f_{\mathcal{A}}\) is a norm if and only if
\(\overline{\mathrm{conv}}(\mathcal{A})\) is a bounded, symmetric set
that contains zero in its interior.

The convex hull of an atomic set \({\mathcal{A}}\) tends to have sharp
corners at atoms; see Figure~\ref{fig:atomic-gauge}. At these sharp
points, most perturbations of the objective increase the value of the
gauge, so the atomic gauge tends to take small values at
atoms. Similar behavior occurs at a signal comprised of a relatively
small number of atoms. This
observation is a key reason that atomic gauges make good complexity
measures for atomic signals~\cite{ChaRecPar:12}.  Some common atomic gauges include
\begin{itemize}\renewcommand{\itemsep}{0pt}
\item \emph{The \(\ell_1\) norm.}  The \(\ell_1\) norm on \(\R^d\) is
  the atomic gauge generated by the set \(\mathcal{A}=\{\pm \vct
  e_i\mid i=1,\dotsc,d\}\) of signed standard basis vectors.  This
  norm is widely used to promote
  sparsity~\cite{Chen1999}, \cite{MR2236170}, \cite{MR2241189}.  The \(\ell_1\) norm
  may also be defined for matrices via the
  formula \(\lone{\mtx X}=\sum_{i,j} | X_{ij}|\).  In this context,
  the \(\ell_1\) norm reflects the sparsity of
  matrices~\cite{Sanghavi2009}, \cite{Chandrasekaran2010a}, \cite{Chandrasekaran2011}.  
\item \emph{The \(\ell_\infty\) norm.}  The \(\ell_\infty\) norm on
  \(\R^d\), given by \(\linf{\vct x}=\max_{i=1,\dotsc,d}|x_i|\), is
  the atomic gauge generated by the set \(\mathcal{A} = \{\pm
  1\}^d\subset \R^d\) of all \(2^d\) sign vectors. We use this norm to
  demix binary codewords.  See
  also~\cite{Donoho2010a}, \cite{ChaRecPar:12}, \cite{Mangasarian2011}.  For
  matrices, the \(\ell_\infty\) norm returns
  \(\linf{\vct X} = \max_{i,j} |X_{ij}|\); this function is the atomic
  gauge generated by the set of sign matrices.
\item \emph{The Schatten 1-norm.}  The Schatten 1-norm on
  \(\R^{m\times n}\) is the sum of the singular values of a matrix.
  It is the atomic gauge generated by the set of rank-one matrices in
  \(\R^{m\times n}\) with unit Frobenius norm.  Minimizing the
  Schatten 1-norm promotes low rank~\cite{Fazel2002}, \cite{MR2680543}.
\item \emph{The operator norm.}  The operator norm returns the maximum
  singular value of a matrix.  On the space
  \(\R^{n\times n}\) of square matrices, the operator norm is the
  atomic gauge generated by the set \(\mathsf{O}_n\) of orthogonal
  matrices.  This norm can be used to search for orthogonal
  matrices~\cite[Prop.~3.13]{ChaRecPar:12}
\end{itemize}
Our applications focus on these four instances, but a dizzying variety
other structure-promoting atomic gauges are available. For example, there are atomic gauges for vectors that are sparse in a dictionary (also known as analysis-sparsity)~\cite{ElaMilRub:07}, \cite{CanEldNee:11}, block- and group-sparse vectors~\cite{CotRaoEng:05}, \cite{Sto:09a},  and low-rank tensors and probability measures~\cite[Sec.~2]{ChaRecPar:12}.

\subsubsection{A generic model for incoherence}
\label{sec:random-basis-model}

Demixing is hopeless when the structures in the constituent
signals are too strongly aligned.  As an extreme example, suppose we
observe \(\vct z_0 = \vct x_0+ \vct y_0\), where both \(\vct x_0\) and
\(\vct y_0\) are sparse.  There is clearly no principled way to assign
the nonzero elements of \(\vct z_0\) correctly to \(\vct x_0\)
and \(\vct y_0\).  In contrast, if we observe \(\vct z_0 = \vct x_0 +
\mathbf{H} \vct y_0\), where \(\mathbf{H}\) is a normalized
Walsh--Hadamard transform and both \(\vct x_0\) and \(\vct y_0\) are
sparse, then the pair \((\vct x_0,\vct y_0)\) is typically
identifiable~\cite{Tropp2008c}.  The latter situation is more
favorable than the former because the Walsh--Hadamard matrix and the
identity matrix are incoherent; that is, their columns are weakly
correlated.

In order to avoid restricting our attention to special cases, such as
the Walsh--Hadamard matrix, we model incoherence by
assuming that the basis \(\mtx Q\) is drawn
randomly from the invariant Haar measure on the set of all orthogonal matrices \(\mathsf{O}_d\).  We call
this the \emph{random basis model.} This idealized approach to
incoherence guarantees that the structures in the two constituent
signals are generically oriented.    This model has precedents in the literature on sparse approximation~\cite{Donoho2001}, \cite{Elad2005}, and it is analogous to the assumption of a measurement operator with a uniformly random nullspace that appears in the context of compressed sensing~\cite{Donoho2006}, \cite{Sto:09}. 

We expect that the random basis model also sheds light on other highly incoherent problems, such as the case where \(\mtx Q = \mathbf{H}\) is the Walsh--Hadamard transform or \(\mtx Q = \mathbf{D}\) is the discrete cosine transform (DCT).   Some limited numerical simulations suggest that both the Walsh--Hadamard and the DCT matrices behave qualitatively similar to the random matrix \(\mtx Q\) in the examples considered in this work. This observation is in line with the universality of phase transitions that appear in \(\ell_1\) minimization for many classes of measurement matrices~\cite{Donoho2009a}.  However, more coherent situations may exhibit different behavior, and thus they fall outside the purview of this work.

\subsubsection{Formulating a \CDMq}
\label{sec:deconv-proc}
We are ready to introduce a computational framework for demixing structured
signals.  This approach unifies several related procedures that appear
in the literature.  See, for
example,~\cite{NIPS2010_1164}, \cite{Chandrasekaran2011}, \cite{WriGanMin:13}.

Assume we observe the superposition of two structured signals:
\begin{equation*}
  \vct z_0 = \vct x_0 + \mtx Q\vct y_0,
\end{equation*}
where \(\mtx Q\) is a known orthogonal matrix and the pair \((\vct
x_0,\vct y_0)\) is unknown.  We include the matrix \(\mtx Q\) in the
formalism because it allows us to model incoherence using the random basis model described in  Section~\ref{sec:random-basis-model} above.  Our goal is to demix the
pair \((\vct x_0,\vct y_0)\) from the observation \(\vct z_0\). 

Let \(f\) and \(g\) be convex complexity measures---such as atomic
gauges---associated with the structures we expect to find in \(\vct
x_0\) and \(\vct y_0\).  Suppose we have access to the additional side
information \(\alpha = g(\vct y_0)\).  We combine these
ingredients to reach the following \CDMq:
\begin{equation}
  \label{eq:genconv}
  \minprog{}{f(\vct x)}{ g(\vct y) \le \alpha \;\;\text{and}\;\;  \vct x + \mtx
  Q \vct y = \vct z_0,}
\end{equation}
where the decision variables are \(\vct x,\vct y \in \R^d\).  The
display~\eqref{eq:genconv} describes a convex program because \(f\)
and \(g\) are convex functions.  We say that the
\CDMq~\eqref{eq:genconv} \emph{succeeds at demixing \((\vct
  x_0,\vct y_0)\)}, or simply \emph{succeeds}, if \((\vct x_0,\vct
y_0)\) is the unique optimal point of~\eqref{eq:genconv}; otherwise,
it \emph{fails}.  In this work, we develop conditions that describe when the
\CDMq~\eqref{eq:genconv} succeeds and when it fails.

\subsubsection{The Lagrangian counterpart}
\label{sec:did-we-assume}
In practice, the value \(\alpha = g(\vct y_0)\) may not be known.  In this case, we may replace the \CDMq~\eqref{eq:genconv} with its
Lagrangian relative
\begin{equation}
  \label{eq:gen-lagrange}
  \minprog{}{f(\vct x) + \lambda\cdot g(\vct y)}{\vct x + \mtx Q\vct
    y = \vct z_0,}
\end{equation}
where \(\lambda >0\) is a regularization parameter that must be
specified.  The constrained problem~\eqref{eq:genconv} is slightly
more powerful than~\eqref{eq:gen-lagrange}, so its performance
dominates the Lagrangian formulation~\eqref{eq:gen-lagrange}.  It is
well known that~\eqref{eq:genconv} and~\eqref{eq:gen-lagrange} are
essentially equivalent when the regularization parameter \(\lambda\) is chosen
correctly and a mild regularity condition holds; see
Appendix~\ref{sec:lagrange-parameter} for details.  Thus, we can interpret our results as delineating the \emph{best possible performance} of the Lagrange problem~\eqref{eq:gen-lagrange}.

This type of best-case analysis has precedents in the literature on sparse approximation, e.g.,~\cite{Wai:09}, yet the identification of optimal Lagrange parameters for demixing  remains a significant open problem. Several works prove that specific demixing procedures succeed for specific choices of Lagrange parameters under incoherence assumptions~\cite{Donoho2001}, \cite{WriMa:10}, \cite{Chandrasekaran2011}, \cite{Candes2009}, \cite{XuCarSan:12}, but these  conservative guarantees fail to identify phase transitions.   Since this document was submitted, some additional theoretical guidance choosing Lagrange parameters has appeared~\cite{TanBhaRec:13}, \cite{Sto:13}, \cite{FoyMac:13}, \cite{OymHas:13}.  Nevertheless,  a comprehensive theory describing the optimal choices of Lagrange parameters for~\eqref{eq:gen-lagrange} does not currently exist.

\subsection{One hammer, many nails}
\label{sec:other-applications}

The \CDMq~\eqref{eq:genconv} includes many interesting special
cases. Our analysis provides detailed information about
when~\eqref{eq:genconv} is able to separate two structured, incoherent
signals.   We now describe some applications of this machinery.

\subsubsection{A secure communications protocol that is robust to
  sparse errors}
\label{sec:chann-coding-scheme}

\begin{figure}[t!]
  \centering
  \includegraphics[width=0.55\columnwidth]{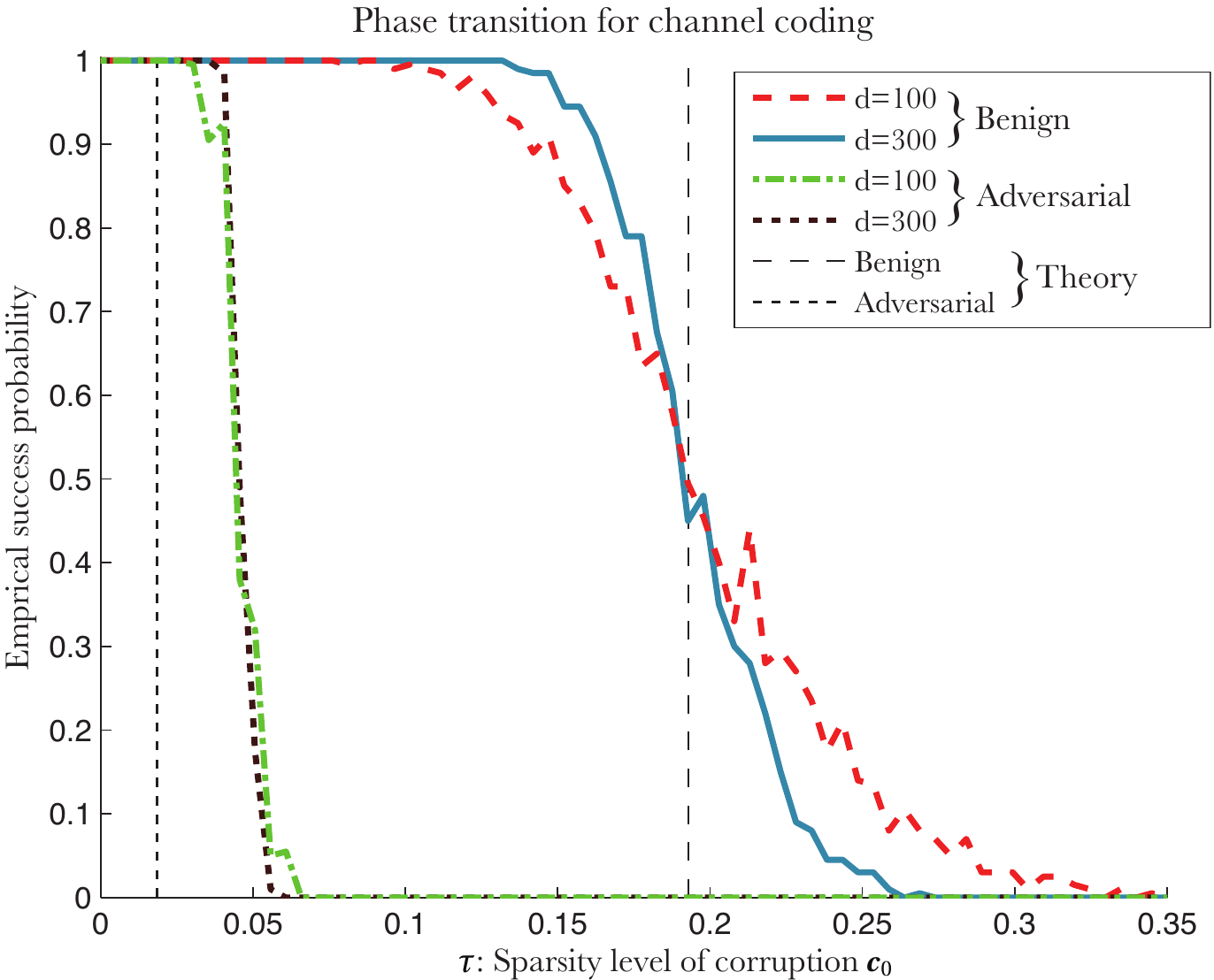}
  \caption{\textsl{Performance of the robust communication protocol.}
    The heavy curves indicate the empirical probability that
    problem~\eqref{eq:channel-code-1} decodes a \(d\)-bit message
    \(\vct m_0\in \{\pm 1\}^d\) from the observation \(\vct z_0 = \mtx
    Q\vct m_0 + \vct c_0\), where \(\mtx Q\) is a random orthogonal matrix.  The
    variable \(\tau\) on the horizontal axis measures the proportion
    of nonzero entries in the corruption \(\vct c_0\).  A benign
    corruption is independent of \(\mtx Q\), while an adversarial
    erasure zeros out the \([\tau d]\) largest components of the
    transmitted message.  The vertical line at \(\tau\approx 0.19\)
    marks the theoretical phase transition for successful decoding
    under a benign corruption.  The vertical line at \(\tau \approx
    0.018\) provides a uniform guarantee applicable to adversarial
    corruptions.  For \(\tau\lesssim 0.018\), with high probability,
    our protocol will decode a message subject to \emph{any}
    \(\tau\)-sparse corruption whatsoever.  See
    Section~\ref{sec:chann-coding-scheme}
    and Section~\ref{sec:channel-coding-1} for more
    details.}\label{fig:channel-coding}
\end{figure}

Suppose we wish to securely transmit a binary message across a
communications channel.  We can obtain strong guarantees of security
by modulating the message with a random rotation before
transmission~\cite{Wyner1979}, \cite{Wyner1979b}.  Our theory shows that
decoding the message via demixing also makes this secure scheme
perfectly robust to sparse corruptions such as erasures or malicious
interference.

Consider the following simple communications protocol. We model the
binary message as a sign vector \(\vct m_0 \in \{\pm 1\}^d\).  Choose a
random orthogonal matrix \(\mtx Q\in \mathsf{O}_d\).  The transmitter sends the
scrambled message {\(\vct s_0 = \mtx Q \vct m_0\)} across the channel,
where it is corrupted by an unknown sparse vector \(\vct c_0\in \R^d\).  The
receiver must determine the original message given only the corrupted
signal
\begin{equation*}
  \vct z_0 = \vct s_0 + \vct c_0 = \mtx Q \vct m_0 + \vct c_0
\end{equation*}
and knowledge of the scrambling matrix \(\mtx Q\).  

This signal model is perfectly suited to the demixing recipe of
Section~\ref{sec:recipe-demixing}.  The discussion in
Section~\ref{sec:struct-atom-guag} indicates that the \(\ell_1\) and
\(\ell_\infty\) norms are natural complexity measures for the structured
signals \(\vct c_0\) and \(\vct m_0\).  Since the message \(\vct m_0\)
is a sign vector, we also have the side information \(\linf{\vct
  m_0} = 1\).  Our receiver then recovers the message with the
\CDMq\ 
\begin{equation}\label{eq:channel-code-1}
  \minprog{}{\lone{{\vct c}}}{ \linf{\vct{m}} \le 1 \; \text{ and } \;
    \vct c + \mtx Q \vct m= \vct z_0,}
\end{equation}
where the decision variables are \(\vct c,\vct m \in \R^d\).  This
method succeeds if \((\vct c_0,\vct m_0)\) is the unique optimal
point of~\eqref{eq:channel-code-1}.

In Section~\ref{sec:channel-coding-1}, we apply the general theory
developed in this work to study this communications protocol.  Before
summarizing the results of this analysis, we fix some notation.
Suppose the corruption \(\vct c_0\in \R^d\) is \(\tau\)-sparse; that
is, \(\mathrm{nnz}(\vct c_0) = \lceil \tau d\rceil\) for some \(\tau
\in [0,1]\). We further distinguish between two types of corruption.
A \emph{benign} corruption \(\vct c_0\) is independent of the
scrambling matrix \(\mtx Q\).  In contrast, an \emph{adversarial}
corruption may depend on both \(\mtx Q\) and \(\mtx m_0\).
Adversarial corruptions also include nonlinear effects that are not
necessarily malicious. For example, we can model an erasure at the
\(i\)th time instant by taking \((\vct c_0)_i = - (\mtx Q \vct
m_0)_i\).

Figure~\ref{fig:channel-coding} presents the results of a
numerical experiment on this communications protocol; the complete
experimental procedure is detailed in
Section~\ref{sec:numerical-experiment-channel-coding}.  Briefly, we
consider messages of length \(d=100\) and \(d=300\), and we let the
sparsity \(\tau\) range over the interval \([0,0.35]\).  We test the benign case by
adding a \(\tau\)-sparse corruption that is independent from \(\mtx
Q\).  We also consider a particular adversarial corruption in which we
set the \([\tau d]\) {largest-magnitude} entries in the
transmitted message \(\vct s_0\) to zero.   The curves indicate the
empirical probability that the protocol succeeds as a function of \(\tau\).

In the benign case, our theory shows that there exists a phase
transition in the success probability of the
\CDMq~\eqref{eq:channel-code-1} at sparsity level \(\tau \approx
0.19\).  The empirical \(50\%\) failure threshold for benign
corruptions closely matches this prediction.  In the adversarial case,
our results guarantee that with high probability, our protocol will
tolerate all corruptions that affect no more than \(1.8\%\) of the
components in the received message \(\vct z_0\).  This bound is
conservative for the type of adversarial corruption in the numerical
experiment; this is not surprising because we may not have constructed
the worst possible corruption.

\subsubsection{Low-rank matrix recovery with generic sparse
  corruptions}
\label{sec:low-rank-matrix}

\begin{figure}[t]
  \centering
  \includegraphics[width=\figwidth\columnwidth]{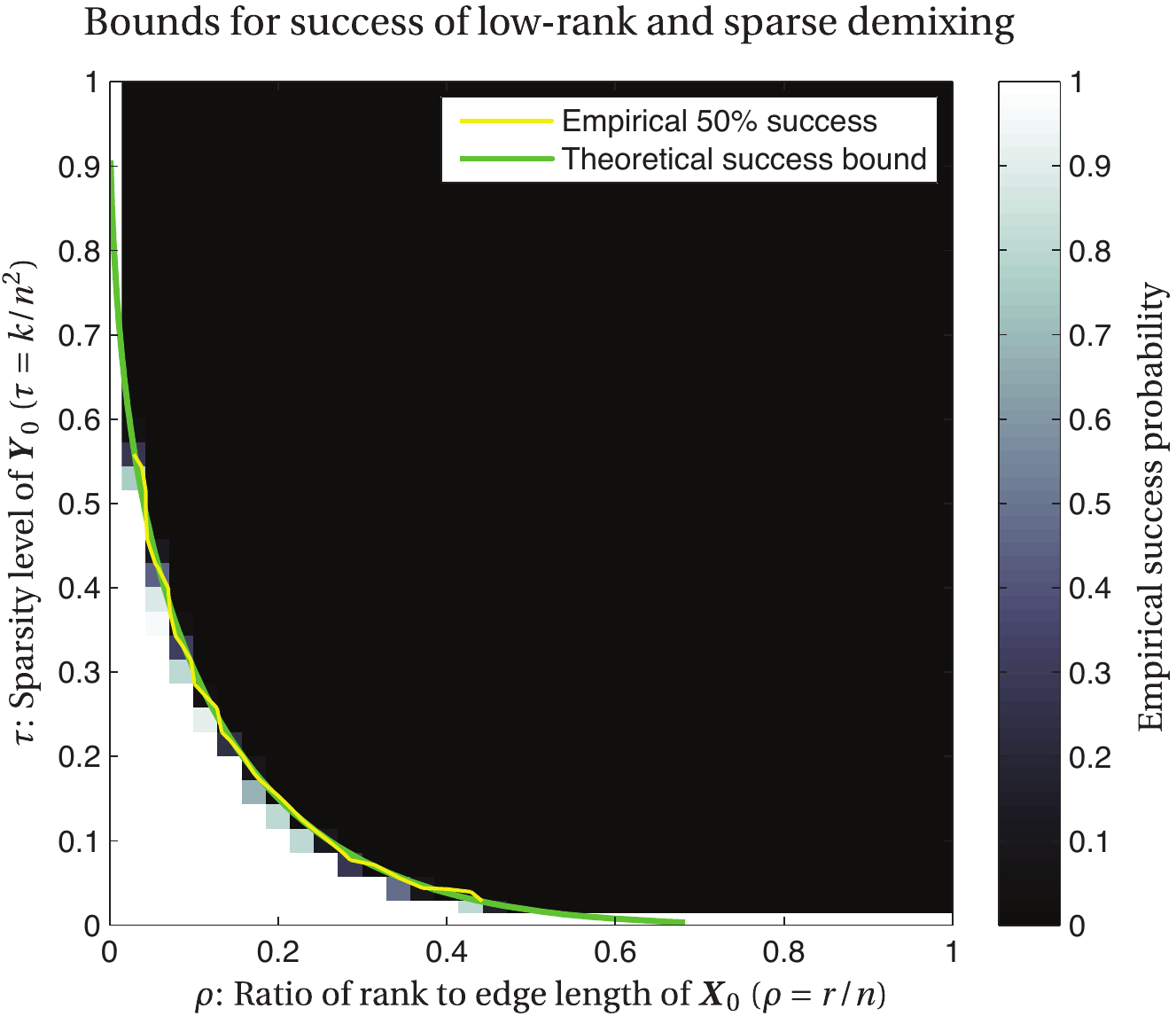}
  \caption{\textsl{Low-rank matrix recovery with sparse corruptions.}
    The horizontal axis is the normalized rank \(\rho = \rank(\vct
    X_0)/n\), and the vertical axis is the sparsity level~\(\tau =
    \mathrm{nnz}(\mtx Y_0)/n^2\).  The intensity of the background
    denotes the empirical probability that~\eqref{eq:lowrk-sparse}
    recovers \((\mtx X_0,\mtx Y_0)\) from the observation \(\mtx Z_0 =
    \mtx X_0 + \mathcal{Q}(\mtx Y_0)\).  In the region below the green
    curve, the convex demixing method~\eqref{eq:lowrk-sparse}
    recovers a low-rank matrix \(\mtx X_0\) from a randomly rotated
    sparse corruption \(\mathcal{Q}(\mtx Y_0)\) with overwhelming probability in high dimensions.  See
    Section~\ref{sec:low-rank-matrix} and
    Section~\ref{sec:low-rank-matrices} for more details.}
  \label{fig:l1-nuc-phase}
\end{figure}

Consider now the \emph{matrix} observation \(\mtx Z_0 = \mtx X_0 +
\mathcal{Q}(\mtx Y_0) \in \R^{n\times n}\), where \(\mtx X_0\) has low
rank, \(\mtx Y_0\) is sparse, and \(\mathcal{Q}\) is a random rotation
on \(\R^{n\times n}\).  This type of signal provides a highly stylized model
for applications such as latent variable
selection~\cite{Chandrasekaran2010a}, \cite{Chandrasekaran2011} and robust
principal component analysis~\cite{Candes2009}.  In these settings,
\(\mtx X_0\) has low rank because the underlying data is drawn from a
low-dimensional linear model, while \(\mathcal{Q}(\mtx Y_0)\)
represents a corruption. (Note, however, that this stylized model is not equivalent to pre- and post-multiplying \(\vct Y_0\) by independent random rotations of \(\R^n\).)   We aim to discover the matrix \(\mtx X_0\) given the corrupted observation \(\mtx Z_0\) and the rotation \(\mathcal{Q}\).  

We follow the now-familiar pattern of
Section~\ref{sec:recipe-demixing}.  The Schatten 1-norm
\(\sone{\cdot}\) serves as a natural complexity measure for the
low-rank structure of \(\mtx X_0\), and the matrix \(\ell_1\) norm
\(\lone{\cdot}\) is appropriate for the sparse structure of \(\mtx
Y_0\).  We further assume the side information \(\alpha =\lone{\mtx
  Y_0}\). We then solve
\begin{equation}
  \label{eq:lowrk-sparse}
  \minprog{}{\sone{\mtx X}}{\lone{\mtx Y} \le \alpha \;\;\text{and}\;\; \mtx X +
    \mathcal{Q}(\mtx Y) = \mtx Z_0.}
\end{equation}
This \CDMq\ succeeds if \((\mtx X_0,\mtx Y_0)\)
is the unique solution to~\eqref{eq:lowrk-sparse}.

Figure~\ref{fig:l1-nuc-phase} displays the results of a numerical
experiment on this approach to rank--sparsity demixing.  We take
the matrix side length \(n = 35\) and draw a random rotation
\(\mathcal{Q}\) for \(\R^{n\times n}\). For parameters \(0\le
\rho,\tau\le 1\), we generate matrices \(\vct
X_0\) and \(\vct Y_0\) such that \(\rank(\vct X_0) = [\rho n] \) and
\(\nnz(\vct Y_0) = [\tau n^2]\).  The background shading indicates the
empirical probability that~\eqref{eq:lowrk-sparse} succeeds given the
observation \(\mtx Z_0 = \mtx X_0 + \mathcal{Q}(\mtx Y_0)\).  We mark
the empirical \(50\%\) success probability with a yellow curve.  See
Section~\ref{sec:numerical-experiment-1} for the experimental details.

The results of this work show that, with high probability,
program~\eqref{eq:lowrk-sparse} succeeds so long as the pair
\((\rho,\tau)\) lies below the green curve on
Figure~\ref{fig:l1-nuc-phase}.  When the rank parameter \(\rho \) is
small, our theoretical bound closely tracks the phase transition
visible in the numerical experiment, although the bound appears loose
when \(\rho\) is larger.  Section~\ref{sec:low-rank-matrices} provides further details.

\subsubsection{Matrix demixing mix-and-match}
\label{sec:deconv-mix-match}

Our results are not restricted to the \CDMq
s~\eqref{eq:l1-const-1},~\eqref{eq:channel-code-1}
or~\eqref{eq:lowrk-sparse}.  Let us mention a few other situations we
can analyze using the theory developed in this work. With a tractable
convex program, it is possible to demix
\begin{itemize}\renewcommand{\itemsep}{0pt}
\item An orthogonal matrix from a matrix that is sparse in a random orthogonal
  basis,
\item A randomly oriented sign matrix from a sufficiently low-rank
  matrix, and
\item A randomly rotated low-rank matrix from an orthogonal matrix.
\end{itemize}
See Section~\ref{sec:hodgepodge} for the details.

\subsection{Theoretical insights}
\label{sec:lin-inv-intro}

Our approach reveals a number of theoretical insights.
\begin{description}\renewcommand{\itemsep}{0pt}
\item[Design of \CDMq s for incoherent
  structures.] In the incoherent regime we analyze, the parameters
  that determine when the \CDMq~\eqref{eq:genconv} succeeds reflect
  the structures in the constituent signals and the associated 
  complexity measures.  These summary parameters are independent of the
  relationship between the two incoherent structures.  We discuss this
  fact and its consequences for the design of demixing procedures
  in Section~\ref{sec:cons-design-deconv}.
\item[Connection with linear inverse problems.]  The parameters that
  determine success of the \CDMq~\eqref{eq:genconv} are closely
  related to number of random linear measurements required to identify
  a structured signal. In Section~\ref{sec:relat-line-inverse}, we
  leverage this relationship to compute these parameters from Gaussian
  width bounds developed in~\cite{ChaRecPar:12}.
\item[Phase transitions.] Our theory indicates that there is often a
  phase transition in the behavior of the demixing
  method~\eqref{eq:genconv}.  See  Section~\ref{sec:thresh-phen-conj} for a discussion of this point.
\end{description}

\subsection{Outline}
\label{sec:outline}

This work begins with demixing in the deterministic setting.
Section~\ref{sec:generic-setup-main} describes the geometry
of the \CDMq~\eqref{eq:genconv} and provides a geometric
characterization of successful demixing.

Section~\ref{sec:analys-via-integr} presents a random model for
incoherence along with some techniques from spherical integral
geometry that allow us to analyze this model.  In
Section~\ref{sec:find-weak-thresh}, these ideas yield theory that
predicts success and failure regimes for the \CDMq~\eqref{eq:genconv}.
Section~\ref{sec:line-inverse-probl-1} develops methods for computing
the parameters necessary to apply the theorems of
Section~\ref{sec:find-weak-thresh}.

In Section~\ref{sec:applications}, we analyze the application problems
described in Sections~\ref{sec:first-appl-deconv}
and~\ref{sec:other-applications}.  Section~\ref{sec:past-and-future}
concludes with a discussion of this work's place in the literature and
future directions.

\subsection{Notation and conventions}
\label{sec:notation}
All variables are real valued. We write \(\lfloor t\rfloor\), \(\lceil t\rceil\), and \([t]\) for the floor, ceiling, and rounded integer values of \(t\).  The signum function is \(\sgn(t):= t/\abs{t}\) for \(t\ne 0\) and \(\sgn(0):=0\).  Bold lowercase letters represent
vectors, and bold capital letters are matrices.  The \(i\)th element
of a vector is written \(x_i\) or \((\vct x)_i\), while the
\((i,j)\)th element of a matrix is \(X_{ij}\) or \((\mtx X)_{ij}\).
We express the transpose of \(\mtx X\) as \(\mtx X^\adj\).  The vector signum \(\sgn(\vct x)\) is defined by applying the signum elementwise.

The symbol \(\norm{\cdot}_{\ell_p}\) stands for the \(\ell_p\)
vector norm on \(\R^d\), defined by \(\norm{\vct x}_{\ell_p}^p \defeq \sum_{i=1}^d
|x_i|^p\) when \(1\le p <\infty\) and \(\linf{\vct x} \defeq \max_{i =
  1,\dotsc, d} |x_i|\) when \(p =\infty\).  The \(\ell_p\) norm of a
matrix treats the matrix as a vector and applies the corresponding
vector \(\ell_p\) norm.  The Schatten 1-norm \(\sone{\mtx X}\) is the
sum of the singular values of a matrix \(\mtx X\), while the operator, or
spectral, norm \(\opnorm{\mtx X}\) returns the maximum singular value
of \(\mtx X\).

We reserve the symbols \(f\) and \(g\) for convex functions.  A convex function may take the value
\(+\infty\), but we assume that all convex functions are \emph{proper};
that is, each convex function takes on at least one finite value and
never takes the value~\(-\infty\).

The Euclidean unit sphere in \(\R^d\) is the set \(\mathsf{S}^{d-1}\). The orthogonal group---the set of \(d\times d\) orthogonal matrices---is denoted \(\mathsf{O}_d\).  The subset of \(\mathsf{O}_d\) with determinant one (the \emph{special} orthogonal group) is \(\mathsf{SO}_d\).   For brevity, we use the term \emph{basis} to refer to an orthogonal matrix from \(\mathsf{O}_d\).  In the sequel, the letter \(\mtx Q\) will always refer to a basis.

The symbol \(\mathbb{P}\) denotes the probability of an event.  Gaussian
vectors and matrices have independent standard normal entries. A
\emph{random basis} is a matrix drawn from the Haar measure on
\(\mathsf{O}_d\).%

A special note is in order when our observations are matrices.  The
space of matrices \(\R^{m\times n}\) is equipped with a natural
isomorphism to \(\R^{mn}\) through the \(\vec(\cdot)\)
operator, which stacks the columns of a matrix on top of one another to
form a tall vector. We define a random basis \(\mathcal{Q}\) for
\(\R^{m\times n}\) by \(\mathcal{Q}(\mtx Y) =
\vec^{-1}\bigl(\mtx Q \vec(\mtx Y)\bigr)\), where
\(\mtx Q\) is a random basis for \(\R^{mn}\).  It is easily verified
that such a \(\mathcal{Q}\) is a random basis for \(\R^{m\times n}\)
equipped with the Euclidean structure induced by the
Frobenius norm.

We deal frequently with convex cones, which are positively homogeneous
convex sets. For any cone \(K\), we define the polar cone \[K^\circ
\defeq \{\vct y \mid \langle \vct y, \vct x\rangle \le 0 \text{ for
  all } \vct x \in K\}.\] A \emph{polyhedral cone} is the intersection
of a finite number of closed halfspaces, each containing the
origin. One important polyhedral cone is the nonnegative orthant,
defined by \[\R^d_+ \defeq \{ \vct x \in \R^d\mid x_i\ge 0, \; i =
1,\dotsc,d\}.\] For a set \(A \subset \R^d\), we write
\(\mathrm{conv}(A)\) for its convex hull and \(\overline {A}\) for its
closure.  Two cones \(K_1,K_2\subset \R^d\) are  \emph{congruent}, written \(K_1 \cong K_2\), if there is a basis \(\mtx U\in \mathsf{O}_{d}\) such that \(K_1 = \mtx U K_2\).  

\section{The geometry of demixing}
\label{sec:generic-setup-main}

This short section lays the geometric foundation for the rest of this
work.   Section~\ref{sec:feasible-cones} describes the local
behavior of convex functions in terms of special convex cones. In
Section~\ref{sec:geom-opt}, this geometric view yields a concise
characterization of successful demixing in terms of the
configuration of two cones.  The results in this section are deterministic, that is, they hold for any fixed basis \(\mtx Q\).

\subsection{Feasible cones}
\label{sec:feasible-cones}

The success of the \CDMq~\eqref{eq:genconv} depends on the
properties of the complexity measures \(f\) and \(g\) at the
structured vectors \(\vct x_0\) and \(\vct y_0\).   The following
definition captures the local behavior of a convex function.
\begin{definition}[Feasible cone]\label{def:feasible-cone}
  The \emph{feasible cone} of a convex function \(f\) at a point \(\vct
c  x\)  is defined as the cone of directions at which \(f\) is locally
  nondecreasing about \(\vct x\):
  \begin{equation*}
    \Fcone(f,\vct x) \defeq \bigcup_{\lambda > 0}\{\vct \delta \mid
    f(\vct x + \lambda \vct \delta) \le f(\vct x)\}.
  \end{equation*}
\end{definition}
\begin{example}[The feasible cone of the \(\ell_\infty\) norm at sign vectors]\label{ex:linf-feas}
  Let \(\vct x \in \{\pm 1\}^d\) be  a sign vector.  Then  it is easy to check that \(\linf{\vct x+\lambda \vct \delta}\le \linf{\vct x} = 1\) for some \(\lambda >0\) if and only if \( \sgn(\delta_i) = -x_i\) for all \(i = 1,2,\dotsc,d\).  Therefore, the feasible cone of the \(\ell_\infty\) norm at \(\vct x\) is congruent to the nonnegative orthant:
  \begin{equation*}
\Fcone(\linf{\cdot},\vct x) = \bigl\{\vct \delta \mid   \sgn(\vct \delta)= - \vct x \bigr\}\cong \R^d_+.
\end{equation*}
\end{example}
The feasible cone is always a convex cone containing zero, but it is
not necessarily closed.  If \(\mathcal{A}\) is a set of atoms and
\(\vct a\in\mathcal{A}\), the feasible cone
\(\Fcone(f_{\mathcal{A}},\vct a)\) of the atomic gauge
\(f_{\mathcal{A}}\) at the atom \(\vct a\) tends to be small because
the unit ball of \(f_{\mathcal{A}}\) is the smallest convex set
containing all of the atoms.  See Figure~\ref{fig:atomic-gauge} for an
illustration. The positive homogeneity of atomic gauges further
implies that the feasible cones of atomic gauges do not depend on the
scaling of a vector, in the sense that
\begin{equation*}
  \Fcone(f_{\mathcal{A}},\vct x) = \Fcone(f_{\mathcal{A}},\lambda
  \cdot \vct x)
\end{equation*}
for all \(\vct x \in \R^d\) and any \(\lambda > 0\).

\begin{remark}
  Definition~\ref{def:feasible-cone} is equivalent to the definition of the ``tangent cone'' appearing in~\cite[Eq.~(8)]{ChaRecPar:12}. However, that definition differs slightly from the standard definition of a tangent cone; cf.,~\cite[Thm.~6.9]{Rockafellar1998}.  The cone of feasible directions~\cite[p.~33]{Pataki2000} is the closest relative of the feasible cone that we have identified in the literature, and this is the source of our terminology.  \end{remark}

\begin{figure}[t!]
  \centering
  {\includegraphics[width=0.35\columnwidth]{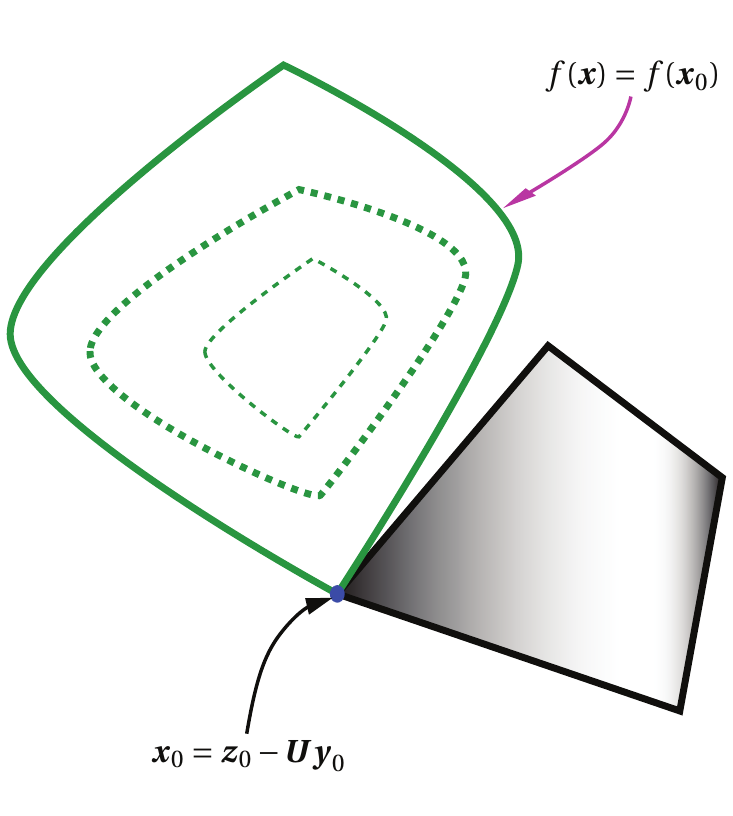}}
 \hspace{30pt} 
{\includegraphics[width=0.35\columnwidth]{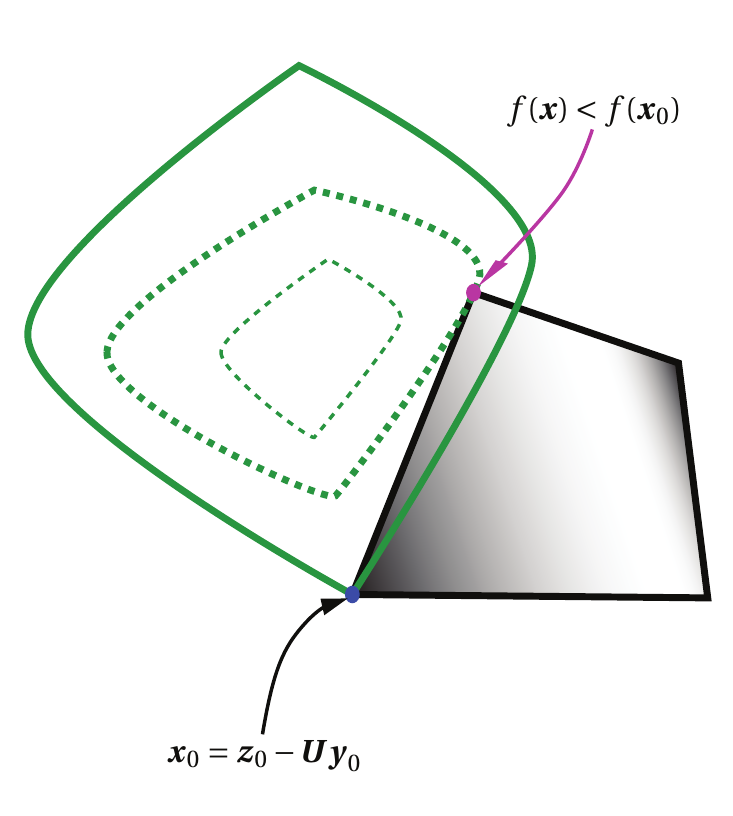}}
  \caption{\textsl{Geometry of demixing.}  [Left] Demixing
    succeeds: Every feasible perturbation about \(\vct x_0\) \textsl{(gray
    area)} increases the objective function \textsl{(green level lines)}.  [Right]
    Demixing fails: Some feasible perturbations decrease the
    objective value.  In each panel, the success or failure of the
    \CDMq~\eqref{eq:genconv} is determined by a configuration of two
    cones.  This fact forms the content of
    Lemma~\ref{lem:main-geom-introduction}.}
  \label{fig:geometry}
\end{figure}
\subsection{A geometric characterization of optimality}
\label{sec:geom-opt}

The following lemma provides a geometric characterization for success
in the \CDMq~\eqref{eq:genconv} in terms of the configuration of
two feasible cones.  This is the main result of this section.
\begin{lemma}
  \label{lem:main-geom-introduction} \label{lem:triv-intersect}
  Program~\eqref{eq:genconv} succeeds at demixing \((\vct x_0,\vct
  y_0)\) if and only if \(\Fcone(f,\vct x_0) \bigcap \bigl(-\mtx Q
  \Fcone(g,\vct y_0)\bigr) = \{\mathbf{0}\}.\)
\end{lemma}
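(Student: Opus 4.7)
The plan is to prove both directions by a single parameterization of the perturbations around $(\vct x_0,\vct y_0)$, then read off the conclusion from Definition~\ref{def:feasible-cone}. Write an arbitrary feasible point in the form $(\vct x_0+\vct u,\vct y_0+\vct v)$. The equality constraint in \eqref{eq:genconv} and the identity $\vct z_0 = \vct x_0+\mtx Q\vct y_0$ force $\vct u = -\mtx Q\vct v$; since $\mtx Q$ is orthogonal, $\vct u = \mathbf{0}$ if and only if $\vct v = \mathbf{0}$. The inequality constraint becomes $g(\vct y_0+\vct v)\le\alpha = g(\vct y_0)$, which by definition places $\vct v\in\Fcone(g,\vct y_0)$. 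Optimality of $(\vct x_0,\vct y_0)$ with respect to the objective translates to the condition $f(\vct x_0+\vct u)\le f(\vct x_0)$ placing $\vct u\in\Fcone(f,\vct x_0)$.

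For the \emph{only if} direction, I would argue the contrapositive. Suppose the \CDMq{} does not succeed: there is some feasible $(\vct x_0+\vct u,\vct y_0+\vct v)\ne(\vct x_0,\vct y_0)$ with $f(\vct x_0+\vct u)\le f(\vct x_0)$. The parameterization above puts $\vct u\in\Fcone(f,\vct x_0)$ and $\vct v\in\Fcone(g,\vct y_0)$ with $\vct u = -\mtx Q\vct v$. Orthogonality of $\mtx Q$ and $(\vct u,\vct v)\ne\mathbf{0}$ yield $\vct u\ne\mathbf{0}$, so $\vct u$ is a nonzero element of $\Fcone(f,\vct x_0)\cap(-\mtx Q\Fcone(g,\vct y_0))$.

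For the \emph{if} direction, suppose there is a nonzero $\vct u$ in this intersection. By Definition~\ref{def:feasible-cone}, there exist positive scalars $\lambda_1,\lambda_2$ with $f(\vct x_0+\lambda_1\vct u)\le f(\vct x_0)$ and $g(\vct y_0-\lambda_2\mtx Q^{-1}\vct u)\le g(\vct y_0)$. The little technical step is to bring these to a common scale: by convexity of $f$ (resp.\ $g$) along the ray from $\vct x_0$ (resp.\ $\vct y_0$), both inequalities remain valid when $\lambda_1,\lambda_2$ are replaced by any smaller positive value. Setting $\lambda := \min(\lambda_1,\lambda_2)$ produces the feasible point $(\vct x_0+\lambda\vct u,\vct y_0-\lambda\mtx Q^{-1}\vct u)$, which is distinct from $(\vct x_0,\vct y_0)$ because $\vct u\ne\mathbf{0}$, and whose objective value is no larger than $f(\vct x_0)$. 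So $(\vct x_0,\vct y_0)$ is not the unique optimizer, i.e., \eqref{eq:genconv} fails.

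The only conceptual subtlety is the one flagged in the paragraph following Example~\ref{ex:linf-feas}: $\Fcone(f,\vct x_0)$ captures feasibility only up to some positive scaling, not necessarily for every magnitude of perturbation. The common-scaling argument via convexity in the \emph{if} direction is where this issue has to be handled carefully; beyond that, the proof is a direct bookkeeping exercise once the substitution $\vct u = -\mtx Q\vct v$ is made.
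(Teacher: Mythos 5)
Your proof is correct and follows essentially the same route as the paper's: both directions rest on reading membership in the feasible cones off the constraints, and the rescaling-to-a-common-$\lambda$ step via convexity is exactly the paper's Proposition~\ref{prop:interval}. The only cosmetic difference is that you keep both variables and use $\vct u=-\mtx Q\vct v$ directly, whereas the paper first eliminates $\vct y$ and invokes the change-of-variables rule of Proposition~\ref{prop:trans-rule}.
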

In words, the demixing method~\eqref{eq:genconv} succeeds if and only if the two feasible cones are rotated so that they intersect trivially. Intuitively, we expect that many bases \(\mtx Q\) satisfy this condition when the two feasible cones are small.  This observation provides further support for choosing atomic gauges as our complexity measures.  We illustrate Lemma~\ref{lem:triv-intersect} in Figure~\ref{fig:geometry}.

The proof of Lemma~\ref{lem:main-geom-introduction} requires two
technical propositions.  The first is an alternative characterization
of feasible cones.
\begin{proposition}\label{prop:interval}
  Let \(f\) be a convex function.  Then \(\vct \delta \in
  \Fcone(f,\vct x)\) if and only if there is a number \(\lambda_0 > 0 \)
  such that, for all \(\lambda \in [0,\lambda_0]\), we have \(f(\vct x
  +\lambda \vct \delta)\le f(\vct x)\).
\end{proposition}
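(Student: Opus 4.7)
The plan is to prove the two directions separately; the reverse implication is immediate from the definition, and the forward implication is the content that requires argument.

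For the reverse direction, suppose there exists $\lambda_0 > 0$ such that $f(\vct x + \lambda \vct\delta) \le f(\vct x)$ for all $\lambda \in [0, \lambda_0]$. Then in particular $f(\vct x + \lambda_0 \vct\delta) \le f(\vct x)$, so $\vct\delta$ lies in the set indexed by $\lambda_0$ in the union defining $\Fcone(f, \vct x)$, hence $\vct\delta \in \Fcone(f, \vct x)$.

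For the forward direction, assume $\vct\delta \in \Fcone(f, \vct x)$. By Definition~\ref{def:feasible-cone}, there exists some $\lambda^* > 0$ such that $f(\vct x + \lambda^* \vct\delta) \le f(\vct x)$. I claim we may take $\lambda_0 := \lambda^*$. Fix any $\lambda \in [0, \lambda^*]$ and write $\lambda = t \lambda^*$ where $t \in [0,1]$. The key step is to express the interior point as a convex combination:
\begin{equation*}
  \vct x + \lambda \vct\delta = (1-t)\, \vct x + t\,(\vct x + \lambda^* \vct\delta).
\end{equation*}
Applying convexity of $f$ to this identity gives
\begin{equation*}
  f(\vct x + \lambda \vct\delta) \le (1-t)\, f(\vct x) + t\, f(\vct x + \lambda^* \vct\delta) \le (1-t)\, f(\vct x) + t\, f(\vct x) = f(\vct x),
\end{equation*}
where the second inequality uses the hypothesis $f(\vct x + \lambda^* \vct\delta) \le f(\vct x)$. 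This establishes the required inequality for every $\lambda \in [0, \lambda_0]$.

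There is no significant obstacle here: the essential content is the single convexity estimate above, which lifts a one-point inequality into an inequality along an entire segment. One mild caveat is that the argument implicitly treats $f(\vct x)$ as finite; since $f$ is proper and the hypothesis $f(\vct x + \lambda^*\vct\delta) \le f(\vct x)$ already presupposes $f(\vct x) < +\infty$, this poses no difficulty.
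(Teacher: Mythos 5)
Your proof is correct and follows essentially the same route as the paper's: the reverse direction is immediate from the definition, and the forward direction is the observation that, by convexity, $\lambda \mapsto f(\vct x + \lambda\vct\delta)$ lies below the chord from $0$ to $\lambda^\ast$ — your convex-combination computation is just that chord argument written out explicitly. The remark about properness of $f$ is a harmless extra.
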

\begin{proof}[Proof]
  The ``if'' part is immediate: given any such \(\lambda_0\), the assumption \(f(\vct x + \lambda_0 \vct \delta)\le f(\vct x)\) implies \(\vct \delta \in \Fcone(f,\vct x)\) by the definition of feasible cones.  The other direction follows from convexity.  Indeed, suppose \(\vct \delta \in \Fcone(f,\vct x)\), so that there exists a number \(\lambda_0 > 0\) for which \(f(\vct x + \lambda_0\vct \delta ) \le f(\vct x)\). By convexity, the map \(\lambda \mapsto f(\vct x + \lambda \vct \delta)\) lies below the chord connecting \(0\) and \(\lambda_0\), which is the claim.
\end{proof}

Our second technical proposition is a change of variables formula for the
feasible cone under a nondegenerate affine transformation.
\begin{proposition} \label{prop:trans-rule} Let \(g\) be any convex
  function, and define \(h(\vct x) \defeq g\bigl(\mtx A^{-1}(\vct z -\vct x)\bigr) \)
  for some invertible matrix \(\mtx A\).  Then
  \begin{math}
    \Fcone(h, \vct x) 
    = -    \mtx{A}\Fcone\bigl(g,\mtx A^{-1}( \vct z-\vct x)\bigr).
 \end{math}
\end{proposition}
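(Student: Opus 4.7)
The plan is to prove the equality of the two cones by a direct unpacking of the definition of the feasible cone, together with the change-of-variables $\vct \eta = -\mtx A^{-1}\vct\delta$. Since both sides are sets of directions, I would establish the equality by showing that $\vct\delta$ lies in the left-hand side if and only if the corresponding transformed direction lies in the feasible cone on the right-hand side. Because $\mtx A$ is invertible, this bijection is straightforward and the two inclusions are handled simultaneously.

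Concretely, I would start from $\vct\delta \in \Fcone(h,\vct x)$, which by Definition~\ref{def:feasible-cone} means there is some $\lambda > 0$ with $h(\vct x + \lambda\vct\delta) \le h(\vct x)$. Substituting the definition of $h$, this is exactly
\[
g\bigl(\mtx A^{-1}(\vct z - \vct x) + \lambda(-\mtx A^{-1}\vct\delta)\bigr) \le g\bigl(\mtx A^{-1}(\vct z - \vct x)\bigr).
\]
Setting $\vct w \defeq \mtx A^{-1}(\vct z-\vct x)$ and $\vct \eta \defeq -\mtx A^{-1}\vct\delta$, the displayed inequality says $\vct\eta \in \Fcone(g,\vct w)$. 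Since $\mtx A$ is invertible, the map $\vct\delta \mapsto -\mtx A^{-1}\vct\delta$ is a bijection between $\R^d$ and itself, so this chain of equivalences establishes $\vct\delta \in \Fcone(h,\vct x)$ iff $-\mtx A^{-1}\vct\delta \in \Fcone(g,\vct w)$, which rearranges to $\vct\delta \in -\mtx A\,\Fcone(g,\vct w)$, as desired.

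There is no real obstacle here; the argument is a mechanical definition chase enabled by invertibility of $\mtx A$. Convexity of $g$ (or the alternative characterization in Proposition~\ref{prop:interval}) is not actually needed for this step, since only a single witnessing $\lambda$ is involved on both sides. The only minor care required is to note that invertibility of $\mtx A$ is what lets us pass back and forth between $\vct\delta$ and $\vct\eta$ without loss, and to match the sign carefully so that the negation on the right-hand side of the claimed identity appears correctly.
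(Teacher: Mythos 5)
Your argument is correct and is exactly the direct definition chase that the paper has in mind when it omits the proof with the remark that it ``follows directly from the definition of a feasible cone''; the change of variables $\vct\eta = -\mtx A^{-1}\vct\delta$ together with invertibility of $\mtx A$ gives the two inclusions at once. Your observation that convexity of $g$ plays no role here is also accurate.
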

The proof of Proposition~\ref{prop:trans-rule} follows directly from
the definition of a feasible cone. We omit the details.
\begin{proof}[Proof of Lemma~\ref{lem:main-geom-introduction}]
 Because \(\mtx Q\) is unitary, we may eliminate the variable \(\vct y\)
  in~\eqref{eq:genconv} via the equality constraint \(\vct y = \mtx
  Q^\adj(\vct z_0 - \vct x)\).  Therefore, \((\vct x_0, \vct y_0)\) is
  the unique optimum of~\eqref{eq:genconv} if and only if \(\vct x_0\)
  is the unique optimum of
  \begin{equation}
    \label{eq:equal-removed}
    \minprog{}{f(\vct x)}{g\bigl(\mtx Q^\adj(\vct z_0 - \vct x)\bigr)\le
      g\bigl(\mtx Q^\adj(\vct z_0 - \vct x_0)\bigr) = g(\vct y_0),}
  \end{equation}
  with decision variable \(\vct x\). The equality
  in~\eqref{eq:equal-removed} follows from the definition \(\vct z_0 =
  \vct x_0 + \mtx Q \vct y_0\). The original claim thus reduces to the
  statement that \(\vct x_0\) is the unique optimum
  of~\eqref{eq:equal-removed} if and only if \(\Fcone(f,\vct x_0) \cap\bigl(
  -\mtx Q \Fcone(g,\vct y_0)\bigr) = \{\zerovct\}\).  The rest of the proof
  is devoted to this claim.

  \((\Leftarrow)\) Suppose \(\Fcone(f,\vct x_0) \cap \bigl(-\mtx Q
  \Fcone(g,\vct y_0)\bigr) = \{\zerovct\}\).  We show that \(\vct
  x_0\) is the unique optimum of \eqref{eq:equal-removed} by verifying
  that the strict inequality \(f(\vct x)> f(\vct x_0)\) holds for any
  feasible point \(\vct x\) of~\eqref{eq:equal-removed} with \(\vct x
  \ne \vct x_0\).

  To this end, assume \(\vct x \) is feasible
  for~\eqref{eq:equal-removed} and \(\vct x \ne \vct x_0\).
  Feasibility of \(\vct x\) is equivalent to
  \begin{equation*}
    g\bigl(\mtx Q^\adj (\vct z_0-\vct x )\bigr )
    \le g\bigl(\mtx Q^\adj(\vct z_0 - \vct x_0)\bigr) =g(\vct y_0).
  \end{equation*}
  By definition of the feasible cone and the transformation rule of
  Proposition~\ref{prop:trans-rule}, the inequality above implies
  \begin{equation*}
    \vct x -\vct x_0 \in
    \Fcone\Bigl(g\bigl(\mtx Q^\adj(\vct z_0 - \cdot)\bigr),\vct
    x_0\Bigr) 
    =
    -\mtx Q \Fcone\bigl(g,\mtx Q^\adj(\vct z_0-\vct x_0)\bigr) 
    =  -\mtx Q\Fcone(g,\vct y_0);
  \end{equation*}
 The final equality above follows from \(\vct z_0 = \vct x_0 + \mtx Q \vct
  y_0\).  

  Since \(\vct x \ne \vct x_0\), the assumption \(\Fcone(f,\vct
  x_0) \cap \bigl(-\mtx Q \Fcone(g,\vct y_0)\bigr) = \{\zerovct\}\)
  implies \(\vct x-\vct x_0\notin \Fcone(f,\vct x_0)\).  By the
  definition of feasible cones, we must have \(f(\vct x) = f\bigl(\vct x_0
  + (\vct x - \vct x_0)\bigr) > f(\vct x_0)\).  We have deduced 
  \(f(\vct x)>f(\vct x_0)\) for every feasible \(\vct x \ne \vct
  x_0\), and so conclude that \(\vct x_0\) is the unique optimum
  of~\eqref{eq:equal-removed}.

  \((\Rightarrow)\) Suppose \(\vct x_0\) is the unique
  optimum of~\eqref{eq:equal-removed}.  Let \(\vct \delta \) be some
  vector in the intersection \(\Fcone(f,\vct x_0)\cap \bigl(-\mtx Q
  \Fcone(g,\vct y_0)\bigr)\).  We must show \(\vct \delta =
  \mathbf{0}\).
  
  As \(\vct \delta \in\Fcone(f,\vct x_0)\),
  Proposition~\ref{prop:interval} implies that \(f(\vct x_0) \ge f(\vct x_0
  + \lambda \vct \delta) \) for all sufficiently small \(\lambda >0\).
  Applying Proposition~\ref{prop:interval} to the fact \(-\mtx Q^\adj
  \vct \delta \in \Fcone(g,\vct y_0)\) yields
  \begin{equation*}
    g(\vct y_0) 
    \ge g( \vct y_0 - \lambda\mtx Q^* \vct \delta)
    =   g\Bigl(\mtx Q^*\bigl(\vct z_0 - (\vct x_0 + \lambda \vct \delta)\bigr)\Bigr)  
  \end{equation*}
  for all sufficiently small \(\lambda >0\).  We have used the
  relation \(\vct z_0 = \vct x_0 + \mtx Q\vct y_0\) again here.  In
  summary, for some small enough \(\lambda>0\), the perturbed point
  \(\vct x_0 + \lambda \vct \delta\) is feasible
  for~\eqref{eq:equal-removed}, and its objective value is no larger
  than \(f(\vct x_0)\).  In other words, \(\vct x_0 + \lambda \vct
  \delta\) is an optimal point of~\eqref{eq:equal-removed}.  But
  \(\vct x_0\) is the unique optimal point of~\eqref{eq:equal-removed}
  by assumption, so we must have \(\vct \delta = \vct 0\).  This is
  the claim.
\end{proof}

\section{Background from integral geometry}
\label{sec:analys-via-integr}

By coupling the random basis model of Section~\ref{sec:random-basis-model} with the optimality condition of Lemma~\ref{lem:triv-intersect}, the optimality condition for the demixing method~\eqref{eq:genconv} boils down to a geometric question:
{\emph{When does a randomly oriented cone strike a fixed cone?}}

This section provides a background in spherical integral geometry, a subfield of integral geometry that studies random configurations of cones and quantities related to these configurations~\cite[Section~6.5]{Schneider2008}.  This theory provides an \emph{exact} expression, called the spherical kinematic formula, for the probability that the \CDMq~\eqref{eq:genconv} succeeds under our random model.  Unfortunately, the quantities involved in the spherical kinematic formula are typically difficult to compute.  To ease this burden, Section~\ref{sec:asympt-thresh-phase} defines geometric summary parameters that greatly simplify the application of the spherical kinematic formula. 

We use subspaces and the nonnegative orthant as running examples to illustrate the concepts from integral geometry.  These are not simply toy examples. A subspace plays an important role in the linear inverse problems in Section~\ref{sec:line-inverse-probl-1}, while the orthant, which is congruent to the feasible cone of the \(\ell_\infty\) norm at a sign vector, appears at several points in our examples in Section~\ref{sec:applications}.

\subsection{Spherical intrinsic volumes}
\label{sec:backgr-from-spher}

\begin{figure}[t]
  \centering
  \includegraphics[width=0.45\columnwidth]{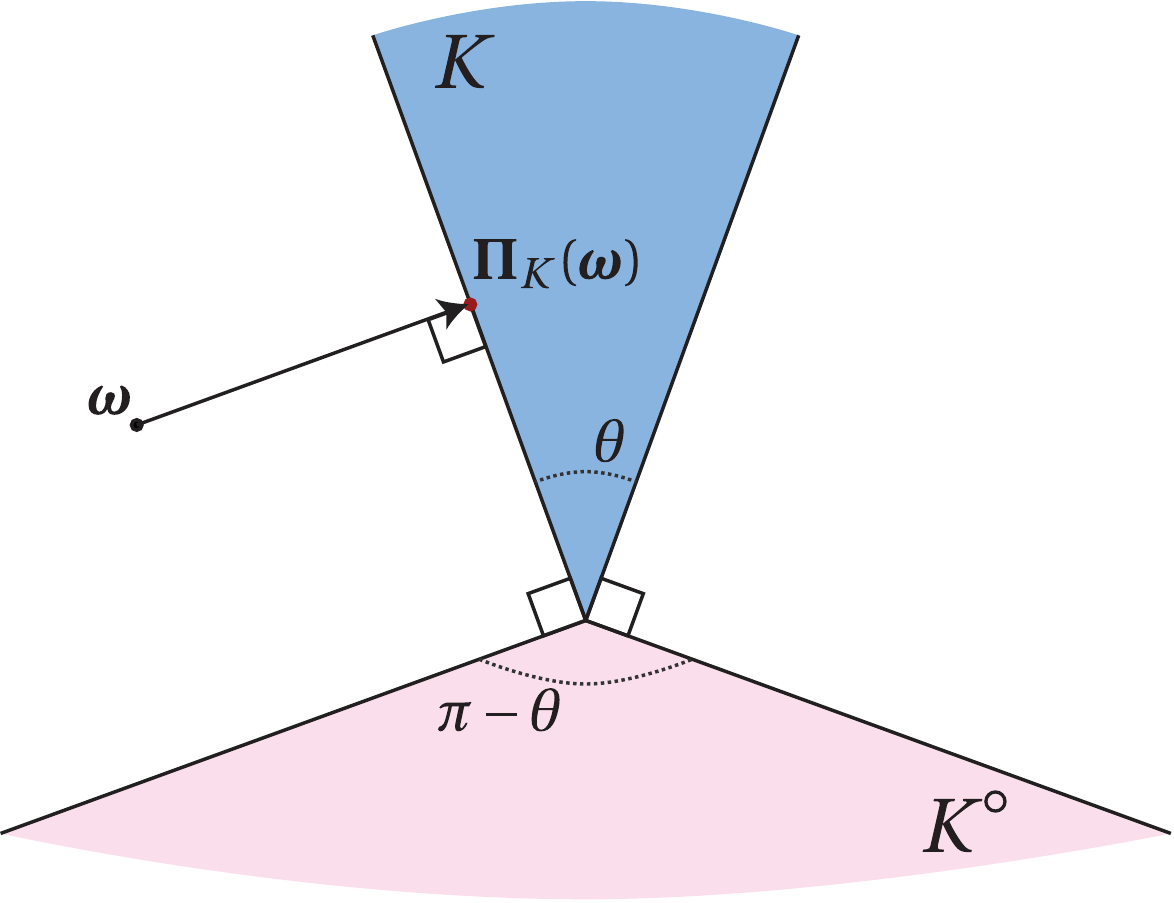}
  \caption{\textsl{Spherical intrinsic volumes in \(\R^2\).} A convex cone \(K\subset \R^2\) of solid angle \(\theta\) has four faces: one 2-dimensional face \textsl{(light blue)}, two 1-dimensional faces \textsl{(heavy blue lines)}, and one 0-dimensional face \textsl{(blue dot)}.  The projection \(\Proj_K(\vct w)\) of a Gaussian vector \(\vct \omega\) onto \(K\) lies in the 2-dimensional face when \(\vct \omega\) is in the blue region; a 1-dimensional face when \(\vct \omega\) is in the white region; and the 0-dimensional face when \(\vct \omega\) is in the red region.  By Definition~\ref{def:sphere-intrinsic-vols} of the spherical intrinsic volumes, we have \(v_1(K) = \theta/(2\pi)\), \(v_0(K) = 1/2\), and \(v_{-1}(K) = (\pi-\theta)/(2\pi)\). }
  \label{fig:conefig}
\end{figure}

We begin our introduction to integral geometry with fundamental geometric parameters known as spherical intrinsic volumes~\cite{McMullen1975}. Spherical intrinsic volumes quantify geometric properties of convex cones such as the fraction of space a cone consumes (a type of volume), the fraction of space taken by the corresponding dual cone, and quantities akin to surface area.  The following characterization~\cite[Proposition~4.4.6]{Amelunxen2011} is convenient.  \begin{definition}[Spherical intrinsic volumes]\label{def:sphere-intrinsic-vols}
  Let \(K\subset \R^d\) be a polyhedral convex cone, and define the
  Euclidean projection onto \(K\) by
  \begin{equation*}
    \Pi_K(\vct x) \defeq \argmin_{\vct y \in K} \norm{ \smash{\vct x - \vct y }}_{\ell_2}.
  \end{equation*}
  For \(i = -1, 0,\dotsc, d-1\), we define the \(i\)th \emph{spherical
    intrinsic volume of \(K\)} by
  \begin{equation*}
    v_i(K) \defeq \mathbb{P}\left\{ \begin{array}{l}\Pi_K(\vct \omega) \text{ lies in the
      relative interior} \\ \text{of an
      \((i+1)\)-dimensional face of } K \end{array}\right\},
  \end{equation*}
  where the vector \(\vct \omega\) is drawn from the standard Gaussian
  distribution on \(\R^d\).
\end{definition}

Although computing spherical intrinsic volumes is often challenging, a
direct application of Definition~\ref{def:sphere-intrinsic-vols} can
bear fruit in some situations. In Figure~\ref{fig:conefig}, we demonstrate how to compute spherical intrinsic volumes for cones in \(\R^2\).   A subspace of \(\R^d\) provides another simple example.

\begin{proposition}[Spherical intrinsic volumes of a subspace]\label{prop:subspace-iv}
  Suppose \(L\subset \R^d\) is a linear subspace of dimension \(n\). Then
  \(v_i(L) = \delta_{i,n-1}\), where \(\delta_{i,j}\) is the
  Kronecker \(\delta\) function.
\end{proposition}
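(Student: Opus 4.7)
The plan is to apply Definition 3.1 directly, reducing the claim to an analysis of the face structure of $L$ viewed as a polyhedral convex cone. The crux is that an $n$-dimensional linear subspace, considered as a cone, has exactly one nonempty face---namely $L$ itself, of linear dimension $n$. Polyhedrality is immediate: choosing any basis $\vct{a}_1, \dotsc, \vct{a}_{d-n}$ of $L^\perp$, we may write $L = \{\vct x \in \R^d : \vct{a}_j^\adj \vct x \le 0 \text{ and } -\vct{a}_j^\adj \vct x \le 0,\ j = 1, \dotsc, d-n\}$, which is an intersection of finitely many closed halfspaces through the origin.

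To establish that $L$ has no other nonempty faces, I would invoke the standard definition that a face $F$ of a convex cone $K$ is a convex subcone satisfying the property that $\vct x + \vct y \in F$ with $\vct x, \vct y \in K$ implies $\vct x, \vct y \in F$. Any nonempty convex cone contains $\vct{0}$, so $\vct{0} \in F$. For each $\vct y \in L$, the decomposition $\vct{0} = \vct y + (-\vct y)$ employs two vectors in $L$ (using closure under negation) and lies in $F$, forcing $\vct y \in F$. Hence $F = L$, and this face has linear dimension $n$.

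To conclude, the affine hull of $L$ equals $L$, so the relative interior of this unique face is $L$ itself. The Euclidean projection onto a subspace coincides with the orthogonal projection, so $\Pi_L(\vct \omega) \in L$ for every Gaussian draw $\vct \omega$. With probability one, therefore, the projection lies in the relative interior of the sole face of $L$, whose dimension is $n$. By Definition 3.1, this corresponds to $i + 1 = n$, giving $v_{n-1}(L) = 1$ and $v_i(L) = 0$ otherwise---exactly the claim $v_i(L) = \delta_{i, n-1}$. I do not foresee any real obstacle in this argument; the only case requiring a moment's care is the degenerate setting $n = 0$, where $L = \{\vct{0}\}$ is its own zero-dimensional face, and the formula correctly produces $v_{-1}(L) = 1$.
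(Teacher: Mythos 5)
Your proof is correct and follows essentially the same route as the paper's: both argue that a subspace is a polyhedral cone whose only nonempty face is the subspace itself, so the projection of any point lands in the relative interior of that unique $n$-dimensional face, and the claim follows from the definition of the spherical intrinsic volumes. You simply spell out the polyhedrality and the face-uniqueness argument that the paper takes for granted.
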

\begin{proof}
  A subspace \(L\) of dimension \(n\) is a polyhedral cone with a
  single face, with dimension \(n\).  The projection of any point onto
  \(L\) lies in the relative interior of this face. The claim follows
  immediately from Definition~\ref{def:sphere-intrinsic-vols}.
\end{proof}

The computation of the spherical intrinsic volumes for the nonnegative
orthant requires only some elementary probability theory.    We illustrate the following
result in Figure~\ref{fig:orth-vols}.

\begin{figure}[t!]
  \centering
  \includegraphics[width=0.55\columnwidth]{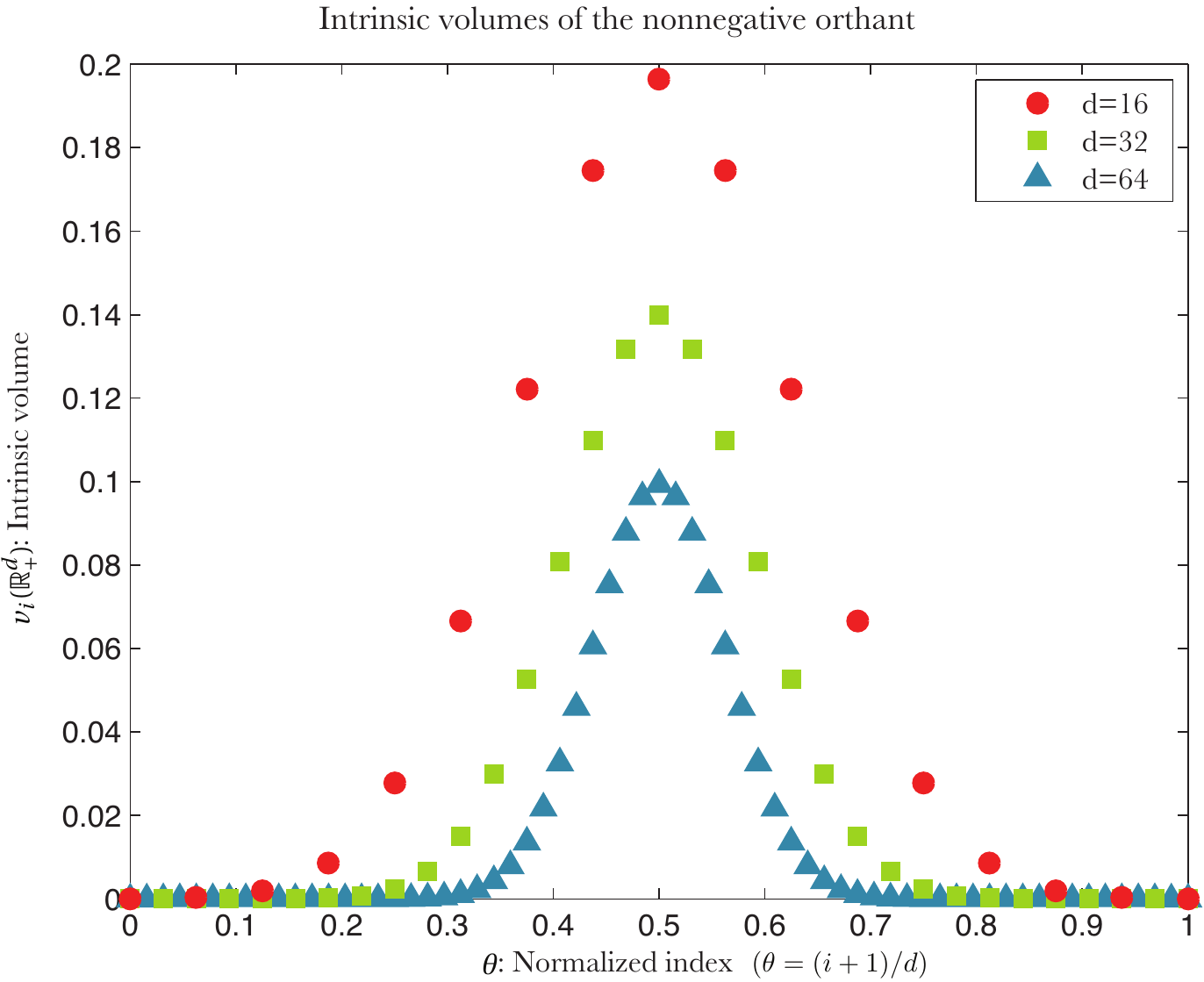}
  \caption{\textsl{Spherical intrinsic volumes of the orthant.} The
    spherical intrinsic volumes of the nonnegative orthant are given by
    scaled binomial coefficients~\eqref{eq:pos-orth-intrinsic-vol}.
    The scale on the horizontal axis is the normalized index \(\theta
    = (i+1)/d\).}
  \label{fig:orth-vols}
\end{figure}

\begin{proposition}[Spherical intrinsic volumes of the
  orthant~\protect{\cite[Example~4.4.7]{Amelunxen2011}}]
  \label{prop:orthant-intrinsic-volumes}
  The spherical intrinsic volumes of the nonnegative orthant ~\(\R^d_+\)
  are given by the binomial sequence
  \begin{equation}\label{eq:pos-orth-intrinsic-vol}
    v_i(\R^d_+) = 2^{-d}\binom{d}{i+1}
  \end{equation}
  for \(i=-1,0,\dotsc,d-1\).
\end{proposition}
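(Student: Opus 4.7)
The plan is to compute the probability in Definition~\ref{def:sphere-intrinsic-vols} directly by exploiting the explicit coordinate-wise description of the Euclidean projection onto \(\R^d_+\). First, I would catalog the faces of the orthant: every face of \(\R^d_+\) has the form \(F_S \defeq \{\vct x \in \R^d_+ \mid x_j = 0 \text{ for } j \notin S\}\) for some \(S \subseteq \{1,\dotsc,d\}\), and this face has dimension \(|S|\). Its relative interior consists of those \(\vct x\) with \(x_j > 0\) for \(j \in S\) and \(x_j = 0\) for \(j \notin S\).

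Next, I would observe that the Euclidean projection onto \(\R^d_+\) is separable: coordinate-wise, \(\bigl(\Pi_{\R^d_+}(\vct{\omega})\bigr)_j = \max(\omega_j, 0)\). Consequently, the projection of a Gaussian vector \(\vct{\omega}\) lies in the relative interior of the face \(F_S\) precisely when \(\omega_j > 0\) for every \(j \in S\) and \(\omega_j < 0\) for every \(j \notin S\) (the event \(\omega_j = 0\) has measure zero and can be ignored).

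Since the coordinates of \(\vct{\omega}\) are i.i.d.\ standard normal, each of these sign patterns occurs with probability \(2^{-d}\), and the \(\binom{d}{k}\) sign patterns corresponding to subsets \(S\) of size \(k\) are disjoint events. Summing over all faces of a fixed dimension \(k = i+1\) yields
\begin{equation*}
v_i(\R^d_+) = \sum_{|S| = i+1} \Prob{\omega_j > 0 \text{ for } j \in S,\ \omega_j < 0 \text{ for } j \notin S} = 2^{-d} \binom{d}{i+1},
\end{equation*}
which is the desired identity. No step should pose any real difficulty; the only care required is the routine verification that the projection formula is correct and that the measure-zero boundary events do not contribute, after which the computation is an immediate application of the definition.
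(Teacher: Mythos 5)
Your proof is correct and follows essentially the same route as the paper's: both rely on the componentwise formula \(\bigl(\Pi_{\R^d_+}(\vct\omega)\bigr)_j = \max\{\omega_j,0\}\) and then count the \(\binom{d}{i+1}\) equally likely sign patterns of the Gaussian coordinates. Your version is just slightly more explicit about cataloging the faces \(F_S\) and discarding the measure-zero boundary events, which the paper leaves implicit.
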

\begin{proof}
  The Euclidean projection onto the nonnegative orthant is given by the
  componentwise threshold operation
  \begin{equation*}
    \left(\Pi_{\R^d_+}(\vct \omega)\right)_i = \max\{\omega_i, 0\}.
  \end{equation*}
  Therefore, the projection \({\Pi_{\R^d_+}(\vct \omega)}\) lies
  in the relative interior of an \((i+1)\)-dimensional face of
  \(\R^d_{+}\) if and only if \(\vct \omega\) has exactly \(i+1\)
  strictly positive values.

  When \(\vct \omega\) is drawn from a standard Gaussian distribution, the
  number of positive entries is distributed as a binomial random
  variable.  Hence, \(\smash{\Pi_{\R^d_+}(\vct \omega ) }\) lies in an
  \((i+1)\)-dimensional face of \(\R^d_+\) with probability
  \(2^{-d}\binom{d}{i+1}\).  This is precisely the definition of
  the \(i\)th spherical intrinsic volume \(v_i(\R^d_+)\).
\end{proof}

\begin{remark}[Extension to nonpolyhedral cones] The definition of spherical intrinsic volumes extends
  to all closed convex cones by approximation with polyhedral cones,
  but note that the probabilistic characterization above \emph{does
    not hold} for nonpolyhedral cones. The technical details involve
  continuity properties of the spherical intrinsic volumes under the
  spherical Hausdorff metric. This theory is developed
  in~\cite{Glasauer1995}, \cite{Glasauer1996}.
  See~\cite[Ch.~6.5]{Schneider2008} for a self-contained overview of
  spherical integral geometry developed via polyhedral approximation,
  or see~\cite[Sec.~4]{Amelunxen2011} for a development using tools
  from differential geometry.
 \end{remark}

\subsection{Key facts from integral geometry}
\label{sec:spher-kinem-form}

We now collect some of the properties of spherical intrinsic volumes that are required for our development.  We start with some elementary facts.
\begin{fact} \label{fact:obvious} Let \(K \subset \R^d\) be a closed convex
  cone. Then 
  \begin{enumerate}\setlength{\itemsep}{0pt}\setlength{\topsep}{0pt}\setlength{\parsep}{0pt}
  \item (Positivity) \(v_i(K) \ge 0 \) for each \(i =
    -1,0,\dotsc,d-1\), \label{item:obv-positivity}
  \item (Unit-sum) \(\sum_{i=-1}^{d-1} v_i(K) =
    1\), and \label{item:obv-unity}
  \item (Basis invariance) For any basis \(\mtx U\in \mathsf{O}^d\)
    and index \(i=-1,0,\dotsc,d-1\),\label{item:obv-invariant} we have
    \(v_i(\mtx U K) = v_i(K)\).
  \end{enumerate}
\end{fact}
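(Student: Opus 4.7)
The plan is to unpack Definition~\ref{def:sphere-intrinsic-vols} and verify each claim directly. \emph{Positivity} is trivial: each $v_i(K)$ is defined as a probability, hence lies in $[0,1]$.

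The \emph{unit-sum} identity follows from the standard fact of polyhedral convex geometry that the relative interiors of the faces of $K$ form a partition of $K$. Applied to the (sure) containment $\Pi_K(\vct \omega)\in K$, this partition shows that the events indexed by $i=-1,0,\dotsc,d-1$ in Definition~\ref{def:sphere-intrinsic-vols} are pairwise disjoint and exhaust the sample space. Summing their probabilities yields $\sum_{i=-1}^{d-1} v_i(K)=1$.

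For \emph{basis invariance}, I would combine three ingredients. First, the equivariance relation $\Pi_{\mtx U K}(\vct \omega)=\mtx U\,\Pi_K(\mtx U^\adj \vct \omega)$ for $\mtx U\in\mathsf{O}_d$, which is immediate from the defining optimization problem for $\Pi_K$ since orthogonal matrices preserve Euclidean distances. Second, the observation that $F\mapsto \mtx U F$ is a dimension-preserving bijection between the faces of $K$ and those of $\mtx U K$ that commutes with the relative-interior operator. Third, the orthogonal invariance of the standard Gaussian, which gives $\mtx U^\adj \vct \omega$ the same distribution as $\vct \omega$. Chaining these three facts shows that the event defining $v_i(\mtx U K)$ has the same probability as the event defining $v_i(K)$.

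The only minor subtlety is that Definition~\ref{def:sphere-intrinsic-vols} is stated for polyhedral cones while Fact~\ref{fact:obvious} is asserted for arbitrary closed convex cones. I would bridge this gap by invoking the extension of $v_i$ to closed convex cones by polyhedral approximation together with continuity in the spherical Hausdorff metric (as noted in the remark following the definition): positivity and the unit-sum identity pass to the limit, and basis invariance survives because the approximating polyhedra can be chosen equivariantly under rotation. I expect no real obstacle; each of the three claims is essentially a one-line consequence of the Definition once the correct structural ingredient is named.
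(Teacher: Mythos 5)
Your proposal is correct and follows essentially the same route as the paper's proof sketch: positivity from positivity of probability, unit-sum from the partition of $K$ into relative interiors of its faces, basis invariance from orthogonal equivariance of the projection together with rotation invariance of the Gaussian, and extension to general closed convex cones by polyhedral approximation and Hausdorff continuity. Your version merely spells out the details that the paper leaves implicit.
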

\begin{proof}[Proof sketch]
  First, assume \(K\) is a polyhedral cone.  The positivity of
  spherical intrinsic volumes follows from the positivity of
  probability.  The unit-sum rule follows immediately from the
  fact that the projection \(\Pi_{K}(\vct x)\) lies in the relative
  interior of a unique face of \(K\).  %
  Finally, the basis invariance of spherical intrinsic volumes is
  immediate from the corresponding invariance of the Gaussian
  distribution.  Continuity of the spherical intrinsic volumes under
  the spherical Hausdorff metric implies that these facts must hold for all
  closed convex cones by approximation with polyhedral cones.
\end{proof}

We conclude our background discussion with the following remarkable
formula for the probability that a randomly oriented cone strikes a
fixed cone.  In view of Lemma~\ref{lem:main-geom-introduction}, this
result is fundamental to our understanding of the probability that the
\CDMq~\eqref{eq:genconv} succeeds under the random basis model.
\begin{fact}[Spherical kinematic formula~\protect{\cite[p.~261]{Schneider2008}}]\label{fact:sphere-kin}
  Let \(K\) and \(\tilde K\) be closed convex cones in \(\R^d\),
  at least one of which is not a subspace, and let \(\mtx Q\) be a
  random basis.  Then
  \begin{equation}\label{eq:sphere-kin}
    \mathbb{P}\bigl\{ K \cap \mtx{Q} \tilde K\ne \{\vct 0\} \bigr\} =
      \sum_{k=0}^{d-1} \bigl(1 + (-1)^k\bigr)\sum_{i=k}^{d-1} v_i (K)\cdot
      v_{d-1-i+k}(\tilde K).
  \end{equation}
\end{fact}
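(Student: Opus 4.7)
The plan is to follow the standard derivation of the spherical kinematic formula in the style of Schneider and Weil. The identity~\eqref{eq:sphere-kin} is obtained by specializing, to the indicator of nontrivial intersection, a refined bilinear ``convolution'' formula for the expected spherical intrinsic volumes of a random intersection, and then contracting that formula via a Gauss--Bonnet relation on the sphere.

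First, I would reduce to the case where both $K$ and $\tilde K$ are polyhedral. Both sides of~\eqref{eq:sphere-kin} are continuous under the spherical Hausdorff metric: the right-hand side by continuity of each $v_i$ already invoked in Fact~\ref{fact:obvious}, and the left-hand side by a Portmanteau-type argument for indicators of closed-cone intersections together with the rotational invariance of the Haar measure. Since every closed convex cone is a Hausdorff limit of polyhedral cones, it suffices to treat the polyhedral case, where Definition~\ref{def:sphere-intrinsic-vols} is directly available and $K \cap \mtx Q \tilde K$ is almost surely polyhedral.

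Second, I would establish two ingredients. The \emph{kinematic formula for spherical intrinsic volumes} asserts that, for each $k \in \{0,\dotsc,d-1\}$,
\[
\Expect\, v_k(K \cap \mtx Q \tilde K) \;=\; \sum_{i=k}^{d-1} v_i(K)\, v_{d-1-i+k}(\tilde K).
\]
The proof integrates the face-projection indicator of Definition~\ref{def:sphere-intrinsic-vols} against the product of the Gaussian measure on $\R^d$ and the Haar measure on $\mathsf{O}_d$, then swaps the order of integration: for each pair of faces $F \subset K$ and $\tilde F \subset \tilde K$, the rotated face $\mtx Q \tilde F$ is a uniformly oriented cone of the appropriate dimension, and a direct dimension count on $F \cap \mtx Q \tilde F$ produces the convolution on the right. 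The second ingredient is a \emph{cone-theoretic Gauss--Bonnet identity}: for any closed convex cone $C \subset \R^d$ that is not a linear subspace,
\[
\indicator\{C \ne \{\mathbf{0}\}\} \;=\; \sum_{k=0}^{d-1} \bigl(1+(-1)^k\bigr)\, v_k(C).
\]
Sanity-checking against Proposition~\ref{prop:orthant-intrinsic-volumes} for $C = \R^d_+$ gives $2\sum_{k\text{ even}} 2^{-d}\binom{d}{k+1} = 1$, as required; a subspace gives $0$ or $2$, which is exactly why the non-subspace hypothesis is imposed.

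Third, I would apply the Gauss--Bonnet identity to $C = K \cap \mtx Q \tilde K$, whose almost-sure non-subspace status is guaranteed by the hypothesis that at least one of $K,\tilde K$ is not a subspace together with the generic orientation provided by the random basis $\mtx Q$. Taking expectations and substituting the kinematic formula for each $\Expect\, v_k(K \cap \mtx Q \tilde K)$ produces precisely the double sum on the right of~\eqref{eq:sphere-kin}. The main obstacle is the Gauss--Bonnet identity: its clean derivation requires identifying the alternating-parity combination of the $v_k$ with (twice) the Euler characteristic of the spherical section $C \cap \mathsf{S}^{d-1}$, and this in turn relies on the polyhedral reduction from the first step together with the Euler relation for spherical polytopes. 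Once both ingredients are in hand, the final assembly is a bookkeeping exercise.
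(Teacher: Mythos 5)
The paper does not prove this statement: it is imported verbatim as a Fact from Schneider \& Weil~\cite[p.~261]{Schneider2008}, so there is no internal argument to compare against. What you have written is a faithful outline of the standard derivation in that reference, and your bookkeeping is correct. Your intermediate formula
\begin{equation*}
  \Expect\, v_k\bigl(K \cap \mtx Q \tilde K\bigr) \;=\; \sum_{i=k}^{d-1} v_i(K)\, v_{d-1-i+k}(\tilde K)
\end{equation*}
is the conic Crofton/kinematic formula written in the paper's shifted indexing (it passes the sanity checks $\tilde K = \R^d$ and $K,\tilde K$ subspaces in general position), and your alternating-sum identity is exactly Fact~\ref{fact:spherical-gauss-bonnet} combined with the unit-sum property of Fact~\ref{fact:obvious}, so summing $\bigl(1+(-1)^k\bigr)\Expect v_k$ does assemble~\eqref{eq:sphere-kin}.

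That said, as a proof the proposal is mostly deferral: the two ``ingredients'' you isolate \emph{are} the theorems being proved in the reference, and the one-sentence sketches (``swap the order of integration \dots a direct dimension count'') do not capture the actual work, which is a nontrivial induction over faces with careful treatment of general position. Two specific points deserve more than assertion. First, the claim that $K \cap \mtx Q\tilde K$ is almost surely not a nonzero subspace: the intersection is a subspace precisely when it collapses to $\mathrm{lin}(K)\cap \mtx Q\,\mathrm{lin}(\tilde K)$, and showing this degenerate configuration has Haar measure zero is part of the general-position analysis, not automatic from ``at least one cone is not a subspace.'' Second, the continuity of the left-hand side $\mathbb{P}\{K\cap\mtx Q\tilde K \ne \{\vct 0\}\}$ under spherical Hausdorff approximation by polyhedral cones is genuinely delicate---the indicator of nontrivial intersection is not continuous, and one must rule out positive-probability touching configurations; the paper's own Remark~\ref{rem:touch-prob} notes that a closely related claim ``appears to have no simple proof from first principles'' and points to the same pages of~\cite{Schneider2008}. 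So: correct architecture, correct formulas, but the proposal is a map of the textbook proof rather than a self-contained replacement for the citation.
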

\begin{remark}
  Fact~\ref{fact:sphere-kin} is usually stated for random rotations
  drawn according to the Haar measure on the special orthogonal group
  \(\mathsf{SO}_d\), but the unitary invariance given by
  Fact~\ref{fact:obvious}.\ref{item:obv-invariant} readily implies that
  the spherical kinematic formula holds for \(\mtx Q\) drawn from the
  Haar measure on the orthogonal group \(\mathsf{O}_d\).
\end{remark}
\begin{remark}\label{rem:symmetry}
  The spherical Gauss--Bonnet formula
  (Fact~\ref{fact:spherical-gauss-bonnet} in
  Appendix~\ref{sec:region-fail-proof}) can be used to eliminate the
  apparent asymmetry between \(K\) and \(\tilde K\)
  in~\eqref{eq:sphere-kin}.  In particular, the identity \[
  \mathbb{P}\bigl\{ K \cap \mtx{Q} \tilde K\ne \{\vct 0\} \bigr\} =
  \mathbb{P}\bigl\{ \mtx{Q}K \cap \tilde K\ne \{\vct 0\} \bigr\}\]
  holds for any convex cones \(K\) and \(\tilde K\).
\end{remark}

\subsection{High-dimensional decay of spherical intrinsic volumes}
\label{sec:asympt-thresh-phase}

The spherical kinematic formula~\eqref{eq:sphere-kin}, coupled with
the geometric optimality conditions of
Lemma~\ref{lem:main-geom-introduction}, provides an exact expression
for the probability that the demixing problem~\eqref{eq:genconv}
succeeds.  Nevertheless, the formula involves the spherical intrinsic
volumes of two cones, and it is challenging to determine these quantities
directly from the definition except in simple situations.

To confront this challenge, we seek summary statistics for the intrinsic volumes as the dimension \(d\to \infty\). We motivate our approach with the orthant. In Figure~\ref{fig:orth-vols}, we see that the spherical intrinsic volumes of the orthant \(v_i(\R^d_+)\) decay rapidly as \(d\to \infty\) when the index \(i\) falls outside of the region \(i \approx d/2\).  The intrinsic volumes with indices \(i\) far away from  \(d/2\) will contribute very little to the sum~\eqref{eq:sphere-kin} appearing in the kinematic formula.  This observation simplifies the application of the kinematic formula when one of the cones is congruent to the orthant.

In general, we might hope that for some sequence of cones \(K^{(d)}\in \R^d\), the spherical intrinsic volumes  \(v_i(K^{(d)})\) decay rapidly as \(d\to \infty\) for indices \(i\) outside of some interval \([\kappa_\star d,\theta_\star d]\). We codify this behavior with \emph{decay thresholds} that indicate which intrinsic volumes are very small as the ambient dimension \(d\) grows.  
\begin{definition}[Decay threshold]\label{def:aeub}
  Let \(\Index \subset \mathbb{N}\) be an infinite set of indices, and
  suppose \(\{K^{(d)}\mid d \in \Index \}\) is an ensemble of closed
  convex cones with \(K^{(d)} \subset \R^d\) for each \(d\in \Index\).
  We say that \(\theta_\star \in [0,1]\) is an \emph{upper decay
    threshold} for \(\{K^{(d)}\}\) if, for every \(\theta >
  \theta_\star\), there exists an \(\varepsilon > 0\) such that, for
  all sufficiently large \(d\in \Index\), we have
  \begin{equation}\label{eq:exp-upper-bd}
    v_i(K^{(d)}) \le \mathrm{e}^{-\varepsilon d }
    \;\;\;\text{for every \;\(i \ge \lceil \theta d\rceil\).}
  \end{equation}
 On the other hand, we
  say that \(\kappa_\star \in [0,1]\) is a \emph{lower decay
    threshold} for \(\{K^{(d)}\}\) if, for every
  \(\kappa < \kappa_\star\), there exists an \(\eps >0\) such that,
  for all \(d\in \Index\) sufficiently large, we have
  \begin{equation*}
    v_i(K^{(d)}) \le \mathrm{e}^{-\varepsilon d} \;\;\;\text{  for every \;\(i \le \lceil \kappa d\rceil\).  }
  \end{equation*}
  We extend these definitions to non-closed cones by taking the
  closure. We say \(\theta_\star \) is an upper decay threshold for
  \(\{K^{(d)}\}\) if and only if it is an upper decay threshold for
  \(\bigl\{\overline{K}{}^{(d)}\bigr\}\).  Similarly, \(\kappa_\star \)
  is a lower decay threshold for \(\{K^{(d)}\}\) if and only if it is
  a lower decay threshold for \(\bigl\{\overline{K}{}^{(d)}\bigr\}\).
\end{definition}
\noindent When it will not cause confusion, we omit the index set \(\Index\).
\begin{remark}[Nonuniqueness of decay thresholds]
  The upper decay threshold \(\theta_\star\) for an ensemble \(\{K^{(d)}\}\) defined above is not unique since any \(\theta' > \theta_\star\) is also a decay threshold for \(\{K^{(d)}\}\). %
 An analogous comment holds for the lower decay threshold. 
\end{remark}
\subsubsection{Examples of decay thresholds}
\label{sec:return-two-exampl}

Since the intrinsic volumes are positive and sum to one (Fact~\ref{fact:obvious}), not every
intrinsic volume is exponentially small in the ambient dimension \(d\).  In particular, the inequality
\(\kappa_\star \le \theta_\star\) between lower and upper decay thresholds always holds. In practice, however, we can often find decay thresholds that satisfy the equality \(\kappa_\star = \theta_\star\), as the following examples demonstrate.

\begin{proposition}[Upper decay threshold for subspaces]
  \label{prop:subspace-aeub}
  Let \(\Index\subset \mathbb{N}\) be an infinite set of indices, and
  let \(\{L^{(d)}\mid d \in \Index\}\) be an ensemble of linear
  subspaces with \(L^{(d)}\subset \R^d\) for each \(d \in \Index\).
  Suppose there exists a parameter \(\sigma\in [0,1]\) such that
  \(\mathrm{dim}(L^{(d)}) = \lceil \sigma d\rceil\).  Then
  \(\theta_\star=\sigma\) is an upper decay threshold for the ensemble
  \(\{L^{(d)}\}\), and \(\kappa_\star=\sigma\) is a lower decay
  threshold for the ensemble \(\{L^{(d)}\}\).
\end{proposition}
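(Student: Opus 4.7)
The plan is to appeal directly to Proposition~\ref{prop:subspace-iv}, which pins down the spherical intrinsic volumes of a subspace completely: only a single index carries mass. For the ensemble in question with $\dim(L^{(d)}) = \lceil \sigma d\rceil$, this means $v_i(L^{(d)}) = 1$ exactly when $i = \lceil \sigma d\rceil - 1$, and every other intrinsic volume vanishes. So the decay threshold properties in Definition~\ref{def:aeub} collapse to statements about where this lone nonzero index sits as $d$ grows.

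For the upper threshold, I will fix $\theta > \sigma$ and verify that every $i \ge \lceil \theta d\rceil$ satisfies $i > \lceil \sigma d\rceil - 1$ once $d$ is large. Comparing $\lceil \theta d\rceil \ge \theta d$ against $\lceil \sigma d\rceil - 1 \le \sigma d$ gives a gap of order $(\theta - \sigma) d$, which forces the desired separation of indices as soon as $d \ge 1/(\theta - \sigma)$. Since $v_i(L^{(d)}) = 0$ at every such $i$, the exponential bound~\eqref{eq:exp-upper-bd} is trivially satisfied with any positive $\varepsilon$. The lower threshold argument is symmetric: fix $\kappa < \sigma$ and check that $\lceil \kappa d\rceil < \lceil \sigma d\rceil - 1$ for $d$ large, using $\lceil \kappa d\rceil \le \kappa d + 1$ and $\lceil \sigma d\rceil - 1 \ge \sigma d - 1$; this holds once $d > 2/(\sigma - \kappa)$, and again any $\varepsilon > 0$ suffices.

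There is no real obstacle to speak of. The intrinsic volumes are concentrated at a single index, and the definitions only ask that other indices be exponentially suppressed, which is automatic given that they are exactly zero. The only care needed is routine bookkeeping with the ceiling function to ensure that $\lceil \theta d\rceil$ and $\lceil \kappa d\rceil$ separate cleanly from $\lceil \sigma d\rceil - 1$ in the large-$d$ limit; edge cases $\sigma \in \{0,1\}$ handle themselves because any condition involving indices outside the valid range $\{-1, 0, \dots, d-1\}$ becomes vacuous as $d \to \infty$.
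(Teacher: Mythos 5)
Your proposal is correct and follows essentially the same route as the paper: invoke Proposition~\ref{prop:subspace-iv} to reduce everything to the location of the single index carrying nonzero intrinsic volume, then check that $\lceil \theta d\rceil$ (resp.\ $\lceil \kappa d\rceil$) separates from $\lceil\sigma d\rceil - 1$ for large $d$, so the required bounds hold trivially because the relevant intrinsic volumes are exactly zero. The only difference is that you make the ceiling-function bookkeeping slightly more explicit than the paper does, which is harmless.
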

\begin{proof}
  Let \(n = n^{(d)} = \dim(L^{(d)})\).  By
  Proposition~\ref{prop:subspace-iv}, we have \(v_{i}(L^{(d)}) =
  \delta_{i,n-1}\).  Then for any \(\theta > \sigma\), we have
  \(\lceil \theta d\rceil> \lceil \sigma d \rceil\) for all large
  enough \(d\in \Index\), so that
  \begin{equation}\label{eq:subspace-exp-bound}
    v_i(L^{(d)})  = 0 < \econst^{-  d}
  \end{equation}
  for all \(i \ge \lceil \theta d\rceil\) and \(d\) sufficiently
  large.  By definition, \(\theta_\star = \sigma\) is an upper decay
  threshold for the ensemble of subspaces \(\{L^{(d)}\mid d \in \Index\}\).  The
  demonstration that \(\kappa_\star = \sigma\) is a lower decay
  threshold for \(\{L^{(d)}\mid d \in \Index\}\) follows in the same way.
\end{proof}

\begin{figure}[t!]
  \centering
  \includegraphics[width=\figwidth\columnwidth]{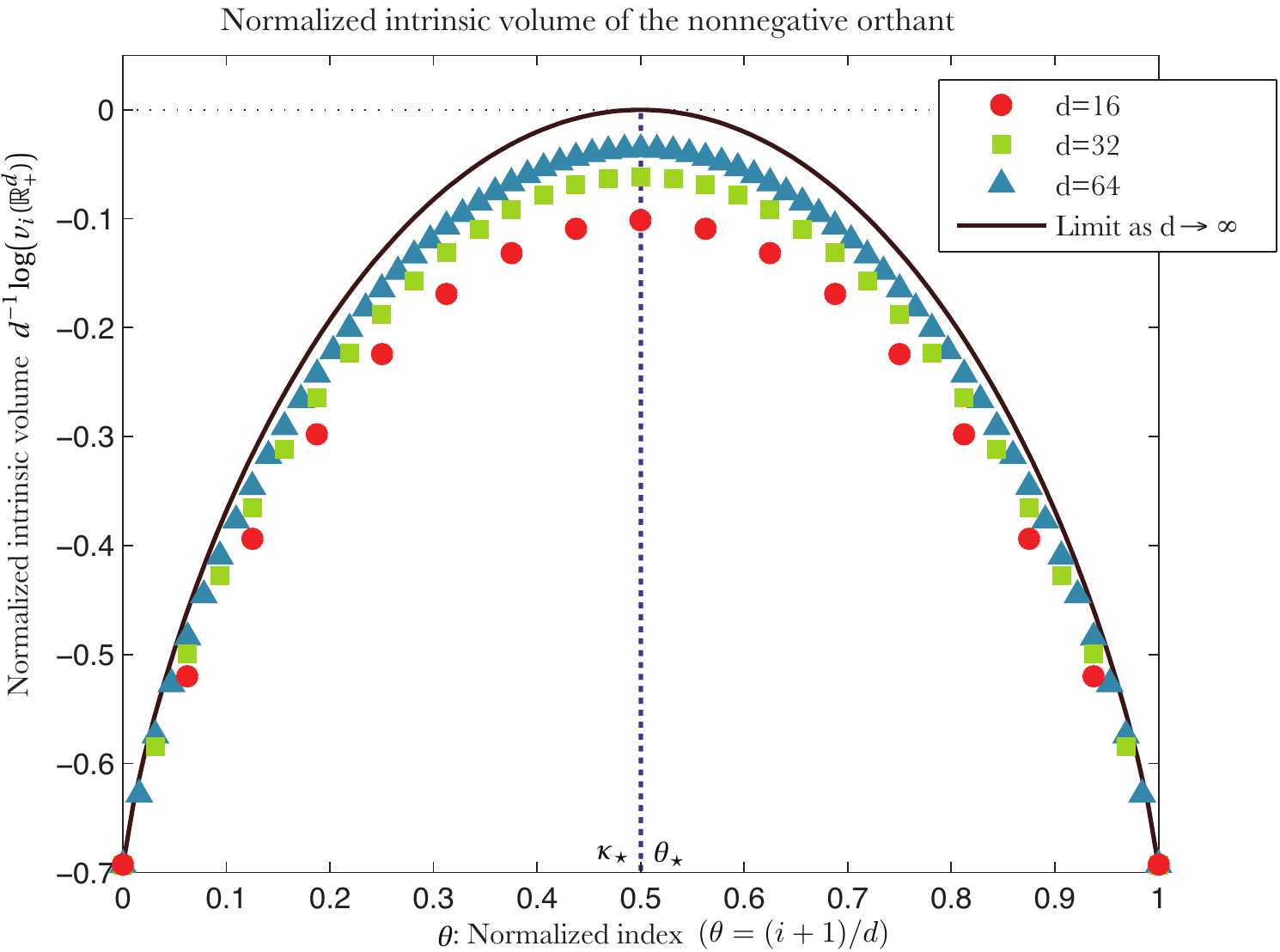}
  \caption{
    \textsl{Normalized spherical intrinsic volumes of the orthant.}  The graph illustrates the computation in Proposition~\ref{prop:orth-thresh}.  The normalized spherical intrinsic volumes \(d^{-1} \log\bigl(v_i(\R^d_+)\bigr)\) appear against the normalized index \(\theta = (i+1)/d\) for
    several values of dimension \(d\). The solid curve is the uniform
    limit \(H(\theta) - \log(2)\) of these rescaled volumes, where
    \(H(\theta)\) is the bit entropy~\eqref{eq:entropy}.  The
    smallest possible upper decay threshold is the rightmost point
    where the solid curve crosses the zero level line (dotted), and
    the largest possible lower decay threshold is the leftmost point
    where the zero level line crosses the solid curve. The solid curve
    takes its unique maximum value of zero at \(\theta = \frac{1}{2}\)
    so that \(\theta_\star=\frac{1}{2}\) is an upper decay threshold
    and \(\kappa_\star = \frac{1}{2}\) is a lower decay threshold for
    the ensemble \(\{\R^d\mid d =1,2,\dotsc \}\) of nonnegative
    orthants. }
  \label{fig:orth-vols-norm}
\end{figure}
\begin{proposition}[Upper decay threshold for the nonnegative
  orthant]\label{prop:orth-thresh}
  Let \(\Index\subset \mathbb{N}\) be an infinite index set.  The
  value \(\theta_\star = \frac{1}{2}\) is an upper decay threshold for
  the ensemble~\(\{\R^d_+\mid d\in \Index\}\) of nonnegative orthants,
  and \(\kappa_{\star}= \frac{1}{2}\) is a lower decay threshold for
  \(\{\R^d_+\mid d\in \Index\}\).
\end{proposition}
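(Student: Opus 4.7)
The plan is to start from the explicit formula
\(v_i(\R^d_+) = 2^{-d}\binom{d}{i+1}\) supplied by Proposition~\ref{prop:orthant-intrinsic-volumes}, and then invoke the classical entropy bound
\(\binom{d}{k} \le 2^{d H_2(k/d)}\),
where \(H_2(p) = -p\log_2 p - (1-p)\log_2(1-p)\) is the binary entropy (with \(0\log_2 0 = 0\)). Together these give the uniform estimate
\begin{equation*}
  v_i(\R^d_+) \;\le\; 2^{-d\bigl(1 - H_2((i+1)/d)\bigr)}
  \qquad\text{for } i = -1,0,\dotsc,d-1.
\end{equation*}
Since \(H_2\) attains its unique maximum value of \(1\) at \(p = \tfrac{1}{2}\) and is strictly less than \(1\) elsewhere, the exponent is positive whenever \((i+1)/d\) is bounded away from \(\tfrac{1}{2}\). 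This is exactly the kind of bound demanded by Definition~\ref{def:aeub}.

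For the upper decay threshold, I would fix any \(\theta > \tfrac{1}{2}\) and restrict to indices \(i \ge \lceil \theta d\rceil\). Because \(i\le d-1\), the ratio \((i+1)/d\) then lies in the interval \([\theta, 1]\), on which \(H_2\) is monotonically decreasing. Hence \(H_2((i+1)/d) \le H_2(\theta) < 1\), and the displayed inequality yields
\(v_i(\R^d_+) \le 2^{-d(1 - H_2(\theta))} = \econst^{-\varepsilon d}\)
with \(\varepsilon \defeq (1 - H_2(\theta))\log 2 > 0\), uniformly in \(i\). This is precisely the condition in~\eqref{eq:exp-upper-bd}, so \(\theta_\star = \tfrac12\) is an upper decay threshold.

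For the lower decay threshold, I would symmetrically fix any \(\kappa < \tfrac{1}{2}\) and restrict to indices \(i \le \lceil \kappa d\rceil\). Then \((i+1)/d \le (\lceil \kappa d\rceil + 1)/d \le \kappa + 2/d\), which is at most some \(\kappa' \in (\kappa, \tfrac12)\) once \(d\) is large enough. On \([0, \tfrac12]\) the entropy \(H_2\) is monotonically increasing, so \(H_2((i+1)/d) \le H_2(\kappa') < 1\), and the same calculation produces \(v_i(\R^d_+) \le \econst^{-\varepsilon d}\) with \(\varepsilon \defeq (1 - H_2(\kappa'))\log 2 > 0\), uniformly in \(i\). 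This verifies that \(\kappa_\star = \tfrac12\) is a lower decay threshold.

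There is no real obstacle here: once the entropy bound is invoked, the work reduces to exploiting the monotonicity of \(H_2\) on each half of \([0,1]\) to pass from a pointwise bound at a particular ratio to a bound that is uniform across the relevant range of indices. The only mild bookkeeping concerns the ceiling functions and the boundary index \(i = d-1\), where \((i+1)/d = 1\) makes the bound exact (giving \(v_{d-1}(\R^d_+) = 2^{-d}\)), which is already exponentially small.
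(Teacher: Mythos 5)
Your proof is correct and arrives at the same destination by a slightly different, and arguably cleaner, technical route. The paper's proof also starts from the formula \(v_i(\R^d_+) = 2^{-d}\binom{d}{i+1}\) and the observation that the entropy is uniquely maximized at \(\tfrac12\), but it then invokes the \emph{asymptotic} uniform convergence \(\frac{1}{d}\log\binom{d}{\lceil\theta d\rceil}\to H(\theta)\) (cited from the compressed-sensing literature) together with continuity of \(H\) to extract the \(\varepsilon\). You instead use the \emph{non-asymptotic} inequality \(\binom{d}{k}\le 2^{dH_2(k/d)}\), which is elementary (it follows from \(1=(p+(1-p))^d\ge\binom{d}{k}p^k(1-p)^{d-k}\) with \(p=k/d\)) and holds for every \(d\); the continuity argument is then replaced by monotonicity of the entropy on each half of \([0,1]\). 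What this buys you is a self-contained proof with an explicit rate \(\varepsilon=(1-H_2(\theta))\log 2\) valid for all \(d\) in the upper-threshold case, with the ``sufficiently large \(d\)'' clause needed only to absorb the \(2/d\) ceiling slack in the lower-threshold case; what it costs is nothing here, though the paper's uniform-convergence machinery is the one that generalizes to the \(\ell_1\) feasible cones in Appendix~\ref{sec:asympt-thresh-calc}, where no clean non-asymptotic bound is available. Your handling of the boundary index \(i=d-1\) and of the uniformity of \(\varepsilon\) over the index range is correct.
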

\begin{proof}
  It is well known (see,
  e.g.,~\cite[Eq.~(3.4)]{Donoho2006}) that, for any \(\theta \in
  [0,1]\), we have
  \begin{equation}\label{eq:binom-entropy-limit}
    \frac{1}{d} \log\binom{d}{\lceil \theta d \rceil} \to H(\theta) \text{
      uniformly in \(\theta\) as } d \to \infty,
  \end{equation}
  where 
  \begin{equation}
    H(\theta) \defeq -\theta \log(\theta) - (1-\theta)
    \log(1-\theta)\label{eq:entropy}
  \end{equation}
  is the natural entropy; be aware that the logarithms are
  base-\(\econst\) rather than the customary base-\(2\) used in
  information theory.  Basic calculus shows that \(H(\theta)\)
  achieves its unique maximum at \(\theta_\star = \frac{1}{2}\), where
  it has maximum value \(H\bigl(\frac{1}{2}\bigr)=\log(2)\).
  
  Let \(\theta > \frac{1}{2}\).  By continuity of \(H\), there is
  an \(\varepsilon >0\) such that \(H(\tilde\theta)<\log(2)-\eps\)
  for all \(\tilde \theta \ge \theta > \theta_\star\).  Continuity of
  the exponential and the uniform convergence
  in~\eqref{eq:binom-entropy-limit} together imply that, for all
  sufficiently large \(d\in \Index\), and any \(i \ge \lceil \theta d\rceil\),
  we have
  \begin{equation*}
    2^{-d} \binom{d}{i} \le \exp\bigl( - d \log(2) +
      d (\log(2) - \varepsilon)\bigr) = \mathrm{e}^{-\varepsilon d}.
  \end{equation*}
  The left-hand side is equal to the spherical intrinsic volume
  \(v_{i}(\R^d_+)\) by
  Proposition~\ref{prop:orthant-intrinsic-volumes}, so we see that
  \(\theta_\star = \frac{1}{2}\) is an upper decay threshold for
  \(\{\R^d_+\mid d \in \Index\}\).

  The proof that \(\kappa_\star = \frac{1}{2}\) is a lower decay
  threshold for \(\{\R^d_+\mid d \in \Index\}\) follows along similar
  lines.  Briefly, for any \(\kappa< \frac{1}{2}\), there is an \(\eps
  > 0\) such that \(H(\tilde \kappa)< \log(2)-\eps\) for every
  \(\tilde \kappa \le \kappa\).  For the same reasons as before, when
  \(d\) is large enough, we have
  \begin{equation*}
    2^{-d} \binom{d}{i} \le \exp\bigl( - d \log(2) +
      d (\log(2) - \varepsilon)\bigr) = \mathrm{e}^{-\varepsilon d}
  \end{equation*}
  for all \(i\le \lceil \kappa d\rceil\).  We conclude that
  \(\kappa_\star = \frac{1}{2}\) is a lower decay threshold for the
    ensemble of nonnegative orthants \(\{\R^d_+\mid d \in \Index\}\).
\end{proof}

Figure~\ref{fig:orth-vols-norm} illustrates the computation above.  We discuss other approaches for finding decay thresholds in
Section~\ref{sec:line-inverse-probl-1}. Table~\ref{tab:aeubs-from-linear-inverse} summarizes the decay thresholds determined in this work.

\begin{table}[t!]
  \renewcommand{\arraystretch}{1.3}
  \centering
  \caption{\textsl{Decay of intrinsic volumes.}  The decay thresholds
    developed in this work appear below.  The computations for the orthant and subspace appear in Section~\ref{sec:return-two-exampl}, while the computations for the feasible cone of the Schatten 1-nrom and spectral norm appear in Section~\ref{sec:using-gaussian-width}. Other computations appear where indicated.
  }
  \label{tab:aeubs-from-linear-inverse}
  \begin{tabular}{p{4.3cm}ccc}
    \toprule
    \multirow{2}{*}{\textbf{Cone ensemble}} & \multicolumn{2}{c}{\textbf{Level zero}}
    & \textbf{Level \(\psi\)  }
    \\ 
    \cmidrule(l{1em}r){2-4}
    & Upper threshold \(\theta_\star\) & Lower threshold \(\kappa_\star\) &    \(\theta_\star(\psi)\) 
    \\    
    \midrule
    Subspaces of dimension~\(\lceil \sigma d \rceil\) & 
    \(\sigma\)&
    \(\sigma\) & 
    \(\sigma\)
    \\
    Orthants \(\R^d_+\) & \(\frac{1}{2}\) 
    & \(\frac{1}{2}\) 
    & (See
    Prop.~\ref{prop:upper-bd-psi-orth})
    \\
  \begin{tabular}{@{}p{4.3cm}@{}}
    Feasible cones of \(\ell_1\) norm at \(\tau\)-sparse vectors
  \end{tabular}
  & %
    &
    \makebox[0pt][c]{    \begin{tabular}{@{}c@{}}
      (See Prop.~\ref{prop:l1-decay} in Appendix~\ref{sec:asympt-thresh-calc})
    \end{tabular}
  }
    \\

    \begin{tabular}{@{}p{4.3cm}@{}}
    Feasible cones of the Schatten 1\nobreakdash-norm  at  \(n\times n\) 
    rank  \(\lceil \rho n\rceil\) matrices
  \end{tabular}
    &
    \begin{tabular}{@{}c@{}}
      \(6\rho-3\rho^2\) \\[-4pt]
     (See also Rem.~\ref{rem:OH10}.)
    \end{tabular}
    & & 
    \\
    \begin{tabular}{@{}p{4.3cm}@{}}
      Feasible cones of the spectral norm at an orthogonal matrices 
    \end{tabular}
    & 
    \(\frac{3}{4}\)
    & & 
    \\
    \bottomrule %
\end{tabular}
\end{table}

\section{Success and failure}
\label{sec:find-weak-thresh}

This section synthesizes the material from
Sections~\ref{sec:generic-setup-main} and~\ref{sec:analys-via-integr}
to determine whether the \CDMq~\eqref{eq:genconv} succeeds, or fails,
with high probability.  Section~\ref{sec:asymptotic-regime} introduces
the concept of a demixing ensemble.  Our main results arrive in
Section~\ref{sec:main-results}, where we find that the success and
failure of the \CDMq~\eqref{eq:genconv} are characterized by decay
thresholds.  Section~\ref{sec:strong-phase-trans} extends our methods
to achieve uniform guarantees on the success of
method~\eqref{eq:genconv}.

\subsection{Ensembles of demixing problems}
\label{sec:asymptotic-regime}
A demixing ensemble is a collection of demixing problems
that is indexed by the ambient dimension of the observation.  We
explain this idea in the context of MCA, and we develop the abstract
definition in Section~\ref{sec:deconv-ensembl-gener}.

\subsubsection{Example: The MCA demixing ensemble}
\label{sec:mca-high-dimensions}

Recall from Section~\ref{sec:first-appl-deconv} that MCA seeks to
demix a superposition of two sparse vectors.  Let us fix sparsity
levels \(\tau_{\vct x}\) and \(\tau_{\vct y}\) in \([0,1]\).  For each
pair \((\tau_{\vct x},\tau_{\vct y})\), we construct an ensemble of
demixing problems with one problem per dimension.  For
each \(d \in \mathbb{N}\), let \(\vct x_0^{(d)}\) and \(\vct
y_0^{(d)}\) be vectors in \(\R^d\) with 
\begin{equation*}
\nnz(\vct x_0^{(d)}) = \lceil \tau_{\vct x} d\rceil \quad \text{and}
\quad \nnz(\vct y_0^{(d)})\ = \lceil \tau_{\vct y}d\rceil.  
\end{equation*}
In other words, the sparsity of each vector is proportional to the
ambient dimension.  Draw a random basis \(\mtx Q^{(d)}\) from
\(\mathsf{O}_d\).  We observe the vector \(\vct z_0^{(d)} = \vct
x_0^{(d)} + \mtx Q^{(d)} \vct y_0^{(d)}\). To set up a convex
demixing method, we need to introduce appropriate complexity
measures for \(\vct x_0^{(d)}\) and \(\vct y_0^{(d)}\).  For both
vectors, the \(\ell_1\) norm on \(\R^d\) is the natural choice for inducing sparsity. Assume
we have access to the side information \(\alpha^{(d)} =
\lone{\smash{\vct y_0^{(d)}}}\).

Together, these data define a demixing ensemble for MCA. We want
to study when the MCA problem~\eqref{eq:genconv} succeeds with high
probability for all members of the ensemble with \(d\) sufficiently
large.

\subsubsection{Abstract demixing ensembles}
\label{sec:deconv-ensembl-gener}

It is straightforward to extend this idea to other demixing
problems.  Let \(\Index \subset \mathbb{N}\) be an infinite set of
indices. A \emph{demixing ensemble} consists of one problem per
index.  For each \(d \in \Index\), the data are
\begin{itemize}\renewcommand{\itemsep}{0pt}
\item Vectors \(\vct x_0^{(d)},\vct y_0^{(d)} \in \R^d\),
\item A random basis \(\mtx Q^{(d)}\in \mathsf{O}_d\) that is statistically
  independent of the other ensemble data,
\item The observation
\begin{math}
  \vct z_0^{(d)} = \vct x_0^{(d)} + \mtx Q^{(d)} \vct y_0^{(d)} \in \R^d,
\end{math}
\item Complexity measures \(f^{(d)}\) and \(g^{(d)}\) defined on
  \(\R^d\), and
\item The side information \(\alpha^{(d)} = g^{(d)}(\vct y_0^{(d)})\).
\end{itemize}
 Given such a demixing ensemble, we seek to determine
conditions for which the \CDMq
\begin{equation}\label{eq:genconv-by-dimension}
  \minprog{}{f^{(d)}(\vct x)}{ g^{(d)}(\vct y)
    \le \alpha^{(d)}\;\;\text{and}\;\; \vct x + \mtx Q^{(d)} \vct y
    = \vct z_0^{(d)},} 
\end{equation}
succeeds with high probability when the dimension \(d\) is large. When it does not cause confusion, we  omit the superscript \(d\).

Our goal in this paper is to describe regions where the demixing program~\eqref{eq:genconv-by-dimension} succeeds, or fails, with high probability as the dimension \(d\to \infty\).  In order to avoid cumbersome repetition in our theorems, we make a shorthand definition.
\begin{definition}[Overwhelming probability in high dimensions]
  Given a demixing ensemble as in
  Section~\ref{sec:deconv-ensembl-gener}, we say
  that~\eqref{eq:genconv-by-dimension} \emph{succeeds with
    overwhelming probability in high dimensions} if there exists an
  \(\varepsilon > 0\) such that, for every sufficiently large
  dimension \(d\in \Index\), program~\eqref{eq:genconv-by-dimension}
  succeeds with probability at least \(1-\mathrm{e}^{-\varepsilon d}\)
  over the randomness in \(\mtx Q^{(d)}\).  Similarly, we say
  that~\eqref{eq:genconv-by-dimension} \emph{fails with overwhelming
    probability in high dimensions} if there exists an \(\varepsilon
  >0\) such that, for every sufficiently large dimension \(d \in
  \Index\), program~\eqref{eq:genconv-by-dimension} succeeds with
  probability at most \(\mathrm{e}^{-\varepsilon d}\).
\end{definition}
Following common practice in asymptotic analysis, the definition above masks the dependence between the probability decay rate \(\eps\) and the ``sufficiently large'' dimension \(d_0\).  This approach represents a tradeoff. We will find that it provides demixing guarantees in terms of only decay thresholds, but it does not provide explicit probabilistic bounds for finite \(d\).  Nevertheless, our asymptotic analysis proves quite accurate at predicting the behavior in our experiments.  See Remark~\ref{rem:explicit-prob-bds} for further discussion.

\subsection{The main results}
\label{sec:main-results}

We are now in a position to state our main results. The first result
shows that the upper decay threshold provides guarantees for the
success of demixing under the model of
Section~\ref{sec:asymptotic-regime}.
\begin{theorem}[Success of demixing]
  \label{thm:sharp-thresh}
  Consider a demixing ensemble as in
  Section~\ref{sec:deconv-ensembl-gener}. Suppose the ensembles
  \(\bigl\{\Fcone(f^{(d)},\vct x_0^{(d)})\mid d \in \Index\bigr\}\) and
  \(\bigl\{\Fcone(g^{(d)},\vct y_0^{(d)})\mid d\in \Index\bigr\}\) of feasible cones
  have upper decay thresholds \(\theta_{\vct x}\) and \(\theta_{\vct
    y}\). If
  \[\theta_{\vct x} + \theta_{\vct y} < 1,\] 
  then program~\eqref{eq:genconv-by-dimension} succeeds with
  overwhelming probability in high dimensions.
\end{theorem}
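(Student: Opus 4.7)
The plan is to combine the deterministic geometric optimality condition from Lemma~\ref{lem:main-geom-introduction} with the spherical kinematic formula (Fact~\ref{fact:sphere-kin}), and then exploit the upper decay thresholds to show that every term in the resulting sum is exponentially small in~\(d\). First I would recast the failure event: by Lemma~\ref{lem:main-geom-introduction}, the method~\eqref{eq:genconv-by-dimension} fails precisely when \(\Fcone(f^{(d)},\vct x_0^{(d)}) \cap \bigl(-\mtx Q^{(d)} \Fcone(g^{(d)},\vct y_0^{(d)})\bigr)\) contains a nonzero vector. The intersection of these cones is contained in the intersection of their closures, and the basis invariance of intrinsic volumes (Fact~\ref{fact:obvious}.\ref{item:obv-invariant}) lets me absorb the sign, so it is enough to upper bound the probability that \(\overline{\Fcone(f,\vct x_0)} \cap \mtx{Q}\overline{\Fcone(g,\vct y_0)}\) is nontrivial.

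Next, I would choose auxiliary thresholds \(\theta_{\vct x}'\) and \(\theta_{\vct y}'\) with \(\theta_{\vct x} < \theta_{\vct x}'\), \(\theta_{\vct y} < \theta_{\vct y}'\), and \(\theta_{\vct x}' + \theta_{\vct y}' < 1\); the assumption \(\theta_{\vct x}+\theta_{\vct y}<1\) guarantees such a pair exists. By Definition~\ref{def:aeub}, there exists \(\varepsilon>0\) such that, for \(d\) sufficiently large in \(\Index\),
\begin{equation*}
v_i\bigl(\overline{\Fcone(f,\vct x_0)}\bigr) \le \econst^{-\varepsilon d} \text{ for all } i\ge \lceil \theta_{\vct x}' d\rceil, \qquad
v_j\bigl(\overline{\Fcone(g,\vct y_0)}\bigr) \le \econst^{-\varepsilon d} \text{ for all } j\ge \lceil \theta_{\vct y}' d\rceil.
\end{equation*}
Assuming (for now) that at least one of the closures is not a subspace, apply the spherical kinematic formula~\eqref{eq:sphere-kin} and analyze each summand \(v_i(\overline{\Fcone(f,\vct x_0)}) \cdot v_{d-1-i+k}(\overline{\Fcone(g,\vct y_0)})\) in two cases. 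If \(i \ge \lceil \theta_{\vct x}' d\rceil\), the first factor is at most \(\econst^{-\varepsilon d}\) and the second is at most one. If instead \(i < \lceil \theta_{\vct x}' d\rceil\), then \(d-1-i+k \ge d-\lceil \theta_{\vct x}' d\rceil + k\); since \(\theta_{\vct x}'+\theta_{\vct y}'<1\) and \(k\ge0\), this exceeds \(\lceil \theta_{\vct y}' d\rceil\) for all large enough \(d\), so the second factor is at most \(\econst^{-\varepsilon d}\) and the first is at most one. Either way the product is bounded by \(\econst^{-\varepsilon d}\).

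To finish, observe the sum in~\eqref{eq:sphere-kin} has at most \(2d^2\) terms (the factor \(1+(-1)^k\) is at most \(2\)), so the failure probability is at most \(2d^2 \econst^{-\varepsilon d} \le \econst^{-\varepsilon' d}\) for any \(\varepsilon'\in(0,\varepsilon)\) and \(d\) sufficiently large. This is the overwhelming-probability bound we want. The only loose end is the degenerate case where \emph{both} closed feasible cones are subspaces, for which Fact~\ref{fact:sphere-kin} does not directly apply; here Proposition~\ref{prop:subspace-iv} forces the upper decay thresholds to coincide with the normalized dimensions, so \(\dim\overline{\Fcone(f,\vct x_0)}+\dim\overline{\Fcone(g,\vct y_0)} < d\) eventually, and a standard fact about random orthogonal transformations of subspaces in general position yields almost-sure trivial intersection. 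The principal bookkeeping hurdle will be verifying the index inequality \(d-\lceil\theta_{\vct x}' d\rceil + k \ge \lceil\theta_{\vct y}' d\rceil\) uniformly in \(k\) once \(d\) is large enough; everything else reduces to combining the kinematic formula with the exponential decay supplied by the hypothesis.
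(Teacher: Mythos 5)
Your proposal is correct and follows essentially the same route as the paper: reduce to the geometric condition of Lemma~\ref{lem:main-geom-introduction}, pass to closures, pick slack thresholds summing to less than one, apply the spherical kinematic formula so that in every term at least one intrinsic volume is exponentially small, and treat the two-subspace degeneracy by a general-position argument. The only cosmetic difference is that you bound the double sum term by term while the paper groups it into four partial sums \(\Sigma_1,\dotsc,\Sigma_4\); the underlying index split is identical.
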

The proof of this result appears in
Section~\ref{sec:proof-theo-sharp}.  This next result shows that the
failure of demixing is characterized by the lower decay
threshold.
\begin{theorem}[Failure of demixing]
  \label{thm:failure-thresh}
  Consider a demixing ensemble as in
  Section~\ref{sec:deconv-ensembl-gener}. Suppose the ensembles
  \(\bigl\{\Fcone(f^{(d)},\vct x_0^{(d)}) \mid d \in \Index\bigr\}\) and
  \(\bigl\{\Fcone(g^{(d)},\vct y_0^{(d)})\mid d \in \Index\bigr\}\) of
  feasible cones have lower decay thresholds \(\kappa_{\vct x}\) and
  \(\kappa_{\vct y}\).  If
  \[\kappa_{\vct x} + \kappa_{\vct y} >  1,\]
  then program~\eqref{eq:genconv-by-dimension} fails with overwhelming
  probability in high dimensions.
\end{theorem}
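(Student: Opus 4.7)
The plan is to convert failure into a geometric event via Lemma~\ref{lem:triv-intersect}, obtain an exact probability from the spherical kinematic formula, and then use the lower decay thresholds to squeeze this probability to $1$. By Lemma~\ref{lem:triv-intersect}, failure of~\eqref{eq:genconv-by-dimension} is equivalent to $\Fcone(f,\vct x_0)\cap\bigl(-\mtx Q\Fcone(g,\vct y_0)\bigr) \ne \{\vct 0\}$. Writing $K=\Fcone(f^{(d)},\vct x_0^{(d)})$ and $\tilde K=\Fcone(g^{(d)},\vct y_0^{(d)})$, basis invariance (Fact~\ref{fact:obvious}, with $\mtx U = -\Id$) gives $v_i(-\tilde K)=v_i(\tilde K)$, and Haar invariance lets me treat $-\mtx Q$ as a fresh random basis. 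Applying Fact~\ref{fact:sphere-kin} and reindexing via $j=d-1-i+k$ converts the factor $(1+(-1)^k)$ into a parity congruence, producing
\begin{equation*}
  \mathbb{P}\{\text{failure}\}
  \;=\; 2 \sum_{\substack{i,j\ge 0 \\ i+j\ge d-1 \\ i+j\equiv d-1\,(\mathrm{mod}\,2)}} v_i(K)\, v_j(\tilde K).
\end{equation*}

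To show this probability approaches $1$, I would invoke the spherical Gauss--Bonnet formula (Fact~\ref{fact:spherical-gauss-bonnet} in Appendix~\ref{sec:region-fail-proof}): for any cone that is not a subspace, $\sum_{i=-1}^{d-1}(-1)^i v_i(\cdot)=0$. Together with the normalization $\sum_i v_i=1$, this splits the mass at indices $i\ge 0$ into $\sum_{i\text{ even}} v_i = 1/2$ and $\sum_{i\text{ odd}} v_i = 1/2 - v_{-1}$. Multiplying these sums for $K$ and $\tilde K$ and collecting matching-parity contributions yields
\begin{equation*}
\sum_{\substack{i,j\ge 0 \\ i+j \equiv d-1\,(\mathrm{mod}\,2)}} v_i(K)\, v_j(\tilde K)
\;=\; \tfrac{1}{2} \;+\; O\bigl(v_{-1}(K)+v_{-1}(\tilde K)\bigr).
\end{equation*}
Subtracting the terms with $i+j<d-1$ from this identity furnishes the key inequality
\begin{equation*}
  \mathbb{P}\{\text{failure}\} \;\ge\; 1 \;-\; 2\!\!\sum_{i+j<d-1}\!\! v_i(K)\, v_j(\tilde K)\;-\;O\bigl(v_{-1}(K)+v_{-1}(\tilde K)\bigr).
\end{equation*}

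The final step uses the hypothesis $\kappa_{\vct x}+\kappa_{\vct y}>1$ to bound both the tail and the boundary correction.  I would pick $\kappa_{\vct x}'\in(1-\kappa_{\vct y},\,\kappa_{\vct x})$ and $\kappa_{\vct y}'\in(1-\kappa_{\vct x}',\,\kappa_{\vct y})$; these choices are available precisely because $\kappa_{\vct x}+\kappa_{\vct y}>1$, and they satisfy $\kappa_{\vct x}'+\kappa_{\vct y}'>1$. Any pair $(i,j)$ with $i+j<d-1$ must then satisfy $i\le\lceil\kappa_{\vct x}'d\rceil$ or $j\le\lceil\kappa_{\vct y}'d\rceil$ for $d$ large. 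Definition~\ref{def:aeub} bounds the offending intrinsic volume by $\mathrm{e}^{-\varepsilon d}$, and summing against the trivial bound $\sum v_\bullet \le 1$ controls the tail by $O(\mathrm{e}^{-\varepsilon' d})$. The same lower-decay estimate, applied at $i=-1$ (legitimate since $\kappa_{\vct x},\kappa_{\vct y}>0$), handles the boundary correction, and chaining everything gives $\mathbb{P}\{\text{failure}\}\ge 1-\mathrm{e}^{-\varepsilon'' d}$, the advertised overwhelming probability. The chief obstacles I anticipate are the careful Gauss--Bonnet accounting needed to go from the raw kinematic formula to the $1/2 + O(\cdot)$ identity, and the degenerate situation in which a feasible cone happens to be a subspace; in that case Gauss--Bonnet must be replaced by the explicit intrinsic volumes of Proposition~\ref{prop:subspace-iv}, which concentrate at a single index and reduce the claim to a direct calculation.
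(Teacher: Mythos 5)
Your proposal follows essentially the same route as the paper's proof (which goes through Theorem~\ref{thm:asympt-lower-bounds}): reduce failure to a nontrivial cone intersection via Lemma~\ref{lem:triv-intersect}, expand the probability with the spherical kinematic formula (Fact~\ref{fact:sphere-kin}), use the spherical Gauss--Bonnet identity (Fact~\ref{fact:spherical-gauss-bonnet}) to show the ``full'' sum is close to one, and kill the missing terms with the lower decay thresholds. Your bookkeeping is a bit slicker---recasting the kinematic sum as a symmetric double sum over pairs $(i,j)$ with $i+j\ge d-1$ and a parity constraint, and isolating a clean $\tfrac12+O\bigl(v_{-1}(K)+v_{-1}(\tilde K)\bigr)$ identity---where the paper instead manipulates nested sums with discrepancy terms $\tilde\xi_i$ and $\xi$; the two computations are equivalent, and your choice of $\kappa'_{\vct x},\kappa'_{\vct y}$ and the tail estimate are correct. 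Two points you gloss over deserve attention. First, the feasible cones need not be closed, and for the \emph{failure} direction the containment argument that suffices for success is useless: you need that the probability that the closures strike \emph{equals} the probability that the cones themselves strike. The paper invokes this via Remark~\ref{rem:touch-prob} (touching probabilities), noting it has no simple first-principles proof; as written, your argument only establishes that the closures intersect nontrivially. Second, the degenerate cases need to actually be carried out. When exactly one cone is a subspace the kinematic formula still applies, but Gauss--Bonnet fails for that cone, so your $\tfrac12+O(\cdot)$ identity breaks down; one must substitute $v_i(L)=\delta_{i,n-1}$ from Proposition~\ref{prop:subspace-iv}, deduce from the lower decay threshold that $\dim(L)\gtrsim \tilde\kappa d$, and apply Gauss--Bonnet only to the non-subspace cone. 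When both are subspaces the kinematic formula does not apply at all and the claim is a general-position dimension count. You correctly anticipate these sub-cases but do not execute them; with those details and the touching-probability citation filled in, the proof is complete.
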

The proof of this second result is similar in spirit to the proof of
Theorem~\ref{thm:sharp-thresh}, but it requires additional technical
finesse; we defer the details to Appendix~\ref{sec:region-fail-proof}.

\subsubsection{Consequences for the choice of complexity functions}
\label{sec:cons-design-deconv}
The complementary nature of 
Theorems~\ref{thm:sharp-thresh} and~\ref{thm:failure-thresh} has striking
implications. The success, or failure, of the \CDMq~\eqref{eq:genconv-by-dimension}
in high dimensions depends only on the sum of the decay thresholds.
As a consequence, the upper and lower decay thresholds assess the
quality of the complexity measures \(f^{(d)}\) and \(g^{(d)}\) in high
dimensions.

Since the decay thresholds are independent of any interrelationship
between the structured vectors \(\vct x_0^{(d)}\) and \(\vct y_0^{(d)}\) in the
superimposed observation \(\vct z_0^{(d)}\), the quality of the complexity
measure \(f^{(d)}\) is independent of the choice \(g^{(d)}\).  This explains, for
instance, the ubiquity of the use of the \(\ell_1\) norm for inducing
sparsity and the Schatten 1-norm as a complexity measure for
rank. Simply put, when a complexity measure is good for one
incoherent demixing problem, it is good for another.

\subsubsection{Sharp phase transitions}
\label{sec:thresh-phen-conj}

Theorems~\ref{thm:sharp-thresh}
and~\ref{thm:failure-thresh} have an important consequence for phase
transitions in the \CDMq~\eqref{eq:genconv-by-dimension}.  Equality
between the lower and upper decay thresholds  holds in
cases where we have access to exact formulas for the spherical
intrinsic volumes.  Proposition~\ref{prop:subspace-aeub} shows
that the upper and lower decay thresholds are equal for subspaces
whose dimension is proportional to the ambient space, and
Proposition~\ref{prop:orth-thresh} implies that the upper decay
threshold is equal to the lower decay threshold for the ensemble of
nonnegative orthants.  Moreover, our computations in
Appendix~\ref{sec:asympt-thresh-calc} suggest that equality between
the upper and lower decay thresholds also holds for the ensemble of
feasible cones of the \(\ell_1\) norm at vectors with a fixed
proportion of nonzero elements.

The equality of the upper and lower decay thresholds explains the
close agreement between our theoretical bounds and the empirical
experiments. For instance, consider
Figure~\ref{fig:l1-l1-demixing}. For sparsity levels below the
green curve, the sum of the upper decay thresholds is less than one,
so Theorem~\ref{thm:sharp-thresh} implies that demixing succeeds
with overwhelming probability in high dimensions.  On the other hand,
with sparsity levels above the green curve, the sum of the lower decay
thresholds exceeds one, so demixing fails with overwhelming
probability in high dimensions by Theorem~\ref{thm:failure-thresh}.
(See Section~\ref{sec:deconv-sparse-vect} for the details of this
calculation.)

One may wonder whether the transition between success and failure is
sharp in general. In work undertaken after the submission of this article, with collaborators, we have determined that the answer to this question is \emph{yes} for a large class of demixing ensembles.  In essence, the result~\cite[Thm.~6.1]{AmeLotMcC:13} indicates that the upper- and lower-decay thresholds for a sufficiently regular ensemble \(\{K^{(d)}\mid d \in \mathcal{D}\}\) are equal.
For example, the fact that  \(d^{-1}W(K^{(d)}\cap \mathsf{S}^{d-1})^2 \to \rho \in (0,1)\) as \(d\to \infty\) is sufficient to guarantee that the upper- and lower-decay thresholds \(\theta_\star\) and \(\kappa_\star\) of the ensemble \(\{K^{(d)}\mid {d\in \mathcal{D}}\}\) satisfy \(\theta_\star = \kappa_\star=\rho\).  (The function \(W\) is the Gaussian width defined in Section~\ref{sec:using-gaussian-width}.) We refer the reader to this newer work for details.

\subsubsection{Proof of Theorem~\ref{thm:sharp-thresh}}
\label{sec:proof-theo-sharp}

The proof of Theorem~\ref{thm:sharp-thresh} follows readily from a
geometric statement concerning the probability that a random cone
strikes a fixed cone as the dimension \(d\) becomes large.
\begin{theorem}\label{thm:asympt-thresh-gen}
  Suppose \(\Index\) is an infinite set of indices, and let \(\{K^{(d)}\subset
  \R^d\mid d \in \Index\}\) and \(\{\tilde K^{(d)} \subset \R^d\mid d \in \Index\}\)
  be two ensembles of closed convex cones with upper decay thresholds~
  \(\theta_\star\) and ~\(\tilde \theta_\star\).  If~ \(\theta_\star +
  \tilde \theta_\star < 1\), then there exists an \(
  \varepsilon >0\) such that, for all sufficiently large \(d\), we have \( \mathbb{P}\bigl\{ K^{(d)} \cap \mtx Q
  \tilde K^{(d)} \ne \{\vct 0\}\bigr\} \le \mathrm{e}^{- \varepsilon
    d}\).
\end{theorem}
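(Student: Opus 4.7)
The plan is to apply the spherical kinematic formula (Fact~\ref{fact:sphere-kin}) directly: the intersection probability is expressed as a double sum over products $v_i(K^{(d)}) \cdot v_{d-1-i+k}(\tilde K^{(d)})$, and I will argue that every single term in this sum is exponentially small in $d$. Since there are only $O(d^2)$ terms, this yields the desired exponential bound after absorbing the polynomial factor.

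First I would choose intermediate rates $\theta$ and $\tilde\theta$ with $\theta_\star < \theta$, $\tilde\theta_\star < \tilde\theta$, and $\theta + \tilde\theta < 1$; this is possible because the hypothesis gives strict inequality. By Definition~\ref{def:aeub} applied to both ensembles, there exists $\eps > 0$ such that, for all sufficiently large $d\in\Index$,
\begin{equation*}
v_i(K^{(d)}) \le \econst^{-\eps d} \text{ for all } i \ge \lceil \theta d\rceil, \qquad v_j(\tilde K^{(d)}) \le \econst^{-\eps d} \text{ for all } j \ge \lceil \tilde\theta d\rceil.
\end{equation*}

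Next comes the main geometric-combinatorial observation. For any pair $(k,i)$ appearing in~\eqref{eq:sphere-kin} with $0 \le k \le i \le d-1$, I claim that for $d$ sufficiently large, either $i \ge \lceil \theta d\rceil$ or $d-1-i+k \ge \lceil \tilde\theta d\rceil$. If both inequalities failed, then adding $i < \lceil \theta d\rceil$ and $d-1-i+k < \lceil \tilde\theta d\rceil$ would yield $d - 1 + k < \lceil \theta d\rceil + \lceil \tilde\theta d\rceil \le (\theta+\tilde\theta)d + 2$, hence $k \le (\theta+\tilde\theta - 1)d + 1$, which is negative once $d$ is large enough because $\theta + \tilde\theta < 1$—contradicting $k \ge 0$. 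Consequently, in every term of the sum, at least one of the two intrinsic-volume factors is at most $\econst^{-\eps d}$, while the other is at most $1$ by Fact~\ref{fact:obvious}. Multiplying by the factor $1 + (-1)^k \le 2$ and counting at most $d^2$ terms gives
\begin{equation*}
\mathbb{P}\bigl\{ K^{(d)} \cap \mtx Q \tilde K^{(d)} \ne \{\vct 0\}\bigr\} \le 2 d^2 \econst^{-\eps d} \le \econst^{-\eps' d}
\end{equation*}
for some $\eps' \in (0,\eps)$ and all $d$ large.

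The one wrinkle—and what I expect to be the only real obstacle—is the hypothesis in Fact~\ref{fact:sphere-kin} that at least one of the cones is not a subspace. If both ensembles consist entirely of subspaces, the spherical kinematic formula does not apply, but the claim is easy: by Proposition~\ref{prop:subspace-iv}, the decay thresholds equal the normalized dimensions $\sigma$ and $\tilde\sigma$, so $\dim K^{(d)} + \dim \tilde K^{(d)} \le (\theta + \tilde\theta)d + 2 < d$ for $d$ large, whence $K^{(d)} \cap \mtx Q \tilde K^{(d)} = \{\vct 0\}$ almost surely and the bound is trivial. Passing from closed convex cones to possibly non-closed ones in Theorem~\ref{thm:sharp-thresh} is handled by the closure convention in Definition~\ref{def:aeub}, and Theorem~\ref{thm:sharp-thresh} itself then follows by combining Theorem~\ref{thm:asympt-thresh-gen} with Lemma~\ref{lem:main-geom-introduction} applied to the feasible cones $\Fcone(f^{(d)},\vct x_0^{(d)})$ and $\Fcone(g^{(d)},\vct y_0^{(d)})$.
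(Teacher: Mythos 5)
Your proof is correct and follows essentially the same route as the paper: invoke the spherical kinematic formula, bound the parity factor by $2$, pick intermediate rates $\theta>\theta_\star$, $\tilde\theta>\tilde\theta_\star$ with $\theta+\tilde\theta<1$, observe that the index constraint $i + (d-1-i+k) \ge d-1$ forces at least one factor into its exponential-decay regime, and treat the all-subspace case by a dimension count. The only difference is organizational—you argue term by term where the paper splits the double sum into four blocks $\Sigma_1,\dots,\Sigma_4$—and the conclusions coincide.
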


Before proceeding to the proof, let us explain how
Theorem~\ref{thm:sharp-thresh} follows from
Theorem~\ref{thm:asympt-thresh-gen} and the geometric optimality
conditions of Lemma~\ref{lem:triv-intersect}.

\begin{proof}[Proof of Theorem~\ref{thm:sharp-thresh} from
  Theorem~\ref{thm:asympt-thresh-gen}]
  By the assumptions in Theorem~\ref{thm:sharp-thresh}, the
  ensembles \(\bigl\{{\overline{\Fcone}(f^{(d)}, \vct x_0^{(d)})}\mid
  d\in \Index\bigr\}\) and \(\bigl\{-\overline{ \Fcone}(g^{(d)},\vct
  y_0^{(d)})\mid d \in \Index\bigr\}\) of closed cones satisfy the
  hypothesis of Theorem~\ref{thm:asympt-thresh-gen}.  Thus, there is
  an \(\eps >0\) for which
  \begin{equation*}
    \overline{\Fcone}(f^{(d)}, \vct
    x_0^{(d)}) \bigcap\bigl( - \mtx Q \overline{\Fcone}(g^{(d)}, \vct
    y_0^{(d)})\bigr)=\{\vct 0\}
  \end{equation*}
  except with probability \(\mathrm{e}^{- \varepsilon d}\), for all
  sufficiently large \(d\).

  Since cones are contained in their closure, the two feasible cones
  have a trivial intersection at least as frequently as their closures
  (but see the remark below). Applying our geometric optimality
  condition, Lemma~\ref{lem:triv-intersect}, immediately implies
  that~\eqref{eq:genconv-by-dimension} succeeds with probability at
  least \(1-\mathrm{e}^{-\varepsilon d}\) for every sufficiently
  large \(d\).
\end{proof}

\begin{remark}\label{rem:touch-prob}
  In fact, the probability that randomly oriented convex cones
  strike is \emph{equal} to the probability that their closures strike.  This
  seemingly innocuous claim appears to have no simple proof from first
  principles.  However, this fact readily follows from the discussion
  of touching probabilities in~\cite[pp.~258--259]{Schneider2008}.
\end{remark}

Figure~\ref{fig:computation-diag} illustrates the main idea behind
the following proof.
\begin{figure}[t!]
  \centering
  \includegraphics[width=0.49\columnwidth]{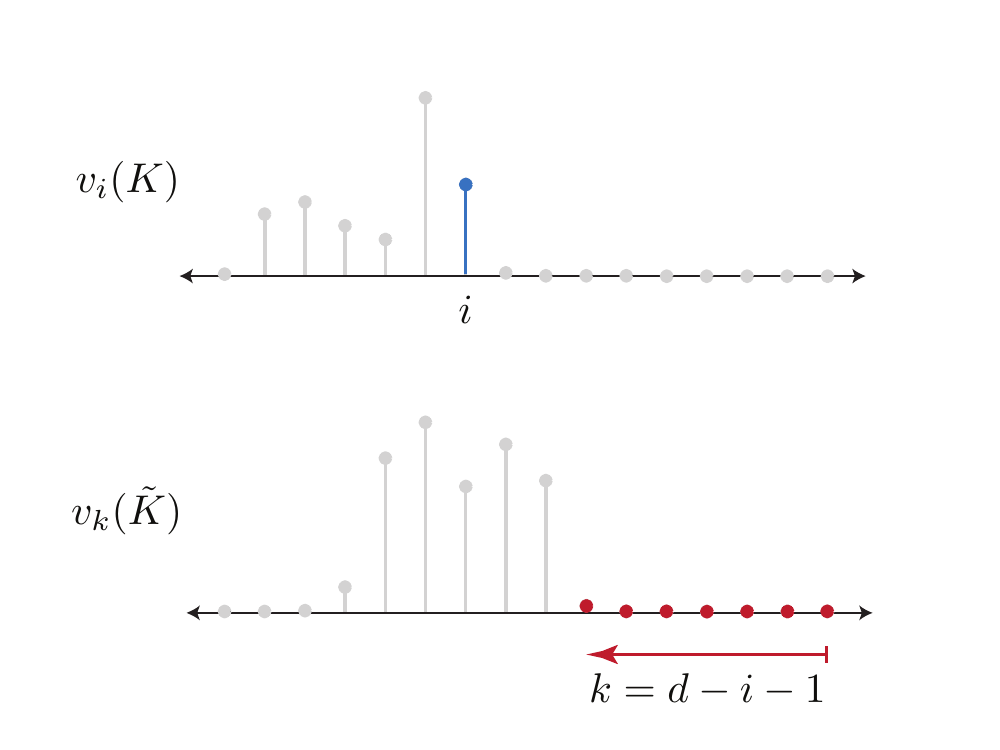}
  \includegraphics[width=0.49\columnwidth]{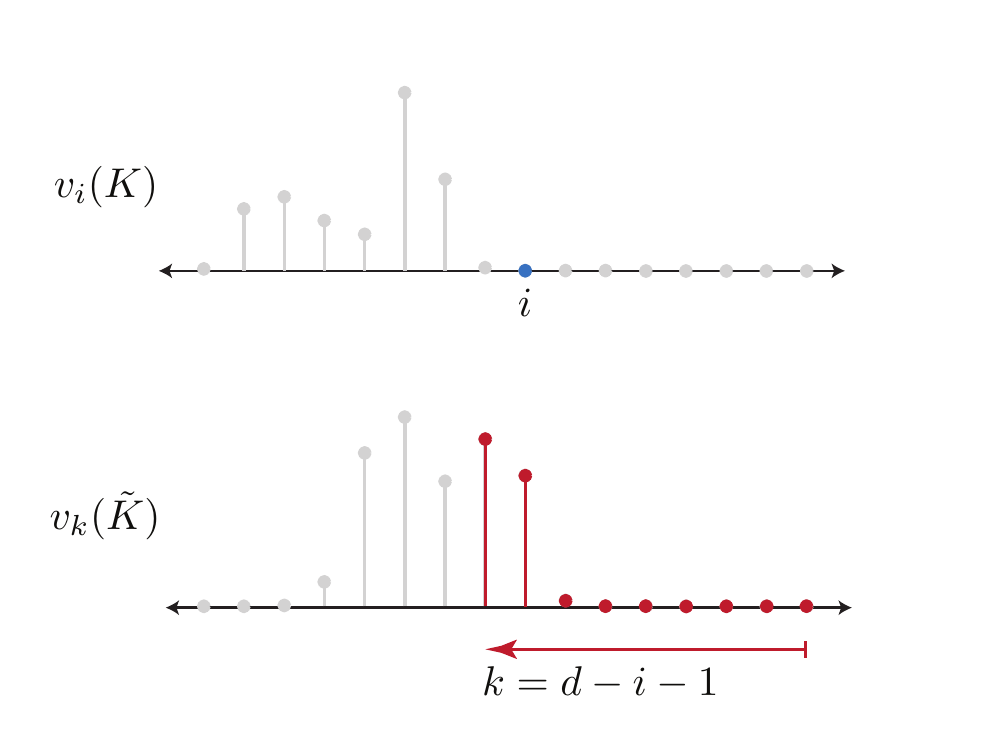}
  \caption{\textsl{Main idea behind the proof of
      Theorem~\ref{thm:asympt-thresh-gen}.}  When the spherical
    intrinsic volumes are very small for large \(i\), the product of the
    outer (top) and inner (bottom) terms in
    equation~\eqref{eq:bottom-sum-proof-of-main-claim1} is always
    small.  In the left panel, the large value of \(v_i(K)\) is offset
    by the small inner sum \(\sum_{k=d-i-1}v_k(\tilde K)\).  On the
    right, the small value of \(v_i(K)\) counteracts the large sum
    \(\sum_{k=d-i-1}v_k(\tilde K)\).  This situation always
    occurs when the upper decay thresholds satisfy \(\theta_\star+
    \tilde \theta_\star<1\) and the ambient dimension \(d\) is large.
  } \label{fig:computation-diag}
\end{figure}

\begin{proof}[Proof of Theorem~\ref{thm:asympt-thresh-gen}]
  The spherical kinematic formula~\eqref{eq:sphere-kin} only applies
  when at least one cone is not a subspace, so we split the
  demonstration into two cases.  First, we assume that at least one of
  the ensembles \(\{K^{(d)}\}\) or \(\{\tilde K^{(d)}\}\) of cones
  does not contain any subspace. Then, we consider the case where both
  ensembles of cones contain only subspaces.  The argument readily
  extends to the general case by considering subsequences where one of
  the two cases above holds.  Throughout the proof, we drop the explicit
  dependence of \(K^{(d)}\) and \(\tilde K^{(d)}\) on the dimension
  \(d\) for notational clarity.

  We start with the first case: assume that either \(K\) or \(\tilde
  K\) is not a subspace.  Our main tool is the spherical kinematic
  formula, Fact~\ref{fact:sphere-kin}.  The positivity of the
  spherical intrinsic volumes and the bound \((1+(-1)^k)\le
  2\) imply that  the probability \(P\) of interest satisfies
 \begin{align}
    P \defeq \mathbb{P}\bigl\{ K \cap \mtx Q \tilde K \ne
    \{\vct 0\}\bigr\} &\le  2 \sum_{k=0}^{d-1} \sum_{i=k}^{d-1} v_i(K)\cdot
    v_{d-1-i+k}(\tilde K)\notag \\ &=  2 \sum_{i=0}^{d-1} v_i(K) \sum_{k =
      d-i-1}^{d-1}  v_{k}(\tilde K).
    \label{eq:bottom-sum-proof-of-main-claim1}
  \end{align}
  The equality follows by a change in the order of summation and a
  change of the summation index.  Since \(\theta_\star+\tilde
  \theta_\star<1\), there exist parameters \(\theta>\theta_\star\) and
  \(\tilde \theta > \tilde \theta_\star\) for which \( \theta + \tilde
  \theta < 1\).  We expand the right-hand sum
  of~\eqref{eq:bottom-sum-proof-of-main-claim1} into four terms,
  say \(\Sigma_1\), \(\Sigma_2\), \(\Sigma_3\), and \(\Sigma_4\):
  \begin{multline}
    \frac{1}{2} P \le  \underbrace{\sum_{i=0}^{\lceil \theta d \rceil} v_i(K)
      \sum_{k = d- i-1}^{\lceil \tilde \theta d\rceil } v_k(\tilde K)
    }_{\eqdef \Sigma_1}
    +
    \underbrace{\sum_{i = \lceil \theta d \rceil + 1}^{d-1} v_i(K) \sum_{k = d -
      i-1}^{\lceil \tilde \theta d\rceil} v_k(\tilde K)}_{\eqdef
    \Sigma_2} 
    \\  + \underbrace{\sum_{i=0}^{\lceil \theta d \rceil} v_i(K) \sum_{k =
       \lceil \tilde \theta d \rceil + 1}^{d-1} v_k(\tilde K)}_{\eqdef \Sigma_3}
   +\underbrace{\sum_{i = \lceil \theta d \rceil + 1}^{d-1} v_i(K) \sum_{k =
       \lceil \tilde \theta d \rceil + 1}^{d-1} v_k(\tilde
     K)}_{\eqdef \Sigma_4}. 
   \label{eq:sigmas}
 \end{multline}
 We bound each summand separately.  First, the fact \(\theta +
 \tilde \theta < 1\) implies that, for all sufficiently large \(d\), we
 have \(\lceil \theta d \rceil + \lceil \tilde \theta d \rceil < d-1\).  We
 conclude that \(\Sigma_1=0\) when \(d\) is large enough: The inner
 sum is empty.

 To bound \(\Sigma_2\), apply
 Facts~\ref{fact:obvious}.\ref{item:obv-positivity}
 and~\ref{fact:obvious}.\ref{item:obv-unity} to the inner sum to find
 \begin{equation*}
   \Sigma_2 \le \sum_{i = \lceil \theta d \rceil + 1}^{d-1} v_i(K)
   \le (d-1) \mathrm{e}^{-\varepsilon' d},
 \end{equation*}
 where the second inequality holds for some \(\varepsilon'>0\) and all
 sufficiently large \(d\) owing to the definition of the upper decay
 threshold \(\theta_\star\).  Through analogous reasoning, the definition
 of \(\tilde \theta_\star\) gives the exponential bounds
 \begin{equation*}
   \Sigma_3 \le (d-1)\mathrm{e}^{-\varepsilon'' d}\quad
 \text{and} \quad \Sigma_4 \le (d-1) \mathrm{e}^{-\varepsilon''' d},
 \end{equation*}
 for some \(\varepsilon'',\varepsilon'''>0\) and all sufficiently
 large \(d\).  Taking \( \varepsilon \) sufficiently small (say
 \(\varepsilon = \frac{1}{2}
 \min\{\varepsilon',\varepsilon'',\varepsilon'''\}\)) and \(d\)
 sufficiently large gives the result for the first case. 
 
 For the second case, suppose that both \(K\) and \(\tilde K\) are
  subspaces.  Set \(n \defeq \dim(K)\) and \(\tilde n \defeq
 \dim(\tilde K)\).  Choose parameters \(\theta>\theta_\star\) and
 \(\tilde \theta >\tilde \theta_\star\) such that \(\theta+\tilde
 \theta < 1\). The definition of the upper decay threshold and
 Proposition~\ref{prop:subspace-iv} imply that
 \begin{equation*}
 v_{i}(K) = \delta_{i,n-1}\le \mathrm{e}^{-\eps' d}
\end{equation*}
for some \(\eps'>0\), all sufficiently large \(d\), and every
\(i\ge\lceil \theta d\rceil\).  In particular, this inequality implies
\(n-1 < \lceil \theta d \rceil\) for all \(d\) large enough.
Similarly, we find \(\tilde n-1 < \lceil \tilde \theta d \rceil \) for
all \(d\) sufficiently large.

Since \(\theta + \tilde \theta < 1\), we have
\begin{equation*}
n + \tilde n
\le \lceil \theta d \rceil +\lceil \tilde \theta d\rceil+2 < d
\end{equation*}
whenever \(d\) is large enough.  That is, the sum of the dimensions of
the subspaces is less than the ambient dimension.  Since two randomly
oriented subspaces are almost surely in general position, we see that
\(K\cap \mtx Q \tilde K = \{\zerovct\}\) with probability one, whenever
\(d\) is large enough.  This completes the second case, so we are
done.
\end{proof}
 \begin{remark}[Explicit dimensional dependence] \label{rem:explicit-prob-bds}
The methods above can provide explicit  dependence between the sum of the decay thresholds \(\theta_x +\theta_y\), the ``sufficiently large'' dimension and the  probability decay rate \(\eps\) when  detailed information about the intrinsic volumes is available.   In the case of the orthant, for example, Stirling's formula with remainder~\cite[Sec.~5.6.1]{OlvLozBoi:10} may  provide enough information.  For the descent cones of the \(\ell_1\) norm, it may be possible to achieve such explicit dependence using the approach of~\cite{DonTan:10}.  %
 \end{remark}
\subsection{Uniform demixing guarantees}
\label{sec:strong-phase-trans}

Suppose that \(\mtx Q\) is drawn at random and fixed. For example, in the MCA problem from Section~\ref{sec:first-appl-deconv}, we may observe a number of different images, but the structures that we expect to find (encoded by \(\mtx Q\)) are the same in each image. In this case, we ask whether~\eqref{eq:l1-const-1} demix \emph{every}  sparse pair \((\vct x_0,\vct y_0)\) from the associated observation \(\vct z_0 = \vct x_0 +\mtx Q \vct y_0\), and thus successfully identify the structures in a large family of images. 

More generally, we can study the probability that the generic demixing program~\eqref{eq:genconv} will demix every structured pair \((\vct x_0,\vct y_0)\) with a random---but fixed---basis \(\mtx Q\).  (These uniform guarantees go by the name of \emph{strong bounds}~\cite{Donoho2005a}, \cite{Donoho2006}, \cite{Donoho2009}, \cite{Donoho2010a}.)  By the geometric optimality condition of Lemma~\ref{lem:triv-intersect}, this probability is equal to the probability of the event
\begin{equation}\label{eq:strong-condition}
  \Fcone(f,\vct x_0) \bigcap \bigl(-\mtx Q\Fcone(g,\vct y_0)\bigr) =
  \{\zerovct\}
  \text{ for \emph{every} structured pair \((\vct x_0,\vct y_0)\).}
\end{equation}

In this section, we control the probability of~\eqref{eq:strong-condition} by coupling  the argument leading to Theorem~\ref{thm:sharp-thresh} with a union bound.   This approach does not necessarily limit our methods to a finite number of structured pairs \((\vct x_0,\vct y_0)\), however.  In the case of a sparse vectors, for example,  the set of feasible cones
\begin{equation}\label{eq:feas-cone-l1-strong}
  \bigl\{ \Fcone(\lone{\cdot},\vct x) \mid \vct x\in \R^d, \;\nnz(\vct x) = k\bigr\}
\end{equation}
consists of  \(\binom{d}{k}2^k\) cones because the feasible cone  \(\Fcone(\lone{\cdot},\vct x)\) depends only on the sparsity and sign pattern, and not the magnitude, of the elements of \(\vct x\).  (See Section~\ref{sec:l1-l1-strong-transitions} for more details.)  Thus, even the simple union bound can offer insight into the behavior of constrained MCA~\eqref{eq:l1-const-1} for an \emph{infinite} family of images.

Applying a union bound to a finite, but rather large, set of cones such as~\eqref{eq:feas-cone-l1-strong} requires stronger probabilistic information than provided by the decay thresholds.   Therefore, we  define an extension of the upper decay threshold that provides detailed information on the rate of decay of spherical
intrinsic volumes. 
\begin{definition}[Decay at level \(\psi\)]\label{def:decay-at-psi}
  Let \(\Index \subset \mathbb{N}\) be an infinite set of indices, 
  let \(\{K^{(d)}\mid d \in \Index\}\) be an ensemble of closed convex
  cones with \(K^{(d)}\in \R^d\), and suppose \(\psi \ge 0\).  We say that
  \(\theta_\star\) is an upper decay threshold \emph{at level
    \(\psi\)} for the ensemble \(\{K^{(d)}\}\) if, for every \(\theta
  > \theta_\star\), there exists an \(\varepsilon > 0\) such
  that, for all sufficiently large \(d\), the inequality
  \begin{equation*}\label{eq:exp-upper-bd-level-psi}
    v_i(K^{(d)}) \le \mathrm{e}^{-d(\psi + \varepsilon) }
  \end{equation*}
  holds for all \(i \ge \lceil \theta d\rceil\).  When no level is
  specified, we take \(\psi = 0\) for compatibility with
  Definition~\ref{def:aeub}.  As in Definition~\ref{def:aeub}, this
  definition extends to non-closed cones by taking the closure.
\end{definition}
Subspaces and orthants again provide  useful examples.  
\begin{proposition}[Decay at level \(\psi\) for an ensemble of subspaces]
  Let \(\{L^{(d)}\}\) be an infinite ensemble of linear subspaces with
  \(L^{(d)}\subset \R^d\) and
  \(\dim(L^{(d)}) = \lceil \sigma d\rceil\), and suppose \(\psi \ge
  0\).  Then \(\theta_{\star}=\sigma\) is an upper decay threshold for
  \(\{L^{(d)}\}\) at level~\(\psi\).
\end{proposition}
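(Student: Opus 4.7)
The plan is to reduce the claim directly to Proposition~\ref{prop:subspace-iv}, which gives the exact formula $v_i(L^{(d)}) = \delta_{i,n-1}$ where $n = \dim(L^{(d)}) = \lceil \sigma d\rceil$. In particular, $v_i(L^{(d)}) = 0$ whenever $i \ne n - 1$, so the entire content of the proposition is a statement about which indices $i$ eventually satisfy $i > n-1$.

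First I would fix an arbitrary $\theta > \sigma$. Since $\theta - \sigma > 0$, for all sufficiently large $d \in \Index$ we have $\lceil \theta d \rceil \ge \lceil \sigma d \rceil + 1 > n - 1$. Consequently, for every $i \ge \lceil \theta d \rceil$ and every large enough $d$, Proposition~\ref{prop:subspace-iv} gives
\begin{equation*}
v_i(L^{(d)}) = \delta_{i, n-1} = 0.
\end{equation*}

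The second step is to observe that the bound $v_i(L^{(d)}) \le \mathrm{e}^{-d(\psi + \varepsilon)}$ required by Definition~\ref{def:decay-at-psi} is then vacuous: we may choose any positive $\varepsilon$ (say $\varepsilon = 1$) and conclude $0 \le \mathrm{e}^{-d(\psi + \varepsilon)}$. This verifies the definition for the chosen $\theta$, and since $\theta > \sigma$ was arbitrary, $\theta_\star = \sigma$ is an upper decay threshold at level $\psi$.

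Because the computation rests on the exactness of Proposition~\ref{prop:subspace-iv} (subspaces have a single face, so all but one intrinsic volume vanishes identically), there is no real obstacle here; the level parameter $\psi$ plays no role because the relevant intrinsic volumes are literally zero rather than merely exponentially small. The proof is essentially a one-line corollary of Proposition~\ref{prop:subspace-aeub}, with the only change being that the trivial bound $0 \le \mathrm{e}^{-\varepsilon d}$ is replaced by the equally trivial bound $0 \le \mathrm{e}^{-d(\psi+\varepsilon)}$.
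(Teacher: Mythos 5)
Your argument is correct and is essentially identical to the paper's proof: both reduce to the exact formula \(v_i(L^{(d)}) = \delta_{i,n-1}\) from Proposition~\ref{prop:subspace-iv}, note that \(\lceil \theta d\rceil\) eventually exceeds \(n-1\) for any \(\theta > \sigma\), and observe that the required bound \(0 \le \mathrm{e}^{-d(\psi+\varepsilon)}\) is then vacuous (the paper takes \(\varepsilon = 1\), just as you suggest).
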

\begin{proof} The proof is substantially similar to the proof of
  Proposition~\ref{prop:subspace-aeub}. Let \(n = \dim(L^{(d)})\), so
  that \(v_i(L^{(d)}) = \delta_{i,n-1}\) by
  Proposition~\ref{prop:subspace-iv}.  For any \(\theta > \sigma\), we
  have \(\lceil \theta d \rceil > \lceil \sigma d \rceil\) for all
  large enough \(d\).  Therefore,
  \begin{equation*}
    v_i(L^{(d)}) = 0 < \econst^{-(\psi+1) d}
  \end{equation*}
  for all \(i\ge \lceil \theta d \rceil \) and \(d\) sufficiently
  large.  By definition, \(\theta_\star=\sigma\) is an upper decay
  threshold at level \(\psi\) for \(\{L^{(d)}\}\).
\end{proof}

\begin{figure}[t!]
  \centering
  \includegraphics[width=\figwidth\columnwidth]{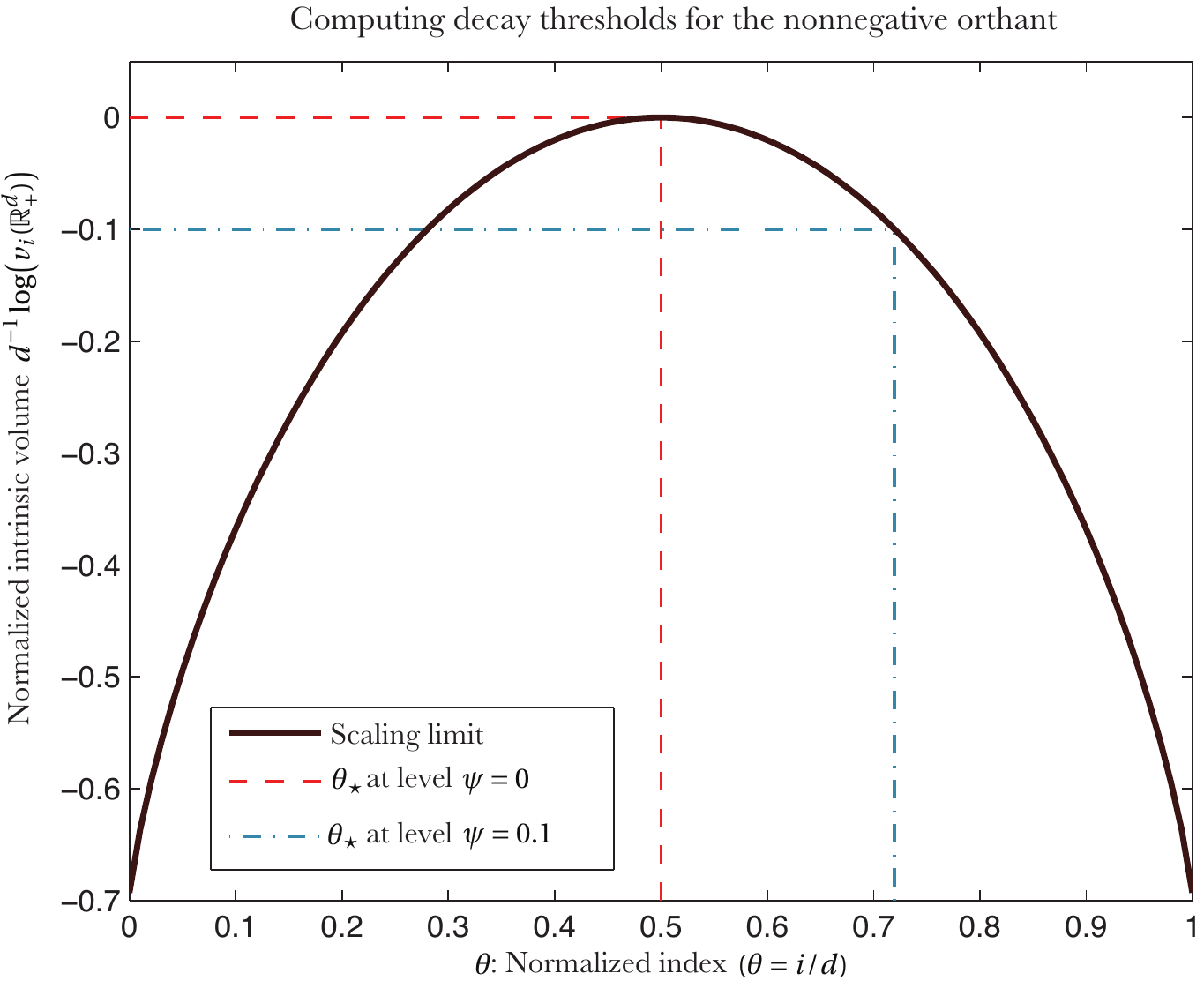}
  \caption{\textsl{Computation of decay thresholds for the orthant.}
    The solid curve is the limiting behavior of the spherical
    intrinsic volumes of the orthant \(\R^d_+\) from 
    Figure~\ref{fig:orth-vols-norm}.  The rightmost point where the
    horizontal line at level \(-\psi\) crosses this curve is an upper
    decay threshold at level \(\psi\).  From the diagram, we see that 
    \(\theta_\star=\frac{1}{2}\) is an upper decay threshold at level
    zero, while \(\theta_\star\approx 0.72\) is an upper decay
    threshold at level \(\psi= 0.1\).}\label{fig:orth-exponent}
\end{figure}

The computation for the orthant is only slightly more involved.  The
following result is illustrated in Figure~\ref{fig:orth-exponent}.
\begin{proposition}[Decay at level \(\psi\) for \(\{\R^d_+\}\)] 
  \label{prop:upper-bd-psi-orth}
  Suppose \(0\le \psi \le \log(2)\), and let \(\Index \subset
  \mathbb{N}\) be an infinite set of indices.  Define
  \begin{equation}\label{eq:upper-bd-psi-orth}
    \smash{\theta_{\R^d_+}(\psi)} \defeq 
    \sup{} \bigl\{\theta\mid H(\theta) \ge \log(2) -\psi \bigr\},
  \end{equation}
  where \(H(\theta)\) is the entropy defined in~\eqref{eq:entropy}.
  Then \(\smash{\theta_{\R^d_+}(\psi)}\) is an upper decay threshold at level
  \(\psi\) for the ensemble \(\{\R^d_+\}\) of orthants.
\end{proposition}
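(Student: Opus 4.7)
\textbf{Proof proposal for Proposition~\ref{prop:upper-bd-psi-orth}.} The plan is to reduce the claim to an exponential bound on a binomial coefficient and then extract the required exponent from the entropy limit~\eqref{eq:binom-entropy-limit}, mimicking the strategy of Proposition~\ref{prop:orth-thresh} but tracking the extra factor~$\econst^{-\psi d}$.

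First, observe that the exact formula~\eqref{eq:pos-orth-intrinsic-vol} from Proposition~\ref{prop:orthant-intrinsic-volumes} turns the desired inequality $v_i(\R^d_+) \le \econst^{-d(\psi+\eps)}$ into
\begin{equation*}
  2^{-d}\binom{d}{i+1} \le \econst^{-d(\psi+\eps)}
  \quad\Longleftrightarrow\quad
  \tfrac{1}{d}\log\binom{d}{i+1} \le \log(2) - \psi - \eps,
\end{equation*}
so it suffices to produce a uniform upper bound on the normalized log-binomial for all indices $i \ge \lceil \theta d\rceil$.

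Next, fix any $\theta > \theta_{\R^d_+}(\psi)$. Note that $\theta_{\R^d_+}(\psi) \ge \tfrac{1}{2}$ whenever $0\le\psi\le\log(2)$, since $H(\tfrac{1}{2})=\log(2) \ge \log(2)-\psi$ puts $\tfrac{1}{2}$ in the set defining the supremum in~\eqref{eq:upper-bd-psi-orth}. By the definition of the supremum, $H(\theta) < \log(2) - \psi$; by continuity of $H$ on $[0,1]$, there exists $\eps>0$ such that $H(\theta) < \log(2) - \psi - 2\eps$. Because $H$ is strictly decreasing on $[\tfrac{1}{2},1]$ and $\theta\ge\tfrac{1}{2}$, this gives the uniform inequality
\begin{equation*}
  H(\tilde\theta) \le H(\theta) < \log(2)-\psi-2\eps
  \qquad\text{for every } \tilde\theta\in[\theta,1].
\end{equation*}

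Finally, invoke the uniform limit~\eqref{eq:binom-entropy-limit} from the proof of Proposition~\ref{prop:orth-thresh}: for all sufficiently large $d$ and all $\tilde\theta\in[\theta,1]$,
\begin{equation*}
  \tfrac{1}{d}\log\binom{d}{\lceil \tilde\theta d\rceil} < H(\tilde\theta) + \eps < \log(2) - \psi - \eps.
\end{equation*}
Applied with $\tilde\theta = (i+1)/d$ for any index $i \ge \lceil \theta d\rceil$ (so $\tilde\theta \in [\theta,1]$ for $d$ large), and combined with the reduction in the first step, this yields $v_i(\R^d_+) \le \econst^{-d(\psi+\eps)}$, which is exactly the bound required by Definition~\ref{def:decay-at-psi}.

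The only mild subtlety is verifying that the uniform convergence in~\eqref{eq:binom-entropy-limit} may be applied simultaneously to the continuum of indices $i+1$ that can arise; this is handled by the uniformity in $\theta$ of that limit, together with the monotonicity of $H$ on $[\tfrac{1}{2},1]$, which is why I isolated those two facts above. Beyond this bookkeeping, the argument is a direct adaptation of Proposition~\ref{prop:orth-thresh} with the threshold $\log(2)$ replaced by $\log(2)-\psi$.
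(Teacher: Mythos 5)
Your proof is correct and follows essentially the same route as the paper's own (sketched) argument: reduce to the exact binomial formula for $v_i(\R^d_+)$, then use the uniform entropy limit~\eqref{eq:binom-entropy-limit} together with the monotonicity of $H$ on $[\tfrac{1}{2},1]$ to bound all indices $i \ge \lceil \theta d\rceil$ at once. You have merely filled in the details that the paper delegates to the proof of Proposition~\ref{prop:orth-thresh}.
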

\begin{proof}[Proof sketch]
  Let \(\theta > \smash{\theta_{\R_+^d}(\psi)}\).  As in the proof of
  Proposition~\ref{prop:orth-thresh}, there exists an \(\varepsilon
  >0\), such that, for all \(d\) sufficiently large, we have
  \begin{equation*}
    v_{i}(\R^d_+) \le \exp\Bigl(
    d\bigl(H(\theta_\psi) - \log(2) + \varepsilon\bigr)\Bigr) =
    \exp\bigl(-d(\psi + \eps)\bigr)
  \end{equation*}
  for every \(i \ge \lceil \theta d\rceil\). By definition,
  \(\smash{\theta_{\R^d_+}}(\psi)\) is an upper decay threshold at level
  \(\psi\) for the ensemble \(\{\R^d_+\}\).  
\end{proof}

We now extend the decay threshold to cover the family of cones
considered in condition~\eqref{eq:strong-condition}.  In our
applications, the feasible cones appearing
in~\eqref{eq:strong-condition} are  congruent, so we restrict
our attention to this case.
\begin{definition}[Decay threshold for an ensemble of sets of cones]
  For an infinite set of indices \(\Index \subset \mathbb{N}\), let
  \(\{\mathcal{K}^{(d)}\mid d \in \Index\}\) be an ensemble of
  \emph{sets} of  congruent cones, indexed by the ambient
  dimension, and let \(\{K^{(d)}\}\) be an ensemble of exemplars, that
  is, \(K^{(d)} \in \mathcal{K}^{(d)}\) for every \(d\in \Index\). We say that
  \(\{\mathcal{K}^{(d)}\}\) has upper decay threshold \(\theta\) at
  level \(\psi\) if the sequence of exemplars \(\{K^{(d)}\}\) has
  upper decay threshold \(\theta\) at level \(\psi\).
\end{definition}
By Fact~\ref{fact:obvious}.\ref{item:obv-invariant}, the decay
threshold for an ensemble \(\{\mathcal{K}^{(d)}\}\) of sets of
 congruent cones is independent of the choice of exemplars,
so this nomenclature is well defined.  We now state an analog to
Theorem~\ref{thm:asympt-thresh-gen} that bounds the probability that a
number of cones strike, provided there are not too many cones.
\begin{theorem}\label{thm:strong-thm}
  Let \(\{\mathcal{K}^{(d)} \}\) and \(\{
  \tilde{\mathcal{ K}}^{(d)}\}\) be two ensembles of
 {sets} of  congruent closed convex cones, indexed by
  ambient dimension \(d\).  Suppose the cardinality of
  \(\mathcal{K}^{(d)} \) and \(\tilde{\mathcal{K}}^{(d)}\) grows no
  faster than exponentially: there exist \(\psi,\tilde \psi\) such
  that, for every \(\eta >0\) and all \(d\) sufficiently large, we
  have the inequalities \(|\mathcal{K}^{(d)}| \le \mathrm{e}^{d (\psi
    + \eta)}\), and \(|\tilde{\mathcal{K}}^{(d)}| \le \mathrm{e}^{d
    (\tilde \psi +\eta)}\).  Suppose further \(\{\mathcal{K}^{(d)}
  \}\) and \(\{ \tilde{\mathcal{
      K}}^{(d)}\}\) have respective upper decay
  thresholds \(\theta_\star\) and \(\tilde \theta_\star\), each at
  level \(\psi+ \tilde \psi\).
  
  If~ \(\theta_\star + \tilde \theta_\star < 1\), then there exists an
  \(\varepsilon>0\) such that for every sufficiently large \(d\),
    \begin{equation*}
      \mathbb{P}\left\{  K \cap \mtx Q \tilde
        K \ne \{\vct 0\} \,\text{ for any }\,  K \in \mathcal{K}^{(d)},\;
        \tilde K \in \tilde{\mathcal{K}}^{(d)}\right\} \le
      \mathrm{e}^{-\varepsilon d},
  \end{equation*}
  where the probability is taken over the random basis \(\mtx Q \in \mathsf{O}_d\). 
\end{theorem}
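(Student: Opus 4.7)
The plan is to combine a union bound over pairs of cones with the single-pair estimate already established in the proof of Theorem~\ref{thm:asympt-thresh-gen}, using the stronger decay information (at level \(\psi+\tilde\psi\)) to offset the exponential cardinalities. Fix exemplars \(K^{(d)}\in\mathcal{K}^{(d)}\) and \(\tilde K^{(d)}\in\tilde{\mathcal{K}}^{(d)}\). Because any two cones in \(\mathcal{K}^{(d)}\) are congruent and the Haar measure on \(\mathsf{O}_d\) is invariant under the orthogonal group, for any \(K\in\mathcal{K}^{(d)}\) and \(\tilde K\in\tilde{\mathcal{K}}^{(d)}\) we have
\[
\mathbb{P}\bigl\{K\cap \mtx Q\tilde K\ne\{\vct 0\}\bigr\} = \mathbb{P}\bigl\{K^{(d)}\cap \mtx Q\tilde K^{(d)}\ne\{\vct 0\}\bigr\}.
\]
A union bound therefore yields
\[
\mathbb{P}\bigl\{K\cap \mtx Q\tilde K\ne\{\vct 0\}\text{ for some }K\in\mathcal{K}^{(d)},\;\tilde K\in\tilde{\mathcal{K}}^{(d)}\bigr\} \le |\mathcal{K}^{(d)}|\cdot|\tilde{\mathcal{K}}^{(d)}|\cdot\mathbb{P}\bigl\{K^{(d)}\cap\mtx Q\tilde K^{(d)}\ne\{\vct 0\}\bigr\}.
\]
By hypothesis, the product of cardinalities is at most \(\econst^{d(\psi+\tilde\psi+2\eta)}\) for any \(\eta>0\) and all large \(d\).

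The heart of the proof is to show that the single-pair probability decays like \(\econst^{-d(\psi+\tilde\psi+\varepsilon')}\) for some \(\varepsilon'>0\). Assuming first that \(K^{(d)}\) or \(\tilde K^{(d)}\) is not a subspace, I would reuse the derivation leading to~\eqref{eq:bottom-sum-proof-of-main-claim1} and~\eqref{eq:sigmas} in the proof of Theorem~\ref{thm:asympt-thresh-gen}, picking \(\theta>\theta_\star\) and \(\tilde\theta>\tilde\theta_\star\) with \(\theta+\tilde\theta<1\). The sum \(\Sigma_1\) vanishes for large \(d\) since its inner range is empty. For the three remaining sums, the upper decay threshold \emph{at level} \(\psi+\tilde\psi\) gives \(v_i(K^{(d)})\le\econst^{-d(\psi+\tilde\psi+\varepsilon')}\) once \(i\ge\lceil\theta d\rceil\), and similarly for \(\tilde K^{(d)}\); combined with the unit-sum bound from Fact~\ref{fact:obvious} on the complementary inner sum, each of \(\Sigma_2,\Sigma_3,\Sigma_4\) is bounded by \((d-1)\,\econst^{-d(\psi+\tilde\psi+\varepsilon')}\).

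Plugging this into the union bound gives
\[
\mathbb{P}\bigl\{K\cap \mtx Q\tilde K\ne\{\vct 0\}\text{ for some }K,\tilde K\bigr\} \le 6(d-1)\,\econst^{d(\psi+\tilde\psi+2\eta)-d(\psi+\tilde\psi+\varepsilon')} = 6(d-1)\,\econst^{-d(\varepsilon'-2\eta)}.
\]
Choosing \(\eta<\varepsilon'/4\) and then any \(\varepsilon\in(0,\varepsilon'-2\eta)\) absorbs the polynomial prefactor and yields the desired exponential bound for all large \(d\). The remaining case, in which both \(K^{(d)}\) and \(\tilde K^{(d)}\) are subspaces, is handled exactly as in the proof of Theorem~\ref{thm:asympt-thresh-gen}: decay at level \(\psi+\tilde\psi\) for a subspace ensemble forces the (unique, nontrivial) intrinsic volume to be at most \(\econst^{-d(\psi+\tilde\psi+\varepsilon')}\), which already means the dimension inequality \(\dim(K^{(d)})-1<\lceil\theta d\rceil\) holds and \(K^{(d)}\cap\mtx Q\tilde K^{(d)}=\{\vct 0\}\) almost surely; the union bound over at most \(\econst^{d(\psi+\tilde\psi+2\eta)}\) pairs then gives a probability of zero in the limit, which is trivially bounded by \(\econst^{-\varepsilon d}\).

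I expect the main obstacle to be purely bookkeeping: carefully choosing \(\theta,\tilde\theta\) close enough to \(\theta_\star,\tilde\theta_\star\), then \(\varepsilon'\) small enough to work simultaneously for both ensembles at level \(\psi+\tilde\psi\), then \(\eta\) small enough to leave room for a strictly positive final rate \(\varepsilon\). No new geometric idea beyond the proof of Theorem~\ref{thm:asympt-thresh-gen} is needed; the level-\(\psi\) refinement of the decay threshold is precisely what is engineered to absorb the \(|\mathcal{K}^{(d)}|\cdot|\tilde{\mathcal{K}}^{(d)}|\) factor.
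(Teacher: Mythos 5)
Your proposal is correct and follows essentially the same route as the paper's proof in Appendix~\ref{sec:proof-strong-theorem}: a union bound over the (congruent, hence identically distributed) pairs of cones, the same \(\Sigma_1,\dotsc,\Sigma_4\) decomposition from the proof of Theorem~\ref{thm:asympt-thresh-gen} with the level-\((\psi+\tilde\psi)\) decay absorbing the cardinality factor, and a separate almost-sure general-position argument when both cones are subspaces. Your bookkeeping of \(\eta\) versus \(\varepsilon'\) is, if anything, slightly more careful than the paper's.
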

The proof of Theorem~\ref{thm:strong-thm} simply couples a union bound
to the proof of Theorem~\ref{thm:asympt-thresh-gen}, so we defer the
demonstration to Appendix~\ref{sec:proof-strong-theorem}.  Note that
the statement of Theorem~\ref{thm:strong-thm} is equivalent to that of
Theorem~\ref{thm:asympt-thresh-gen} when \(\mathcal{K}^{(d)}\) and
\(\tilde{\mathcal{K}}^{(d)}\) are singletons, because we may take
\(\psi = \tilde \psi = 0\) in this case.

Theorem~\ref{thm:strong-thm} can be used to verify that
event~\eqref{eq:strong-condition} holds with high probability when the
dimension \(d\) becomes large.  In
Section~\ref{sec:l1-l1-strong-transitions}, we use this approach to
verify that the MCA formulation~\eqref{eq:l1-const-1} can demix
{all} sufficiently sparse vectors, and in
Section~\ref{sec:linf-strong-thresholds}, we use
Theorem~\ref{thm:strong-thm} to show that the channel coding
method~\eqref{eq:channel-code-1} is robust to adversarial sparse
corruptions.

\section{Computing decay thresholds}
\label{sec:line-inverse-probl-1}

Section~\ref{sec:main-results} demonstrates that the decay thresholds
provide a simple way to analyze demixing under the random basis
model.  This section describes several methods for computing decay
thresholds. We begin by considering direct approaches, where precise
formulas for the spherical intrinsic volumes give correspondingly
precise thresholds.

The direct method is powerful, but its application is limited to
regimes where we have access to formulas for spherical intrinsic
volumes.  In Section~\ref{sec:relat-line-inverse}, we observe that
known results on linear inverse problems~\cite{ChaRecPar:12}
imply bounds on decay thresholds.  This observation allows us to study
upper decay thresholds for several structural classes,  including
low-rank matrices.

\subsection{Direct approach}
\label{sec:direct-approach}

There are several situations where the direct approach for calculating
decay thresholds is feasible.  Propositions~\ref{prop:subspace-aeub}
and~\ref{prop:orth-thresh} compute upper and lower decay thresholds
for ensembles of subspaces and orthants directly from the definition
of spherical intrinsic volumes.  In
Appendix~\ref{sec:asympt-thresh-calc}, we use the asymptotic polytope
angle computations of~\cite{Donoho2006} to compute the decay threshold
for ensembles of feasible cones of the \(\ell_1\) norm at sparse
vectors.  The approach follows roughly the same lines as
Propositions~\ref{prop:orth-thresh} and~\ref{prop:upper-bd-psi-orth},
but the argument requires a good deal of background information that
is tangential to this work.

\subsection{Relationship to linear inverse problems}
\label{sec:relat-line-inverse}

There is a useful link between the number of random linear measurements
required to identify a structured signal with a convex complexity measure
and the upper decay threshold of the associated feasible
cone. Roughly speaking, the upper decay threshold is the
ratio between the number of linear measurements required to identify a
structured signal and the ambient dimension.  This observation
provides a powerful method for determining decay thresholds.

Linear inverse problems are closely related to demixing problems.  Suppose we observe the linear
image \(\vct z_0 = \mtx A\vct x_0\), where \(\mtx A\) is a known
matrix and \(\vct x_0\) is a structured vector.  Given an associated convex complexity measure \(f\), Chandrasekaran et al.~\cite{ChaRecPar:12} study the
convex optimization program
\begin{equation}
  \label{eq:generic-lin-inv}
  \minprog{}{f(\vct x)}{\mtx A\vct x = \vct z_0.}
\end{equation}
These authors consider the question ``Given the data \(\vct z_0 =\mtx
A \vct x_0\), when is \(\vct x_0\) the unique optimal point
of~\eqref{eq:generic-lin-inv}?''  The answer to this question is closely related to our demixing problem. 

To place the linear inverse problem in our asymptotic framework, we consider an ensemble of problems indexed by the ambient dimension \(d\). Fix an undersampling
parameter \(\sigma \in [0,1]\).  For each \(d\) in some infinite set \(\Index\subset \mathbb{N}\) of indices, assume we are given a structured vector \(\vct x_0^{(d)}\in \R^d\), a Gaussian measurement matrix \(\mtx \Omega^{(d)}\in \R^{\lceil \sigma d\rceil \times d}\), the observation \(\vct z_0^{(d)} = \mtx{ \Omega}^{(d)} \vct x_0^{(d)} \in \R^d\), and a complexity measure \(f^{(d)}\) associated with the structure of \(\vct x_0^{(d)}\).  We attempt to identify \(\vct x_0^{(d)}\) by solving the optimization
problem
\begin{equation}
  \label{eq:lin-inv-by-dim}
  \minprog{}{f^{(d)}(\vct x)}{\mtx \Omega^{(d)}\vct x = \vct z_0^{(d)}}
\end{equation}
with decision variable \(\vct x\in \R^d\).  This method succeeds when
\(\vct x_0^{(d)}\) is the unique optimal point
of~\eqref{eq:lin-inv-by-dim}.  The following result shows that
the problem of computing the number of random linear measurements
needed to identify a structured vector is equivalent to determining an
upper decay threshold.
\begin{lemma}\label{lem:lin-inv-and-decay}
 Consider the ensemble described above.
  \begin{enumerate}\renewcommand{\itemsep}{0pt}
  \item \label{item: thresh-implies-success} Suppose the ensemble
    \(\bigl\{\Fcone(f^{(d)},\vct x_0^{(d)})\bigr\}\) of feasible cones
    has an upper decay threshold \(\theta_\star < \sigma \).
    Then~\eqref{eq:lin-inv-by-dim} succeeds with overwhelming
    probability in high dimensions.
  \item \label{item:success-implies-thresh} On the other hand, suppose
    the linear inverse program~\eqref{eq:lin-inv-by-dim} succeeds with
    overwhelming probability in high dimensions. Then the ensemble
    \(\bigl\{\Fcone(f^{(d)},\vct x_0^{(d)})\bigr\}\) of feasible cones
    has an upper decay threshold \(\theta_\star = \sigma\).
  \end{enumerate}
\end{lemma}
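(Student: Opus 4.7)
Both parts of the lemma rest on a single geometric optimality condition that mirrors Lemma~\ref{lem:triv-intersect}: the point $\vct x_0^{(d)}$ is the unique minimizer of~\eqref{eq:lin-inv-by-dim} if and only if $\Fcone(f^{(d)},\vct x_0^{(d)}) \cap \ker\mtx\Omega^{(d)} = \{\vct 0\}$. The proof of this condition is identical in structure to the ``$\Rightarrow$'' direction of Lemma~\ref{lem:triv-intersect}. Since $\mtx\Omega^{(d)}$ is Gaussian, its kernel is almost surely uniformly distributed on the Grassmannian of $(d - \lceil \sigma d \rceil)$-dimensional subspaces, so we may replace $\ker\mtx\Omega^{(d)}$ by $\mtx Q L^{(d)}$, where $L^{(d)}$ is any fixed subspace of that dimension and $\mtx Q$ is a random basis. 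Passing to the closure of the feasible cone costs nothing by Remark~\ref{rem:touch-prob}.

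For Part~\ref{item: thresh-implies-success}, Proposition~\ref{prop:subspace-aeub} shows that the ensemble $\{L^{(d)}\}$ has upper decay threshold $1-\sigma$. Since the hypothesis $\theta_\star<\sigma$ forces $\theta_\star + (1-\sigma) < 1$, Theorem~\ref{thm:asympt-thresh-gen} applies to the ensembles $\bigl\{\overline{\Fcone}(f^{(d)},\vct x_0^{(d)})\bigr\}$ and $\{L^{(d)}\}$, delivering the overwhelming success probability in high dimensions.

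For Part~\ref{item:success-implies-thresh}, write $K^{(d)} \defeq \overline{\Fcone}(f^{(d)},\vct x_0^{(d)})$ and let $\eps > 0$ be the decay rate from the hypothesis. The plan is to establish the (stronger) bound $v_m(K^{(d)}) \le \tfrac{1}{2}\mathrm{e}^{-\eps d}$ for \emph{every} $m \ge \lceil \sigma d \rceil$ and all sufficiently large $d$. The crucial ingredient is a monotonicity coupling: for any $m \ge \lceil \sigma d \rceil$, a Gaussian $m \times d$ matrix $\mtx\Omega'$ can be coupled to $\mtx\Omega^{(d)}$ so that the first $\lceil \sigma d\rceil$ rows coincide, forcing $\ker\mtx\Omega' \subseteq \ker\mtx\Omega^{(d)}$. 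Hence LIP with $m$ measurements also succeeds with probability at least $1-\mathrm{e}^{-\eps d}$; equivalently, $\mathbb{P}\{K^{(d)} \cap \mtx Q L_m \ne \{\vct 0\}\} \le \mathrm{e}^{-\eps d}$ for any fixed subspace $L_m$ of dimension $d - m$. Applying the spherical kinematic formula (Fact~\ref{fact:sphere-kin}) with $\tilde K = L_m$ and invoking $v_j(L_m) = \delta_{j,d-m-1}$ from Proposition~\ref{prop:subspace-iv} collapses the double sum on the right-hand side of~\eqref{eq:sphere-kin} to $2\sum_{k \text{ even},\, 0 \le k \le d-m-1} v_{m+k}(K^{(d)})$, which upper-bounds the $k=0$ term $2 v_m(K^{(d)})$. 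This yields $v_m(K^{(d)}) \le \tfrac{1}{2}\mathrm{e}^{-\eps d}$, and in particular verifies the definition of upper decay threshold at $\sigma$.

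The main technical obstacle is the parity restriction in the kinematic formula: at a single kernel dimension $m$, the formula only produces bounds on $v_i(K^{(d)})$ for $i - m$ even. Sweeping the index $m$ across the whole range $\lceil\sigma d\rceil \le m \le d - 1$ via the monotonicity coupling is precisely what covers all parities in a single stroke. One edge case also deserves attention: if $K^{(d)}$ happens to be a subspace, Fact~\ref{fact:sphere-kin} does not apply, but in that case the LIP success probability forces $\dim K^{(d)} + (d - \lceil\sigma d\rceil) \le d$ by a general-position argument for two random subspaces, so $\dim K^{(d)} \le \lceil\sigma d\rceil$ and Proposition~\ref{prop:subspace-iv} immediately yields $\theta_\star = \sigma$.
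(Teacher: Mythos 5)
Your proof is correct and follows essentially the same route as the paper's: reduce to the trivial-intersection condition between the feasible cone and the Gaussian kernel, identify the kernel with a randomly rotated fixed subspace, and for Part~\ref{item:success-implies-thresh} combine the spherical kinematic formula with Proposition~\ref{prop:subspace-iv}, positivity of the intrinsic volumes, and a monotonicity-under-containment argument, treating the case where the feasible cone is itself a subspace separately. The only (equally valid) variation is in how the parity factor \(\bigl(1+(-1)^k\bigr)\) is defeated: the paper fixes \(n=\lceil\sigma d\rceil\) measurements and additionally intersects with a codimension-one subspace of the kernel to capture the opposite parity, whereas you sweep the measurement count \(m\) over the whole range \(\lceil\sigma d\rceil\le m\le d-1\) via a Gaussian row-coupling and read \(v_m(K^{(d)})\) off the surviving \(k=0\) term each time.
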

The proof of Lemma~\ref{lem:lin-inv-and-decay} appears in
Appendix~\ref{sec:proof-thoer-coroll}, but we sketch the main ideas here.  The program~\eqref{eq:generic-lin-inv} identifies \(\vct x_0\)
precisely when the null space of \(\mtx A\) intersects the feasible
cone \(\Fcone(f,\vct x_0)\)
trivially~\cite[Prop.~2.1]{ChaRecPar:12}.  When \(\mtx A =\mtx \Omega\) is a
Gaussian matrix, its null space is a randomly oriented
subspace. Combining this fact with the kinematic formula~\eqref{eq:sphere-kin}  lets us compute the decay threshold using the number of observations required to recover a vector  with~\eqref{eq:generic-lin-inv} under a Gaussian measurement model.

In Appendix~\ref{sec:reconciliation}, we describe how our computation
of the upper decay threshold for the feasible cone of the \(\ell_1\)
norm at sparse vectors relates to the recovery guarantees for basis
pursuit explored in the series of
papers~\cite{Donoho2004}, \cite{Donoho2006}, \cite{Donoho2009}, \cite{Donoho2009a}, \cite{Donoho2010}.
Our approach also yields a sharp transition between success and
failure regimes for basis pursuit. See
Appendix~\ref{sec:matching-upper-bound} for the details.

\begin{remark}
  A counterpart to Lemma~\ref{lem:lin-inv-and-decay} that links the
  failure of linear inverse problems to the lower decay threshold
  \(\kappa_\star\) is readily derivable with the techniques used in
  this work.  This may enable the computation of lower decay
  thresholds through information-theoretic arguments~\cite{MR2597190}.
\end{remark}

\subsubsection{The upper decay threshold from the
  Gaussian width}
\label{sec:using-gaussian-width}
Lemma~\ref{lem:lin-inv-and-decay} provides a powerful tool for
computing upper decay thresholds.  Define the \emph{Gaussian width} of
a cone \(K\subset \R^d\) by
the expression
\begin{equation*}
  W(K\cap \mathsf{S}^{d-1}) \defeq 
  \mathbb{E}\left[\sup_{\vct x \in K\cap \mathsf{S}^{d-1}} 
    \langle \vct \omega, \vct x\rangle\right],
\end{equation*}
where the random vector \(\vct
\omega\) is drawn from the Gaussian distribution on \(\R^d\).  The
following corollary lets us determine upper decay thresholds from
Gaussian width bounds.  As usual, \(\Index\) is an infinite
subset of the natural numbers.
\begin{corollary}\label{cor:gauss-width-bd}
  Consider an ensemble \(\bigl\{\Fcone(f^{(d)},\vct x_0^{(d)})\subset
  \R^d\mid d \in \Index\bigr\}\) of feasible cones.  Suppose that
  \begin{equation}\label{eq:limsup-width}
    \limsup_{d\to \infty} \frac{1}{d} 
    W\bigl(\Fcone(f^{(d)},\vct x_0^{(d)})\cap \mathsf{S}^{d-1}\bigr)^2 \le \theta_\star.
  \end{equation}
  Then \(\theta_\star\) is an upper decay threshold (at level zero) for
  \(\bigl\{\Fcone(f^{(d)},\vct x_0^{(d)})\bigr\}\).
\end{corollary}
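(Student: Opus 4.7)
The plan is to reduce the statement to Lemma~\ref{lem:lin-inv-and-decay}\eqref{item:success-implies-thresh} by using the Gaussian width hypothesis to produce a high-probability recovery guarantee for the associated linear inverse problem~\eqref{eq:lin-inv-by-dim}.  The bridge between the Gaussian width and such a guarantee is Gordon's ``escape through the mesh'' inequality, which underlies the recovery result of Chandrasekaran et al.~\cite{ChaRecPar:12}: if a Gaussian measurement matrix has codimension exceeding the Gaussian width squared of the feasible cone, then its null space avoids the cone with high probability.

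Concretely, fix \(\theta > \theta_\star\); the task is to exhibit an \(\varepsilon > 0\) such that
\[
v_i\bigl(\Fcone(f^{(d)}, \vct x_0^{(d)})\bigr) \le \mathrm{e}^{-\varepsilon d}
\quad \text{for every } i \ge \lceil \theta d \rceil
\]
whenever \(d\) is large enough.  Choose an intermediate value \(\sigma\) with \(\theta_\star < \sigma < \theta\).  The hypothesis~\eqref{eq:limsup-width} guarantees that for all sufficiently large \(d \in \Index\), we have \(W\bigl(\Fcone(f^{(d)},\vct x_0^{(d)}) \cap \mathsf{S}^{d-1}\bigr)^2 < \sigma d\), with a slack that grows linearly in \(d\).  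Now consider the ensemble of linear inverse problems with Gaussian measurement matrix \(\mtx \Omega^{(d)} \in \R^{\lceil \sigma d \rceil \times d}\).  Gordon's inequality, combined with Gaussian concentration about the expected singular values, then shows that the null space of \(\mtx \Omega^{(d)}\) intersects the feasible cone only at zero except with probability at most \(\mathrm{e}^{-c d}\) for some \(c > 0\), since the gap \(\sqrt{\lceil \sigma d\rceil} - W\) is of order \(\sqrt{d}\).  Equivalently, program~\eqref{eq:lin-inv-by-dim} succeeds with overwhelming probability in high dimensions.

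Invoking Lemma~\ref{lem:lin-inv-and-decay}\eqref{item:success-implies-thresh}, the ensemble \(\bigl\{\Fcone(f^{(d)},\vct x_0^{(d)})\bigr\}\) then admits \(\sigma\) as an upper decay threshold.  By Definition~\ref{def:aeub}, there is some \(\varepsilon > 0\) for which \(v_i \le \mathrm{e}^{-\varepsilon d}\) whenever \(i \ge \lceil \theta d \rceil\) (applying the definition with the choice \(\theta > \sigma\)), which is precisely what was required.  Thus \(\theta_\star\) itself is an upper decay threshold at level zero.

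The main obstacle is converting the Gaussian width bound into an \emph{exponentially} small failure probability for the linear inverse problem, rather than merely a vanishing one.  This requires the quantitative form of Gordon's theorem, which bounds the probability that a codimension-\(m\) Gaussian subspace strikes a closed cone \(K\) by a quantity of order \(\exp\bigl(-(\sqrt{m} - W(K\cap \mathsf{S}^{d-1}))_+^2/2\bigr)\).  Our strict comparison \(\sigma > \theta_\star\) ensures that \(\sqrt{m} - W \gtrsim \sqrt{d}\) for all large \(d\), so the Gordon bound decays exponentially in \(d\).  Once this quantitative recovery guarantee is in hand, the remainder of the argument is a direct invocation of the earlier lemma and a rewrite using the definition of the decay threshold.
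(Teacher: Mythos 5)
Your proposal is correct and follows essentially the same route as the paper: convert the Gaussian width bound into an exponentially small failure probability for the Gaussian linear inverse problem via the quantitative Gordon-type result of~\cite[Cor.~3.3(1)]{ChaRecPar:12}, then invoke Lemma~\ref{lem:lin-inv-and-decay}\eqref{item:success-implies-thresh}. The only (welcome) difference is in the final bookkeeping: you interpose a strict intermediate value \(\sigma\in(\theta_\star,\theta)\) and unwind Definition~\ref{def:aeub} directly, which neatly avoids the ``limit of decay thresholds is a decay threshold'' step that the paper invokes but leaves unproved.
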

\noindent 
The proof relies the result~\cite[Cor.~3.3(1)]{ChaRecPar:12}, which asserts that the linear inverse problem~\eqref{eq:lin-inv-by-dim} succeeds with high probability when the number of measurements \(n^{(d)}\gtrsim W\bigl(\Fcone(f^{(d)},\vct x_0^{(d)})\cap \mathsf{S}^{d-1}\bigr)^2\).  Since \(\lceil\theta_\star d\rceil\) exceeds the width for large \(d\) by assumption~\eqref{eq:limsup-width},  taking \(n^{(d)}=\lceil \theta_\star d\rceil\)  results in high-probability success in~\eqref{eq:lin-inv-by-dim}.  Corollary~\ref{cor:gauss-width-bd} then follows from the second part of Lemma~\ref{lem:lin-inv-and-decay}.  The details appear   at the end of Appendix~\ref{sec:proof-thoer-coroll}.

This machinery allows us to compute upper decay thresholds in
situations involving matrix observations. For simplicity, we consider
only the space \(\R^{n\times n}\) of square \(n\times n\) matrices,
where \(n \in \mathbb{N}\).  The ambient dimension of this vector
space is \(d=n^2\), but we will index the observations by the
parameter \(n\).  This poses no difficulty, because it is equivalent
to indexing over the set \(\mathcal{D}=\{1^2,2^2,3^2,\dotsc\}\).
\begin{proposition}\label{prop:schatten1-bound}
  Fix \(\rho \in [0,1]\).  For each \(n\in \mathbb{N}\), let \(\mtx
  X_0^{(n)} \in \R^{n\times n}\) be a matrix with \(\rank(\mtx
  X_0^{(n)}) = \lceil \rho n\rceil\).  Then the corresponding ensemble
  \(\bigl\{\Fcone(\sone{\cdot},\mtx X_0^{(n)})\bigr\}\) of feasible
  cones has upper decay threshold \(\theta_\star = 6 \rho - 3\rho^2\).
\end{proposition}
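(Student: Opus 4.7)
The plan is to apply Corollary~\ref{cor:gauss-width-bd} directly: it suffices to bound the normalized squared Gaussian width of the feasible cone. Since the ambient dimension of $\R^{n\times n}$ is $d=n^2$, and we index over $\mathcal{D}=\{n^2 : n\in\mathbb{N}\}$, the corollary reduces the proposition to showing
\[
\limsup_{n\to\infty}\frac{1}{n^2}\, W\bigl(\Fcone(\sone{\cdot},\mtx X_0^{(n)})\cap \mathsf{S}^{n^2-1}\bigr)^{2}\;\le\; 6\rho - 3\rho^{2}.
\]

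To establish this bound, I would invoke the Gaussian width estimate for the descent cone of the Schatten 1-norm at a rank-$r$ matrix developed in Chandrasekaran, Recht, Parrilo and Willsky~\cite{ChaRecPar:12} (Proposition 3.11 of that paper and the discussion following it): for an $n\times n$ matrix $\mtx X_0$ of rank $r$,
\[
W\bigl(\Fcone(\sone{\cdot},\mtx X_0)\cap \mathsf{S}^{n^2-1}\bigr)^{2}\;\le\; 3r(2n-r).
\]
The standard route to this estimate is via the subdifferential formula $\partial \sone{\mtx X_0} = \{\mtx U\mtx V^\adj + \mtx W : \mtx U^\adj\mtx W = \mtx W\mtx V = \mtx 0,\ \opnorm{\mtx W}\le 1\}$, where $\mtx U,\mtx V$ come from a compact singular value decomposition of $\mtx X_0$. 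One bounds the squared width by $\inf_{\lambda\ge 0}\Expect\bigl[\dist(\mtx\Omega,\lambda\,\partial\sone{\mtx X_0})^{2}\bigr]$ with $\mtx\Omega$ a Gaussian matrix, splits the projection according to the block decomposition induced by $\mtx U$ and $\mtx V$, and applies the Gordon--Marchenko--Pastur bound $\Expect\opnorm{\mtx G}\le \sqrt{n-r}+\sqrt{n-r}$ to the operator-norm block. Optimizing over $\lambda$ yields the $3r(2n-r)$ bound. I would cite this rather than reproduce the calculation.

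Setting $r=\lceil \rho n\rceil$ and dividing by $n^2$,
\[
\frac{3r(2n-r)}{n^2}=3\cdot\frac{r}{n}\cdot\left(2-\frac{r}{n}\right)\;\longrightarrow\; 3\rho(2-\rho)=6\rho-3\rho^{2}
\]
as $n\to\infty$, since $r/n\to \rho$. Hence the limsup condition of Corollary~\ref{cor:gauss-width-bd} holds with $\theta_\star = 6\rho - 3\rho^{2}$, and the corollary delivers the claimed upper decay threshold.

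The principal obstacle is not in the present argument but in the background Gaussian-width computation that it imports. Producing the constant $3r(2n-r)$ (rather than, say, a cruder multiple) requires careful handling of the operator-norm block $\Expect\opnorm{\mtx P_{\mtx U^\perp}\mtx\Omega\mtx P_{\mtx V^\perp}}$ and the optimal choice of the subgradient scaling $\lambda$; any slack there directly weakens $\theta_\star$. Since the paper has already invoked the results of~\cite{ChaRecPar:12} as its primary source of Gaussian-width bounds, citing Proposition~3.11 of that work is the cleanest route. The remark~\ref{rem:OH10} mentioned in Table~\ref{tab:aeubs-from-linear-inverse} presumably notes the sharper Oymak--Hassibi estimates that could, in principle, refine $\theta_\star$ further.
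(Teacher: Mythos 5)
Your proposal is correct and follows essentially the same route as the paper: cite \cite[Prop.~3.11]{ChaRecPar:12} for the bound \(W^2 \le 3r(2n-r) = 6rn - 3r^2\), divide by the ambient dimension \(n^2\), take the limit, and invoke Corollary~\ref{cor:gauss-width-bd}. The extra sketch of how the width bound is derived in the reference is fine but not needed; the paper simply cites it.
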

\begin{proof}
  Let \(r=r^{(n)}= \rank(\mtx X_0^{(n)})\).
  From~\cite[Prop.~3.11]{ChaRecPar:12}, we have
  \begin{equation}\label{eq:S1-wdth-bd}
    W\Bigl(\Fcone(\sone{\cdot},\mtx X_0^{(n)})\cap \mathsf{S}^{n^2-1}\Bigr)^2 
    \le 6 r n- 3r^2 = \bigl(6\rho  - 3\rho^2\bigr) n^2.
  \end{equation}
  Dividing both sides by the ambient dimension \(n^2\) and taking   limits, we see that the conditions of   Corollary~\ref{cor:gauss-width-bd} hold.  This gives the result.
\end{proof}
\begin{remark}\label{rem:OH10}
  An asymptotically sharp upper bound for the Gaussian width in~\eqref{eq:S1-wdth-bd} is given by the solution to an implicit equation  in~\cite[Eq.~(94)]{OymHas:10}.   We use this asymptotically precise formula %
 for computing the location of the green curve in  Figure~\ref{fig:l1-nuc-phase}.  See Section~\ref{sec:low-rank-matrices} for more details.
\end{remark}

\begin{proposition}\label{prop:orthogonal-bound}
  For each \(n\in \mathbb{N}\), let \(\vct X_0^{(n)} \) be an
  orthogonal matrix.  Then the corresponding ensemble
  \(\bigl\{\Fcone(\opnorm{\cdot},\mtx X_0^{(n)})\bigr\} \) of feasible
  cones has upper decay threshold \(\theta_\star = \frac{3}{4}\).
\end{proposition}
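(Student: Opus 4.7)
The plan mirrors the strategy of Proposition~\ref{prop:schatten1-bound}: bound the Gaussian width of the feasible cone, then invoke Corollary~\ref{cor:gauss-width-bd} with ambient dimension \(d = n^2\). The main work is to produce a Gaussian width bound of the form
\(W\bigl(\Fcone(\opnorm{\cdot},\mtx X_0^{(n)})\cap \mathsf{S}^{n^2-1}\bigr)^2 \le \tfrac{3}{4} n^2 + o(n^2)\),
after which Corollary~\ref{cor:gauss-width-bd} immediately yields \(\theta_\star = 3/4\) as an upper decay threshold. This bound is essentially~\cite[Prop.~3.13]{ChaRecPar:12}, which is precisely what is alluded to in the earlier bullet on the operator norm in Section~\ref{sec:struct-atom-guag}.

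To make the argument self-contained, I would derive the width bound by polar duality. First, identify the subdifferential and polar cone: since \(\mtx X_0^{(n)}\) is orthogonal, every unit vector is a top singular vector, so one can show
\(\subdiff \opnorm{\mtx X_0^{(n)}} = \{\mtx S\, \mtx X_0^{(n)} : \mtx S \succeq 0,\ \trace(\mtx S)=1\},\)
and consequently the polar cone \(\Fcone(\opnorm{\cdot},\mtx X_0^{(n)})^\circ\) equals the cone \(\{\mtx S\,\mtx X_0^{(n)} : \mtx S \succeq 0\}\). By the standard duality bound (see~\cite[Prop.~3.6]{ChaRecPar:12}),
\[
W\bigl(\Fcone(\opnorm{\cdot},\mtx X_0^{(n)})\cap \mathsf{S}^{n^2-1}\bigr)^2 \le \Expect\bigl[\dist(\mtx G,\,\Fcone(\opnorm{\cdot},\mtx X_0^{(n)})^\circ)^2\bigr],
\]
where \(\mtx G \in \R^{n\times n}\) is a standard Gaussian matrix.

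Next, I would compute this expected squared distance. Because \(\mtx X_0^{(n)}\) is orthogonal and the Gaussian distribution is rotationally invariant, the distance from \(\mtx G\) to \(\{\mtx S\,\mtx X_0^{(n)} : \mtx S \succeq 0\}\) equals the distance from a Gaussian matrix \(\tilde{\mtx G} = \mtx G (\mtx X_0^{(n)})^\adj\) to the PSD cone \(\Sspace{n}_+\). Decompose \(\tilde{\mtx G} = \tilde{\mtx G}_{\rm sym} + \tilde{\mtx G}_{\rm anti}\) into symmetric and antisymmetric parts. The antisymmetric part is Frobenius-orthogonal to \(\Sspace{n}_+\), and the squared distance from \(\tilde{\mtx G}_{\rm sym}\) to \(\Sspace{n}_+\) is the Frobenius norm squared of its negative part. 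A short calculation gives \(\Expect\|\tilde{\mtx G}_{\rm anti}\|_{\mathrm F}^2 = \tfrac{n(n-1)}{2}\) and \(\Expect\|\tilde{\mtx G}_{\rm sym}\|_{\mathrm F}^2 = \tfrac{n(n+1)}{2}\), and by the symmetry \(\tilde{\mtx G}_{\rm sym} \stackrel{d}{=} -\tilde{\mtx G}_{\rm sym}\) the negative part contributes exactly half of the latter in expectation. Summing,
\[
\Expect\bigl[\dist(\mtx G,\,\Fcone(\opnorm{\cdot},\mtx X_0^{(n)})^\circ)^2\bigr] \;=\; \tfrac{n(n-1)}{2} + \tfrac{n(n+1)}{4} \;=\; \tfrac{3n^2 - n}{4}.
\]

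Dividing by the ambient dimension \(n^2\) and letting \(n\to\infty\) gives \(\limsup_{n\to\infty}\frac{1}{n^2} W(\Fcone\cap\mathsf{S}^{n^2-1})^2 \le 3/4\). Applying Corollary~\ref{cor:gauss-width-bd} with ambient dimension indexed by \(d = n^2\) (over the subsequence \(\Index = \{1^2,2^2,\dots\}\)) produces the claimed upper decay threshold \(\theta_\star = 3/4\). The main obstacle is verifying the polar cone identity and carrying out the symmetric/antisymmetric splitting cleanly; both are standard but must be done with some care to ensure the negative-part calculation is genuinely symmetric in distribution.
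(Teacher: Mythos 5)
Your proposal is correct and follows the paper's proof exactly: invoke the Gaussian width bound \(W^2 \le (3n^2-n)/4\) from~\cite[Prop.~3.13]{ChaRecPar:12}, divide by the ambient dimension \(n^2\), and apply Corollary~\ref{cor:gauss-width-bd}. Your additional self-contained derivation of that width bound (polar cone \(=\{\mtx S\,\mtx X_0 : \mtx S \succeq 0\}\), reduction to the PSD cone by Frobenius isometry, and the symmetric/antisymmetric split giving \(\tfrac{n(n-1)}{2}+\tfrac{n(n+1)}{4}=\tfrac{3n^2-n}{4}\)) checks out and exactly reproduces the cited constant.
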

\begin{proof}
  From~\cite[Prop.~3.13]{ChaRecPar:12}, we have
  \begin{equation*}
    W\Bigl(\Fcone(\opnorm{\cdot},\mtx X_0^{(n)})\cap
    \mathsf{S}^{n^2-1}\Bigr)^2
    \le \frac{3 n^2 - n}{4} = \frac{3}{4}n^2 + O(n).
  \end{equation*}
  The result follows upon dividing by \(n^2\), taking limits, and
  applying Corollary~\ref{cor:gauss-width-bd}.
\end{proof}

We list all of the bounds computed in our work in
Table~\ref{tab:aeubs-from-linear-inverse} on page~\pageref{tab:aeubs-from-linear-inverse}, but note that several more
bounds are readily derivable from the Gaussian width calculations
in~\cite[Sec.~3.4]{ChaRecPar:12}.

\section{Experiments}
\label{sec:applications}

We can tackle a variety of scenarios using the theory developed in
Section~\ref{sec:find-weak-thresh} and the decay threshold
calculations of Section~\ref{sec:line-inverse-probl-1}.

\subsection{Morphological component analysis}
\label{sec:deconv-sparse-vect}
\begin{figure}[t!]
  \centering
\includegraphics[width=0.48\columnwidth]{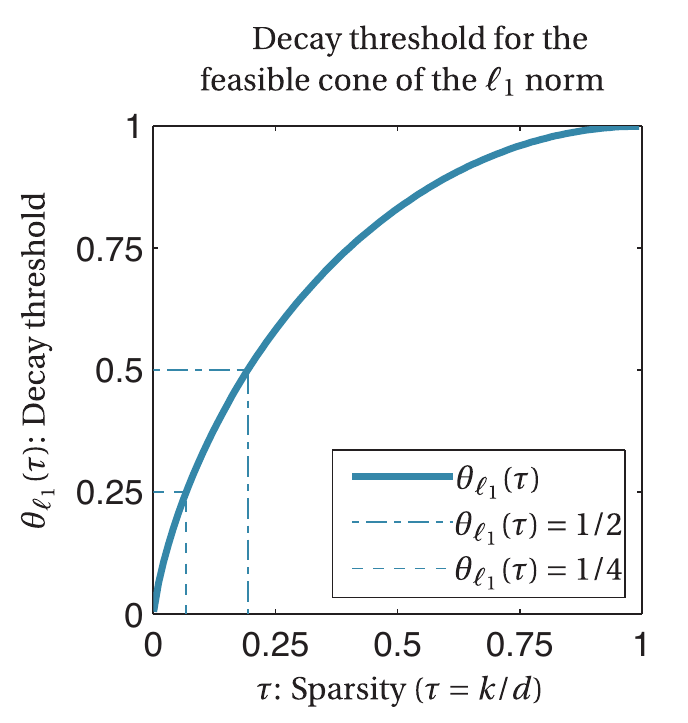}
\hspace{6pt}
    \includegraphics[width=0.48\columnwidth]{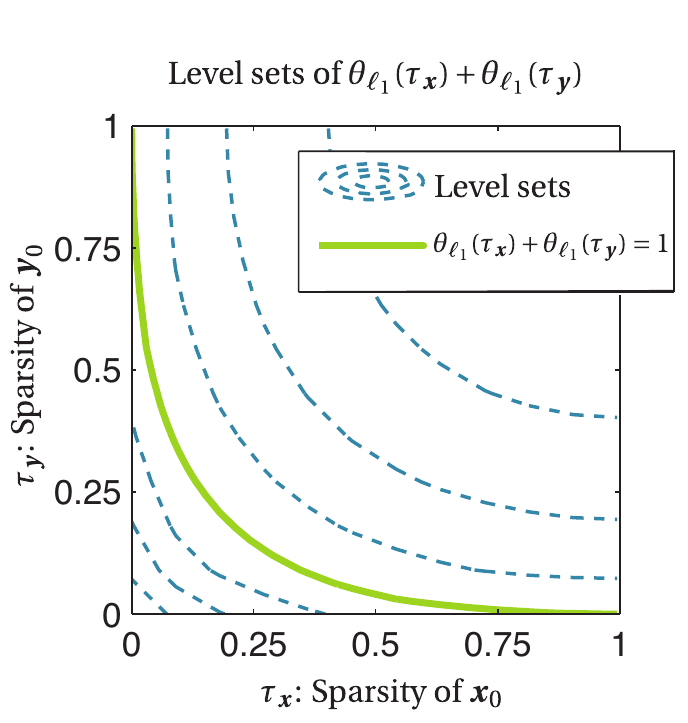}
  \caption{\textsl{Decay threshold for the \(\ell_1\) norm.}  The left
    panel shows \(\theta_{\ell_1}(\tau)\), the upper decay threshold
    for the sequence of feasible cones of the \(\ell_1\) norm at
    \(\lceil \tau d \rceil\)-sparse vectors as a function of the
    sparsity \(\tau\).  The narrow lines show that
    \(\theta_{\ell_1}(\tau) < \frac{1}{2}\) for \(\tau<0.19\), while
    \(\theta_{\ell_1}(\tau) < \frac{1}{4}\) for \(\tau< 0.06\).  The
    right panel displays level sets of the function
    \(\theta_{\ell_1}(\tau_{\vct x}) + \theta_{\ell_1}(\tau_{\vct
      y})\).  The thick curve marks the level set
    \(\theta_{\ell_1}(\tau_{\vct x}) + \theta_{\ell_1}(\tau_{\vct
      y})=1\); it corresponds to the green curve in
    Figure~\ref{fig:l1-l1-demixing}.  }
  \label{fig:l1-decay-levelsets}
\end{figure}

We return to the MCA model of Section~\ref{sec:first-appl-deconv}.
Our goal is to analyze when we can demix two signals that are
sparse in incoherent bases. To apply our theoretical results, we
consider the demixing ensemble from
Section~\ref{sec:mca-high-dimensions}.

Fix sparsity levels \(\tau_{\vct x}\) and \(\tau_{\vct y}\) in
\([0,1]\).  For each dimension \(d\in \mathbb{N}\), we construct
signals \(\vct x_0\) and \(\vct y_0\) in \(\R^d\) that satisfy 
\begin{equation*}
  \nnz(\vct x_0)=\lceil \tau_{\vct x} d\rceil\quad\text{and}\quad\nnz(\vct y_0)=
  \lceil\tau_{\vct y} d\rceil.
\end{equation*}
Draw a random basis \(\mtx Q\),  and suppose that we observe \(\vct
z_0 = \vct x_0 + \mtx Q \vct y_0\).  

The \(\ell_1\) norm is a natural complexity measure for sparse
vectors.  Given the side information \(\alpha = \lone{\smash{\vct
    y_0}}\), we pose the constrained MCA problem
\begin{equation}\label{eq:l1-const-2}%
  \minprog{}{\lone{\vct x}}{\lone{\smash{\vct y}} \le\alpha
    \;\;\text{and}\;\;
    \vct x + \mtx Q \vct y = \vct z_0.}
\end{equation}
The theory of Section~\ref{sec:find-weak-thresh} describes
when~\eqref{eq:l1-const-2} identifies \((\vct x_0,\vct y_0)\) with
overwhelming probability in high dimensions in terms of decay
thresholds for the ensembles \(\{\Fcone(\lone{\cdot},\vct x_0)\}\) and
\(\{\Fcone(\lone{\cdot},\vct y_0)\}\) indexed by the dimension \(d\).

Observe that, up to rotations, the geometry of the feasible cone
\(\Fcone(\lone{\cdot},\vct w)\) depends only on the number of nonzero
entries in \(\vct w\), but not on the positions or magnitudes of the
entries. For a fixed sparsity level \(\tau \in [0,1]\), consider an
ensemble \(\{\vct w^{(d)}\}\) with \(\vct w^{(d)} \in \R^d\) and
\(\nnz(\vct w^{(d)}) = \lceil \tau d\rceil\) for each \(d\in
\mathbb{N}\).  In Appendix~\ref{sec:asympt-thresh-calc}, we compute
the optimal upper decay threshold for the ensemble
\(\{\Fcone(\lone{\cdot},\vct w^{(d)})\}\).  We denote this value by
\(\theta_{\ell_1}(\tau)\) (see Figure~\ref{fig:l1-decay-levelsets},
left panel).  Therefore, the ensembles \(\{\Fcone(\lone{\cdot},\vct
x_0)\}\) and \(\{\Fcone(\lone{\cdot},\vct y_0)\}\) have decay
thresholds \(\theta_{\ell_1}(\tau_{\vct x})\) and
\(\theta_{\ell_1}(\tau_{\vct y})\).

Theorem~\ref{thm:sharp-thresh} implies that our demixing
method~\eqref{eq:l1-const-2} for sparse vectors succeeds with
overwhelming probability in high dimensions so long as
\(\theta_{\ell_1}(\tau_{\vct x}) + \theta_{\ell_1}(\tau_{\vct y})<
1\). The right panel of Figure~\ref{fig:l1-decay-levelsets} shows the
level sets of the function \(\theta_{\ell_1}(\tau_{\vct x})+
\theta_{\ell_1}(\tau_{\vct y})\) for \((\tau_{\vct x},\tau_{\vct y})\)
on the unit square \([0,1]^2\).  The green curve is the level set
\begin{equation*}
\bigl\{(\tau_{\vct x},\tau_{\vct y})\mid \theta_{\ell_1}(\tau_{\vct x})
+\theta_{\ell_1}(\tau_{\vct y})=1\bigr\}.
\end{equation*}
Theorem~\ref{thm:sharp-thresh} implies that
program~\eqref{eq:l1-const-2} succeeds with overwhelming probability
when the joint sparsity \((\tau_{\vct x},\tau_{\vct y})\) lies below
the green curve.

On the other hand, our computations show that the upper decay
threshold \(\theta_{\ell_1}(\tau)\) is numerically equal to the lower
decay threshold \(\kappa_{\ell_1}(\tau)\)---see the discussion in
Appendix~\ref{sec:comp-asympt-expon}.
Theorem~\ref{thm:failure-thresh} implies that the sparse demixing
method~\eqref{eq:l1-const-2} fails with overwhelming probability for
sparsity levels \((\tau_{\vct x},\tau_{\vct y})\) in the region above
the green curve.  In other words, the green curve on the right panel
of Figure~\ref{fig:l1-decay-levelsets} delineates a sharp transition
between success and failure for constrained MCA~\eqref{eq:l1-const-2}.
The green curve in Figure~\ref{fig:l1-decay-levelsets}(b) is the same as the
green curve in Figure~\ref{fig:l1-l1-demixing}.

\subsubsection{Strong guarantees}
\label{sec:l1-l1-strong-transitions}
The theory of Section~\ref{sec:strong-phase-trans} allows us to
provide a uniform recovery guarantee. For a fixed draw of the random
basis, constrained  MCA can demix all sufficiently sparse pairs of
vectors with overwhelming probability in high dimensions.

Fix the sparsity \((\tau_{\vct x},\tau_{\vct y})\) and the ambient
dimension \(d\).  Suppose \(\mtx Q\in \mathsf{O}_d\) is a random basis.  By
Lemma~\ref{lem:main-geom-introduction}, constrained MCA can identify 
every \((\tau_{\vct x},\tau_{\vct y})\)-sparse pair \((\vct x_0,\vct
y_0)\) given the observation \(\vct z_0 = \vct x_0 +\mtx Q\vct y_0 \),
provided that the event
\begin{equation}\label{eq:l1-strong-event}
  \Fcone(\lone{\cdot},\vct x_0) \bigcap\bigl( - \mtx Q
  \Fcone(\lone{\cdot},\vct y_0) \bigr)= \{\mathbf{0}\}  \text{ for each
    \((\tau_{\vct x},\tau_{\vct y})\)-sparse pair \(\bigl(\vct x_0,\vct y_0\bigr)\)}
\end{equation}
holds.  Theorem~\ref{thm:strong-thm} guarantees that the probability
of event~\eqref{eq:l1-strong-event} is large when some associated
decay thresholds are small enough; let us describe how to verify the required
technical assumptions.

First, the results in Appendix~\ref{sec:asympt-thresh-calc} allow us
to compute upper decay threshold at levels \(\psi \ge 0\) for the
ensemble \(\{\Fcone(\lone{\cdot},\vct w^{(d)})\}\) from
Section~\ref{sec:deconv-sparse-vect} consisting of feasible cones for
the \(\ell_1\) norm at \(\tau\)-sparse vectors.  We extend our earlier
notation by writing this quantity as \(\theta_{\ell_1}(\tau,\psi)\).
This is the first element required to check the hypotheses of
Theorem~\ref{thm:strong-thm}.

We also require information on the total number of feasible cones under
consideration. Let
\begin{equation*}
  \mathcal{K}^{(d)}_{\vct x} = \bigcup\bigl\{
  \Fcone(\lone{\cdot},\vct x_0)\bigr\}, \quad \text{and} \quad  
  \mathcal{K}^{(d)}_{\vct y} =
  \bigcup\bigl\{\Fcone(\lone{\cdot},\vct y_0)\bigr\},
\end{equation*}
where the unions take place over all \(\lceil \tau_{\vct x} d\rceil\)-sparse
\(\vct x_0 \in \R^d\) and all \(\lceil \tau_{\vct y} d\rceil\)-sparse \(\vct y_0 \in
\R^d\).  There are exactly
\begin{math}
  2^{k}\binom{d}{k}
\end{math}
different---but congruent---feasible cones of the
\(\ell_1\) norm at vectors in \(\R^d\) with \(k\) nonzero entries, one
for each sign/sparsity pattern.  This corresponds to the number of
\((k-1)\)-dimensional faces of the crosspolytope; see,
e.g.,~\cite[Sec.~3.3]{Donoho2006}.  With \(k=\lceil \tau d \rceil\),
it follows from the proof of Proposition~\ref{prop:orth-thresh} that
\begin{equation}\label{eq:binom-limit}
  d^{-1}\log\left(2^{k} 
    \binom{d}{k}\right) \to\tau \log(2) + H(\tau) \eqdef E(\tau),
\end{equation}
uniformly as \(d \to \infty\).  By continuity of the exponential, for
every \(\eta>0\) and all sufficiently large \(d\), the number of
feasible cones is bounded above by
\begin{equation*}
|\mathcal{K}^{(d)}_{\vct x}|=   2^{\lceil \tau_{\vct x}
  d\rceil}\binom{d}{\lceil \tau_{\vct x} d\rceil} \le \mathrm{e}^{d
  (E(\tau_{\vct x}) +\eta)},
\end{equation*}
for large enough \(d\).  Similarly, \(|\mathcal{K}^{(d)}_{\vct y}| \le
\mathrm{e}^{d (E(\tau_{\vct y}) +\eta)}\) for all sufficiently large
\(d\).

We have now collected enough information to apply our theory.  By
Theorem~\ref{thm:strong-thm}, the event~\eqref{eq:l1-strong-event} holds
with overwhelming probability in high dimensions so long as
\begin{equation}\label{eq:l1-strong-ineq}
  \theta_{\ell_1}\bigl(\tau_{\vct x},E(\tau_{\vct x}) + E(\tau_{\vct y}) \bigr)
  +
  \theta_{\ell_1}\bigl(\tau_{\vct y},E(\tau_{\vct x}) + E(\tau_{\vct y}) \bigr) <1.
\end{equation}
The blue curve appearing in Figure~\ref{fig:l1-l1-demixing} on page~\pageref{fig:l1-l1-demixing} shows the level set
\begin{equation*}
  \Bigl\{(\tau_{\vct x},\tau_{\vct y}) \mid
  \theta_{\ell_1}\bigl(\tau_{\vct x},E(\tau_{\vct x}) + E(\tau_{\vct y}) \bigr)
  +
  \theta_{\ell_1}\bigl(\tau_{\vct y},E(\tau_{\vct x}) + E(\tau_{\vct y}) \bigr) = 1\Bigr\}.
\end{equation*}
When the sparsity level \((\tau_{\vct x},\tau_{\vct y})\) lies below
the blue curve, inequality~\eqref{eq:l1-strong-ineq} holds, so that the
demixing method~\eqref{eq:l1-const-2} succeeds at demixing
\emph{every} pair \((\vct x_0,\vct y_0)\) of \((\tau_{\vct
  x},\tau_{\vct y})\)-sparse vectors with overwhelming probability in
high dimensions.

\subsubsection{Numerical experiment for constrained MCA}
\label{sec:MCA-numerical-experiment}

The following numerical experiment illustrates the accuracy of these theoretical
results.  We fix the dimension \(d=100\).  For each pair \((k_{\vct
  x},k_{\vct y})\in \{0,1,\dotsc,100\}^2\), we repeat the following
procedure \(25\) times.
\begin{enumerate}\setlength{\itemsep}{0pt}
\item Draw \(\vct x_0, \vct y_0 \in \R^d\) with \(k_{\vct x}\) or
  \(k_{\vct y}\) nonzero elements, respectively.  The locations of the
  nonzero elements are chosen at random, and these elements are
  equally likely to be \(+ 1\) or \(-1\).
\item Generate a random basis \(\mtx Q \in \mathsf{O}_d\); see
  Remark~\ref{rem:random-basis} below.
\item Solve~(\ref{eq:l1-const-2}) for the optimal point \((\vct
  x_\star,\vct y_\star)\) with the numerical optimization software
  \texttt{CVX}~\cite{Grant2008}, \cite{Grant2010}.
\item Declare success if \(\linf{\vct x_\star - \vct x_0}<10^{-4}\). 
\end{enumerate}
The background of Figure~\ref{fig:l1-l1-demixing} shows the
results of this experiment as a function of \(\tau_{\vct x} = k_{\vct
  x}/d\) and \(\tau_{\vct y}=k_{\vct y}/d\).  The yellow curve marks
the empirical \(50\%\) success line, and we note that it tracks the
green theoretical curve closely.  It emerges that \(d=100\) is already
large enough to see the high dimensional behavior described by our
theoretical results.

\begin{remark}\label{rem:random-basis}
  A random basis in \(\mtx Q\in \mathsf{O}_d\) is often defined
  through a conceptually simple two-step operation.  First, draw a
  square \(d\times d\) Gaussian matrix; second, orthogonalize the
  columns of this matrix via the Gram--Schmidt procedure.  Although
  this definition has the flavor of a numerical algorithm, the conceptual process is not numerically  stable. Moreover, standard procedures for stabilizing the
  orthogonalization do not preserve the Haar measure.  For a
  straightforward, numerically stable approach to generating random
  bases, see~\cite{Mezzadri2007}.
\end{remark}

\subsection{Secure and robust channel coding}
\label{sec:channel-coding-1}

Next, we study the secure channel coding scheme of
Section~\ref{sec:chann-coding-scheme}.  We want to analyze when the
receiver can decode a transmitted message that is subject to a sparse
corruption.  The difficulty depends on the sparsity level \(\tau\) of
the corruption, where \(\tau\in [0,1]\). Let us introduce an ensemble
of demixing problems.  In each dimension \(d\in \mathbb{N}\),
choose a \(d\)-bit message \(\vct m_0 \in \{\pm 1\}^d\) and a
corruption \(\vct c_0\in\R^d \) with \(\nnz(\vct c_0) = \lceil \tau
d\rceil\).  Draw a random basis \(\mtx Q\) known to both the receiver
and transmitter.  The receiver observes \(\vct z_0 = \mtx Q\vct
m_0+\vct c_0\), the encoded message plus the sparse interference.

The natural complexity measure for the sparse corruption \(\vct c_0\)
is the \(\ell_1\) norm, while the \(\ell_\infty\) norm is the
appropriate complexity measure for the sign vector \(\vct m_0\).  Note
that \(\linf{\vct m_0}=1\).  To recover the original message, the
receiver solves the problem
\begin{equation}
  \label{eq:channel-coding}
  \minprog{}{\lone{\vct c}}{
      \linf{\vct m} \le 1 \;\;\text{and}\;\;
    \vct c + \mtx Q \vct m  = \vct z_0.}
\end{equation}
This approach succeeds when \((\vct m_0,\vct c_0)\) is the unique optimal point
of~\eqref{eq:channel-coding}.

We consider two types of corruptions.  \emph{Benign} corruptions are
taken in any manner that is independent of \(\mtx Q\)---this would
happen, for instance, when the corruption is generated by an adversary
with no knowledge of \(\mtx Q\) or the transmission \(\mtx Q \vct
m_0\).  \emph{Adversarial} corruptions are worst-case sparse
corruptions; these corruptions may model malicious interference or an
erasure channel that sets some of the coordinates of \(\vct z_0\) to
zero.  We first consider with the benign corruption, and then consider
the adversarial case in Section~\ref{sec:linf-strong-thresholds}.

\subsubsection{Benign corruptions}
\label{sec:benign-corr-case}

Since the corruption is chosen independently of the basis \(\mtx Q\),
the \CDMq~\eqref{eq:channel-coding} succeeds if it can identify the
single pair \((\vct c_0,\vct m_0)\).  Theorems~\ref{thm:sharp-thresh}
and~\ref{thm:failure-thresh} describe where procedure succeeds, or
fails, with overwhelming probability in high dimensions, but we must
first determine some decay thresholds.

The feasible cone of the \(\ell_\infty\) norm at a sign vector is
 congruent to an orthant (Example~\ref{ex:linf-feas}).  From
Table~\ref{tab:aeubs-from-linear-inverse}, we see that the sequence
\(\{\R^d_+\mid d\in \mathbb{N}\}\) of orthants has an upper decay threshold of
\(\smash{\theta_{\smash{\R^d_+}}} =\frac{1}{2}\) and a matching lower decay
threshold \(\kappa_{\R^d_+}=\frac{1}{2}\).
Appendix~\ref{sec:asympt-thresh-calc} defines an upper decay threshold
\(\theta_{\ell_1}(\tau)\) for the sequence of feasible cones of the
\(\ell_1\) norm at \(\tau\)-sparse vectors.

With these thresholds in hand, Theorem~\ref{thm:sharp-thresh}
guarantees that~\eqref{eq:channel-coding} succeeds with overwhelming
probability in high dimensions provided that \(
\theta_{\R^d_+}+\theta_{\ell_1}(\tau) < 1\), or, equivalently, provided that
\begin{math}
 \theta_{\ell_1}(\tau) < \frac{1}{2}.
\end{math}
On the other hand, Theorem~\ref{thm:failure-thresh} shows
that~\eqref{eq:channel-coding} fails with overwhelming probability in
high dimensions if the corresponding upper decay threshold satisfies
\begin{math}
  \kappa_{\ell_1}(\tau) + \kappa_{\R^d_+} > 1.
\end{math}
This condition holds when \(\kappa_{\ell_1}(\tau)> \frac{1}{2}\). 

Numerically, we find that \(\theta_{\ell_1}(\tau)<\frac{1}{2}\) when
the sparsity level \(\tau \lesssim 0.193\); see the left panel of
Figure~\ref{fig:l1-decay-levelsets}.  The fact that
\(\kappa_{\ell_1}(\tau)=\theta_{\ell_1}(\tau)\) to numerical precision
implies that \(\kappa_{\ell_1}(\tau)> \frac{1}{2}\) for \(\tau\gtrsim
0.193\).  In other words, if fewer than \(19\%\) of the entries of
\(\vct c_0\) are nonzero, the scheme~\eqref{eq:channel-coding}
succeeds with overwhelming probability in high dimensions; otherwise,
it fails with overwhelming probability in high dimensions.  This sharp
transition corresponds to the location of the dashed line in
Figure~\ref{fig:channel-coding}.

\subsubsection{The adversarial case}
\label{sec:linf-strong-thresholds}
\begin{figure}[t!]
  \centering
  \includegraphics[width=.6\columnwidth]{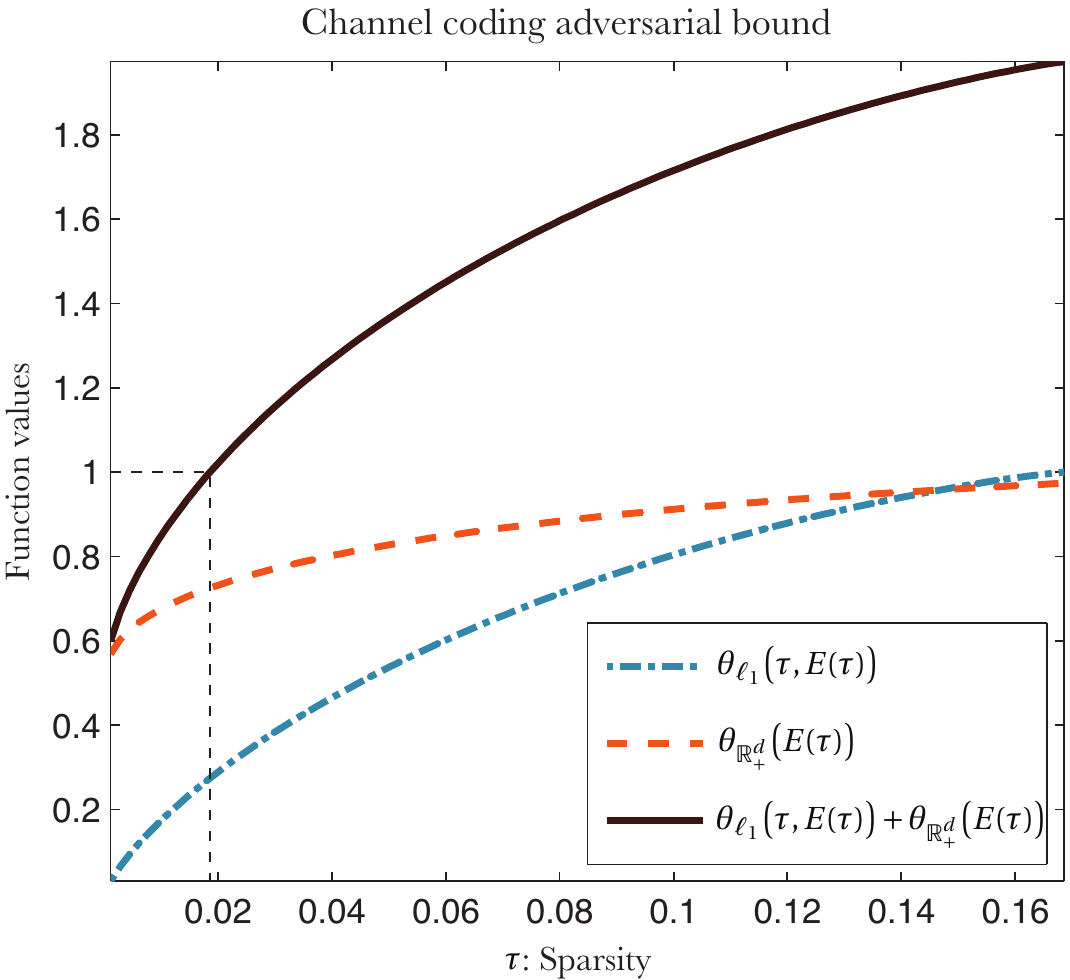}
  \caption{\textsl{Calculating the  adversarial guarantees for
    channel coding.} The lower two curves in the figure
    correspond to \(\theta_{\ell_1}(\tau,E(\tau))\) and
    \(\theta_{\smash{\R^d_+}}(E(\tau))\) defined by~\eqref{eq:theta-l1} and~\eqref{eq:upper-bd-psi-orth}. The upper curve indicates the sum
    \(\theta_{\ell_1}(\tau,E(\tau))+\theta_{\smash{\R^d_+}}(E(\tau))\).  For
    \(\tau<0.018\), the sum lies below one, so
    Theorem~\ref{thm:strong-thm} implies that
    event~\eqref{eq:adv-corrupt-event} holds with overwhelming
    probability.  }
  \label{fig:adv-recov-guar}
\end{figure}

In the adversarial case, the corruptions are sparse but may depend on
the basis \(\mtx Q\) and the message \(\vct m_0\). To ensure that no
corruption with \(\lceil \tau d\rceil\) nonzero entries
can cause~\eqref{eq:channel-coding} to fail, we must verify
that~\eqref{eq:channel-coding} succeeds at identifying \((\vct
c_0,\vct m_0)\) for \emph{every} \(\lceil \tau d \rceil\)-sparse
vector \(\vct c_0\).  From Lemma~\ref{lem:triv-intersect}, this is
equivalent to the event
\begin{equation}
  \label{eq:adv-corrupt-event}
  \Fcone(\lone{\cdot},\vct c_0)\cap\bigl( -\mtx Q \Fcone(\linf{\cdot},\vct
  m_0) \bigr)= \{\zerovct\} \text{ for every \(\lceil \tau d\rceil\)-sparse vector \(\vct c_0\)}.
\end{equation}
We can use Theorem~\ref{thm:strong-thm} to verify that
event~\eqref{eq:adv-corrupt-event} holds with overwhelming probability
in high dimensions. Let us collect the additional information required
to verify the technical assumptions of this theorem.

For \(\vct m_0\in \{\pm 1\}^d\), define the singleton
\(\mathcal{K}_{\vct m}^{(d)} = \{\Fcone(\linf{\cdot},\vct m_0)\}\).
The feasible cone \(\Fcone(\linf{\cdot},\vct m_0)\) is 
congruent to the orthant \(\R^d_+\), so the sequence
\(\mathcal{K}_{\vct m}^{(d)}\) of sets has an upper decay threshold at
level \(\psi\) of \(\theta_{\R^d_+}(\psi)\) defined in
Proposition~\ref{prop:upper-bd-psi-orth}.

Define the set \(\mathcal{K}^{(d)}_{\vct c} = \bigcup
\{\Fcone(\lone{\cdot},\vct c_0)\}\), where the union occurs over all
vectors \(\vct c_0\) with \(\lceil \tau d\rceil\) nonzero elements.
By~\eqref{eq:binom-limit}, the size of
\(\mathcal{K}^{(d)}_{\vct c} \) is bounded by
\begin{equation*}
    \abs{\mathcal{K}^{(d)}_{\vct c}} = 2^{\lceil \tau d\rceil}
      \binom{d}{\lceil \tau d\rceil} \le \econst^{d(E(\tau)+\eta)}
\end{equation*}
for any \(\eta>0\) and all sufficiently large \(d\).  

Since \(\mathcal{K}^{(d)}_{\vct m}\) is a singleton,
Theorem~\ref{thm:strong-thm} implies that
event~\eqref{eq:adv-corrupt-event} holds with overwhelming probability
in high dimensions whenever
\begin{equation}\label{eq:l1-linf-strong-ineq}
\theta_{\R^d_+}\bigl(E(\tau)\bigr) + \theta_{\ell_1}\bigl(\tau,E(\tau)\bigr) < 1.
\end{equation}
Computing \(\theta_{\ell_1}(\tau,\psi)\) and \(\theta_{\R^d_+}(\psi)\)
numerically, we find that for all \(\tau \lesssim 0.0186\),
inequality~\eqref{eq:l1-linf-strong-ineq} holds.  We conclude that our
channel coding scheme is robust to all adversarial corruptions
so long as the corruptions have no more than about \(1.8\% \) nonzero
entries.  This computation is illustrated in
Figure~\ref{fig:adv-recov-guar}.

\subsubsection{Numerical experiment}
\label{sec:numerical-experiment-channel-coding}

We perform two numerical experiments to complement our theory, one for
the benign corruptions and the other for a specific type of malicious erasure.

For dimensions \(d = 100\) and \(d=300\) and for each of \(70\)
equally spaced values of \(\tau\in [0,0.35]\), we test the benign
corruption case by repeating the following procedure \(200\) times:
\begin{enumerate}
  \setlength{\itemsep}{0pt}
\item Draw a binary vector \(\vct m_0 \in \{\pm1\}^d\) at random. 
\item Choose a corruption \(\vct c_0\) with \(k= [\tau d]\) nonzero
  elements; the support of \(\vct c_0\) is random, and the nonzero
  elements are taken to be \(\pm 1\) with equal probability.
\item Generate a random basis \(\mtx Q\in \mathsf{O}_d\); see
  Remark~\ref{rem:random-basis}.
\item Solve~(\ref{eq:channel-coding}) with the observation \(\vct z_0
  = \mtx Q \vct m_0 + \vct c_0\) with the numerical optimization
  software \texttt{CVX}; call \((\vct m_\star,\vct c_\star)\) the
  optimal point.
\item Declare success if \(\linf{\vct m_\star - \vct m_0}<10^{-4}\).
\end{enumerate}

The second experiment incorporates a malicious erasure.  As in the
benign case, the experiment  is run for dimensions \(d = 100\) and
\(d=300\) and for \(70\) equally spaced values of \(\tau\) between
zero and one.  For each of these parameters, we repeat the following
\(200\) times:
\begin{enumerate}\setlength{\itemsep}{0pt}
\item Draw a message \(\vct m_0 \in \{\pm1\}^d\) at random, and
  generate a random basis \(\mtx Q \in \mathsf{O}_d\).
\item Set the observation \(\vct z_0 = \mathrm{erase}(\mtx Q\vct
  m_0,[\tau d])\), where \(\mathrm{erase}(\vct x,k)\) sets the \(k\)
  largest-magnitude elements of \(\vct x\) to zero.
\item Solve~\eqref{eq:channel-coding} with \texttt{CVX} for the
  optimal point \((\vct m_\star,\vct c_\star)\), and
\item Declare success if \(\linf{\vct m_\star -\vct m_0}< 10^{-4}\).  
\end{enumerate}

The curves in Figure~\ref{fig:channel-coding} show the results of
these experiments.  For benign corruptions, the  empirical \(50\%\) success rate occurs very near the
predicted sparsity value \(\tau = 0.193\), and the transition region
is more narrow for larger \(d\).  Thus, the experiment closely match our prediction for the location of the asymptotic phase transition.

The empirical evidence suggests that our
adversarial guarantees are conservative for the type of
malicious corruption used in the experiment.  This is expected,
as we have no reason to believe that such erasures correspond to the
worst-case corruption.  However, the empirical transition between
success and failure near \(\tau\approx 0.05\)  suggests that 
our adversarial bound lies within a factor of two or three of the
best possible guarantee.

\subsection{Low-rank matrices under sparse corruptions}
\label{sec:low-rank-matrices}
Consider the problem of separating a low-rank square matrix from a
corruption that is sparse in a random basis.  The parameters that
determine the difficulty are the proportional rank \(\rho\in [0,1]\)
and the sparsity level \(\tau\in [0,1]\).  Let us introduce a
demixing ensemble.  For each side length \(n\in \mathbb{N}\),
choose a low-rank matrix \(\mtx X_0\in \R^{n\times n}\) with
\(\rank(\mtx X_0) = \lceil \rho n\rceil\) and a sparse matrix \(\mtx
Y_0 \in \R^{n\times n}\) with \(\nnz(\mtx Y_0) = \lceil \tau
n^2\rceil\).  Draw a random basis \(\mathcal{Q}\) for \(\R^{n\times
  n}\), and suppose we observe \(\mtx Z_0 = \mtx X_0 +
\mathcal{Q}(\mtx Y_0)\).

To promote low-rank, we use the Schatten 1-norm, and to promote
sparsity, we use the matrix \(\ell_1\) norm. Given the side information
\(\alpha=\lone{\mtx Y_0}\), we pose the \CDMq\
\begin{equation}
  \label{eq:l1-nuc-recover}
  \minprog{}{\sone{\mtx X} }{ \lone{ \mtx Y} \le\alpha \;\;\text{and}\;\;
    \mtx X + \mathcal{Q}(\mtx Y) = \mtx Z_0.}
\end{equation}
We study when \((\mtx X_0,\mtx Y_0)\) is the unique solution
to~\eqref{eq:l1-nuc-recover} with overwhelming probability in high
dimensions.

The feasible cone of the \(\ell_1\)-matrix norm at a matrix \(\mtx Y_0
\in \R^{n \times n}\) with \(k\) nonzero elements is isomorphic to the feasible
cone of the \(\ell_1\) norm at a sparse vector \(\vct y_0 \defeq
\vec(\mtx Y_0)\in \R^{n^2}\) with \(k\) nonzero entries.  It follows that the
value \(\theta_{\ell_1}(\tau)\) from~\eqref{eq:theta-l1} is an upper
decay threshold for the ensemble \(\{\Fcone(\lone{\cdot},\vct Y_0)\}
\) of feasible cones indexed by the ambient dimension \(d=n^2\) of the
matrix space \(\R^{n\times n}\).

By Proposition~\ref{prop:schatten1-bound}, we see that
\(\theta_{S_1}(\rho) \defeq 6 \rho - 3 \rho^2\) is an upper decay
threshold for the ensemble \(\{\Fcone(\sone{\cdot},\mtx X_0)\}\) of
feasible cones, indexed by the ambient dimension.  However, a smaller upper decay threshold \(\tilde\theta_{S_1}(\rho)\) is available using the results of~\cite{OymHas:10}---see Remark~\ref{rem:OH10}.  The green line in  Figure~\ref{fig:l1-nuc-phase} is the level set
\begin{equation*}
  \bigl\{(\rho,\tau)\mid \theta_{\ell_1}(\tau) +
  \tilde{\theta}_{S_1}(\rho) = 1\bigr\},
\end{equation*}
where the \(\tilde\theta_{S_1}(\rho)\) is given by the asymptotic upper bound on the Gaussian width given implicitly in~\cite[Eq.~(94)]{OymHas:10}.\footnote{Our actual computation uses the simpler, but equivalent, formula given in~\cite[Prop.~4.9]{AmeLotMcC:13}.}
For \((\rho,\tau)\) pairs lying below the curve, 
Theorem~\ref{thm:sharp-thresh} implies that our demixing
method~(\ref{eq:l1-nuc-recover}) succeeds with overwhelming
probability in high dimensions.

\subsubsection{Numerical experiment}
\label{sec:numerical-experiment-1}

Let us summarize the experiment in Figure~\ref{fig:l1-nuc-phase}.  The
matrix side length \(n = 35\) is fixed, and for each pair \((\rho,\tau)\) in
the set
\(\smash{\bigl\{\frac{1}{n},\frac{2}{n},\dotsc,1\bigr\}}^2\), we
repeat the following procedure \(25\) times:
\begin{enumerate}\setlength{\itemsep}{0pt}
\item Draw a matrix \(\mtx X_0 = \mtx Q_L \mtx \Lambda \mtx Q_R \in
  \R^{n\times n}\) with rank \(r = [\rho n]\), where \(\mtx \Lambda\)
  is a diagonal matrix that satisfies \(\Lambda_{ii} = 1\) for \(i=1,\dotsc
  r\) and \(\Lambda_{ii}=0\) otherwise and \(\mtx Q_L\), \(\mtx Q_R\)
  are independent random bases in \(\mathsf{O}_n\).
\item Generate a random matrix \(\mtx Y_0\in \R^{n\times n}\) with
  \([\tau n^2]\) nonzero entries; the nonzero entries in \(\mtx Y_0\)
  take the values \(+ 1\) or \(-1\) with equal probability.
\item Generate a random basis \(\mathcal{Q}\) for \(\R^{n\times n}\).
\item Solve~(\ref{eq:l1-nuc-recover}) with the observation \(\mtx Z_0
  = \mtx X_0+\mathcal{Q}(\mtx Y_0)  \) with \texttt{CVX}, and set
  \((\mtx X_\star,\vct Y_\star) \) to the optimal point.
\item Declare success if \(\linf{\mtx X_\star - \mtx X_0}<10^{-4}\). 
\end{enumerate}
From Figure~\ref{fig:l1-nuc-phase}, we see that the theoretical bound
closely matches the empirical success curve throughout the entire regime.

\subsection{Assorted matrix demixing problems}
\label{sec:hodgepodge}

We conclude this section with some other combinations of structured
square matrices that we can separate using the
\CDMq~\eqref{eq:genconv}.  In each of these applications, we observe a
superposition of the form \(\mtx Z_0 = \mtx X_0 +\mathcal{Q} (\mtx
Y_0) \in \R^{n\times n} \), where \(\mathcal{Q}\) is a random basis
for the matrix space \(\R^{n\times n}\).    We consider various structures for \(\mtx X_0\)
and \(\mtx Y_0\)---either low rank, orthogonal, sparse, or sign
matrices---and we show that our theory quickly identifies a regime
where an appropriate \CDMq\ succeeds with overwhelming probability in
high dimensions.  While we know of no concrete applications for these particular demixing programs, the analysis below illustrates the ease with which our theory extends to new settings.

\subsubsection{Orthogonal and sparse matrices}
\label{sec:orthogonal-sparse}

Fix a sparsity level \(\tau \) in \([0,1]\).  For each side length
\(n\in \mathbb{N}\), choose an orthogonal matrix \(\mtx X_0 \in
\mathsf{O}_{n}\) and a sparse matrix \(\mtx Y_0\in \R^{n\times n}\)
with \(\nnz(\mtx Y_0) = \lceil \tau n^2\rceil\).  We use the operator
norm as a complexity measure for \(\mtx X_0\) and the matrix
\(\ell_1\) norm as a complexity measure for \(\mtx Y_0\).  Since
\(\opnorm{\mtx X_0}=1\), we pose the \CDMq
\begin{equation}
  \label{eq:orth-sparse}
  \minprog{}{\lone{\mtx Y}}{\opnorm{\vct X} \le 1\;\;\text{and}\;\; \mtx X+
    \mathcal{Q}(\mtx Y) = \vct Z_0.}
\end{equation}
The interchange of the objective and constraint as compared
with~\eqref{eq:genconv}  poses no difficulty because the
optimality conditions of Lemma~\ref{lem:triv-intersect} are 
symmetric with respect to the objective and constraint.  

By Theorem~\ref{thm:sharp-thresh}, the \CDMq~\eqref{eq:orth-sparse}
succeeds with overwhelming probability in high dimensions so long as
\begin{equation*}
  \theta_{\ell_1}(\tau) + \theta_{\mathrm{Op}} < 1,  
\end{equation*}
where \(\theta_{\ell_1}(\tau)\), defined in
Appendix~\ref{sec:asympt-thresh-calc}, is an upper decay threshold for the
ensemble \(\{\Fcone(\lone{\cdot},\vct Y_0)\}\) and
\(\theta_{\mathrm{Op}}=\frac{3}{4}\) is an upper decay threshold for
the ensemble \(\{\Fcone(\opnorm{\cdot},\vct X_0)\}\) by
Proposition~\ref{prop:orthogonal-bound}.  Therefore,
program~\eqref{eq:orth-sparse} succeeds with overwhelming probability
in high dimensions whenever
\begin{equation*}
    \theta_{\ell_1}(\tau) < \frac{1}{4}.
\end{equation*}
This occurs for \(\tau<0.06\); see the left panel of
Figure~\ref{fig:l1-decay-levelsets}.  We conclude
that~\eqref{eq:orth-sparse} demixes an orthogonal matrix \(\mtx
X_0\) from a matrix sparse in a random basis \(\mathcal{Q}(\mtx Y_0)\)
with high probability when no more than about \(6\%\) of the elements
of \(\mtx Y_0\) are nonzero.

\subsubsection{Low-rank and sign matrices}
\label{sec:low-rank-sign}

Fix the proportional rank \(\rho \) in \([0,1]\).  We abbreviate the ambient
dimension \(d=n^2\). For each side
length \(n\in \mathbb{N}\), choose a low-rank matrix \(\mtx X_0\in
\R^{n\times n}\) with \(\rank(\mtx X_0) = \lceil \rho n\rceil\) and a
sign matrix \(\mtx Y_0 \in \{\pm 1\}^{n\times n}\).  We use the
Schatten 1-norm as a complexity measure for rank and the matrix
\(\ell_\infty\) norm as a complexity measure for sign matrices.  Given
that \( \linf{\mtx Y_0}=1\), we consider the \CDMq\
\begin{equation}\label{eq:lowrk-sign}
  \minprog{}{\sone{\mtx X}}{\linf{\mtx Y}\le 1\;\;\text{and}\;\; \mtx X +
    \mathcal{Q}(\mtx Y) = \mtx Z_0.}
\end{equation}
We invoke Theorem~\ref{thm:sharp-thresh} to see
that~\eqref{eq:lowrk-sign} succeeds with overwhelming probability in
high dimensions whenever \(\theta_{S_1}(\rho)+\theta_{\R^{d}_+}<1\).
Here, \(\theta_{S_1}(\rho)\) is an upper decay threshold for the
ensemble \(\{\Fcone(\sone{\cdot},\mtx X_0)\} \) of feasible cones, and
\(\theta_{\R^{d}_+}\) is an upper decay threshold for the ensemble of
nonnegative orthants.  By Proposition~\ref{prop:orth-thresh}, we have
{\(\theta_{\smash{\R^{d}_+}}=\frac{1}{2}\)}, while
Proposition~\ref{prop:schatten1-bound} gives \(\theta_{S_1}(\rho) =
6\rho- 5\rho^2\).  Therefore, the \CDMq~\eqref{eq:lowrk-sign} succeeds with
overwhelming probability in high dimensions so long as
\begin{equation*}
  6 \rho - 5\rho^2 < \frac{1}{2}.
\end{equation*}
This bound is valid when \(\rho \le 0.09\).  We
conclude that~\eqref{eq:lowrk-sign} can demix a low-rank matrix
from a sign matrix in a random basis with overwhelming probability if
\(\rank(\mtx X_0) \le 0.09n\), where \(n\) is side length of \(\mtx X_0\).

\subsubsection{Low-rank and orthogonal matrices}
\label{sec:low-rank-orthogonal}

Let \(\rho\in [0,1]\) be a proportional rank parameter.  For each side
length \(n\in \mathbb{N}\), choose a low-rank matrix \(\mtx X_0 \in
\R^{n\times n}\) with \(\rank(\mtx X_0) = \lceil \rho n\rceil\) and an
orthogonal matrix \(\mtx Y_0 \in \mathsf{O}_n\).  With the usual
choice of complexity measures, the \CDMq~is
\begin{equation}
  \label{eq:lowrank-orth}
  \minprog{}{\sone{\mtx X}}{\opnorm{\mtx Y} \le 1\;\;\text{and}\;\; \mtx X +
    \mathcal{Q}(\mtx Y) = \mtx Z_0.}
\end{equation}
By Theorem~\ref{thm:sharp-thresh}, program~\eqref{eq:lowrank-orth}
succeeds with overwhelming probability in high dimensions so long as
\(\theta_{S_1}(\rho)+ \theta_{\mathrm{Op}} < 1\).
Propositions~\ref{prop:schatten1-bound}
and~\ref{prop:orthogonal-bound} imply that this occurs whenever
\begin{equation*}
  6 \rho - 3 \rho^2 < \frac{1}{4}.
\end{equation*}
For instance, it suffices that \(\rho \le 0.04\).  Therefore, the
\CDMq~\eqref{eq:lowrank-orth} can identify a superposition of a
low-rank matrix and an orthogonal matrix with overwhelming probability
in high dimensions so long as \(\rank(\mtx X_0) \le 0.04 n\),
where \(n\) is the side length of \(\mtx X_0\).

\section{Prior art and future directions}
\label{sec:past-and-future}

This work occupies a unique place in the literature on demixing.
The analysis is highly general and applies to many problems.  At the
same time, the results are sharp or nearly sharp. This final section
offers a wide-angle view of the field of demixing, from
applications to analytical techniques, with a focus on methods based
on convex optimization.  We conclude by discussing some extensions of
our current approach in the hope of encouraging further
development in this field.

\subsection{A short history of convex demixing and incoherence}
\label{sec:prior-art}

The use of convex optimization for signal demixing has a long
history. Early predecessors to morphological component analysis come
from the work of Claerbout \& Muir~\cite{Claerbout1973} and
Taylor~et~al.~\cite{Taylor1979}, where \(\ell_1\) minimization is used
to identify sparse spike trains from an observed seismic trace.

Demixing methods based on \(\ell_1\) minimization were put on a
rigorous footing in the 1980s with the work of Santosa \&
Symes~\cite{MR857796} and Donoho \& Stark~\cite{Donoho1989}.  These
results, either implicitly or explicitly, rely on incoherence in the
form of an uncertainty principle.  The work of Donoho \&
Huo~\cite{Donoho2001} formalizes the notion of incoherence. Incoherent
models, both random and deterministic, now pervade the sparse
demixing
literature~\cite{Starck2003}, \cite{Starck2005}, \cite{Elad2005}, \cite{HegBar:12}, \cite{Christoph2012}.

In the last decade, new classes of convex regularizers have been
introduced for solving inverse problems in signal processing. In
particular, the Schatten 1-norm is used for problems involving
low-rank matrices~\cite{Fazel2002}, \cite{MR2680543}.  Demixing methods
that involve the Schatten 1-norm include robust principal component
analysis~\cite{Candes2009}, \cite{Xu2010a}, \cite{XuCarSan:12}, \cite{McCoy2011} and latent
variable selection~\cite{Chandrasekaran2010a}.  Rigorous theoretical
results for these techniques typically involve a spectral
incoherence assumption, but no previous work in this area identifies phase transition behavior.

\subsection{The neighborhood of this work}
\label{sec:analys-deconv}

We take much of our inspiration from the geometric analysis of linear
inverse problems in~\cite{ChaRecPar:12}.  Indeed, the geometric
optimality condition (Lemma~\ref{lem:triv-intersect}) is a direct
generalization of a geometric
result~\cite[Prop.~2.1]{ChaRecPar:12} for linear inverse
problems.  Moreover, the Gaussian width bounds from that work prove
useful for computing the decay thresholds in this research.

A related line of work, due to Negahban
et~al.~\cite{NIPS2009_1070}, \cite{NegWai:12}, is based on the concept of
restricted strong convexity.  The results in these papers are sharp
within constant factors, but they do not yield bounds as precise as
ours.  Another general approach to demixing appears in
of~\cite{HegBar:12}, where a deterministic incoherence condition leads
to recovery guarantees, even in nonconvex settings.  The recovery
bounds available through this method are not competitive with
the guarantees we provide.

Several works also consider demixing two sparse vectors. The works~\cite{WriMa:10}, \cite{NguTra:13}  show that a nearly dense vector could be demixed from a sufficiently sparse vector, but they do not identify phase transition behavior.  Recent work~\cite{PopBraStu:13} also offers demixing guarantees for  demixing sparse vectors when the
sparsity is mildly sublinear in the dimension. Their model is similar
to our MCA formulation in Section~\ref{sec:first-appl-deconv}, but again the
results do not identify the phase transition between success and
failure. %

\subsubsection{Random geometry and convex optimization}
We now trace the use of methods from integral geometry for understanding randomized convex optimization programs.  Vershik \& Sporyshev~\cite{MR850459} use an asymptotic analysis of polytope angles to analyze the average-case behavior of the simplex method for linear programming.  The underlying formulas have their roots in the results of Ruben~\cite{MR0121713}, although some of the ideas apparently go back to the work Schl\"afli from the mid-nineteenth century---see Ruben's paper for a discussion.

The analysis of Vershik \& Sporyshev fed a line of investigation
on the expected face counts of randomly projected
polytopes~\cite{Affentranger1992a}, \cite{Brocozky1999}, a topic of
theoretical interest in combinatorial geometry.  These computations
resurfaced in convex optimization in the line of work of
Donoho \& Tanner~\cite{Donoho2004}, \cite{Donoho2005a}, \cite{Donoho2006}, \cite{Donoho2009},
  \cite{Donoho2009a}, \cite{Donoho2010a}, \cite{Donoho2010}.  These articles characterize the behavior of convex optimization methods for solving several linear inverse problems under a random measurement model.  In Appendix~\ref{sec:asympt-thresh-calc}, we leverage the asymptotic polytope angle calculations of Donoho \& Tanner to compute decay threshold for the \(\ell_1\) norm at sparse vectors.

This asymptotic polytope angle approach also yields stability
guarantees for basis pursuit~\cite{6034753}.  Furthermore, it has been
used to establish that iteratively reweighted basis pursuit can
provide strictly stronger guarantees than standard basis
pursuit~\cite{5495210}, \cite{Khajehnejad2010}, \cite{Khajehnejad2011}.

Our approach to random geometry differs from these earlier works
because it starts with the modern theory of spherical integral
geometry.  Previous research was based on an older theory of polytope
angles~\cite{Grunbaum1967}, \cite{Grunbaum1968}, \cite{McMullen1975}.  Spherical
integral geometry reached its current state of development in the
dissertation~\cite{Glasauer1995}, \cite{Glasauer1996}.  Chapter 6.5
of~\cite{Schneider2008} and the notes therein summarize this research.
We also draw on insights from the thesis~\cite{Amelunxen2011}.

\subsection{Conclusions and future directions}
\label{sec:concl-future-direct}

The results in this work demonstrate the power of spherical integral
geometry in the context of demixing.  Our bounds are often tight,
and they are broadly applicable.  This approach raises many questions
worth further attention.  We conclude with a list of directions for future
work. During the period that our original manuscript was under review, several of these areas have seen significant progress.  We augment our original list of open problems with a summary of progress made in the interim.

\begin{description}\setlength{\itemsep}{0pt}
\item[Tight results for Lagrangian demixing.]  The Lagrange
  penalized demixing method~\eqref{eq:gen-lagrange} is important
  because it requires less knowledge about the unobserved vectors
  \((\vct x_0,\vct y_0)\) than the corresponding constrained
  method~\eqref{eq:genconv}.  The results in this work give
  information regarding the potential for, and the limits of, the
  penalized demixing approach~\eqref{eq:genconv}.  Nevertheless, a
  precise analysis of the penalized problem~\eqref{eq:gen-lagrange}
  and its dependence on the penalty parameter \(\lambda\) would have
  real practical value.

  \emph{While a sharp phase transition characterization for the Lagrange demixing problem~\eqref{eq:gen-lagrange} remains open, the recent work~\cite{FoyMac:13} offers theoretical guarantees and explicit choices of Lagrange parameters.  }
\item[Multiple demixing.] It would be interesting to study
  demixing problems involving more than two structured vectors.

  \emph{A study of this problem appears in~\cite{WriGanMin:13}.  The present authors provide sharp phase transition characterizations for demixing an arbitrary number of signals in~\cite{McCTro:13a}.}
\item[Spherical intrinsic volumes for more cones.] Computation of
  additional decay thresholds will provide new bounds for
  \CDMq s.  The sharpest decay thresholds appear to require
  formulas for spherical intrinsic volumes.  For instance, an
  asymptotic analysis of the spherical intrinsic volumes for feasible
  cones of the Schatten 1-norm would provide sharp recovery results
  for low-rank matrix demixing problems.  Amelunxen \& B\"urgisser have made some
  recent progress in this direction by developing a formula for the
  spherical intrinsic volumes for the semidefinite
  cone~\cite{AmeBur:12a}.%
\item[Log-concavity of spherical intrinsic volumes.] 
  B\"urgisser \&   Amelunxen~\cite[Conj.~2.19]{Burgisser2010} conjecture that
  the sequence of spherical intrinsic volumes is log-concave.  This  conjecture is closely related to the question of whether the upper  and lower decay thresholds match.  %
  
  \emph{In recent work by the present authors and collaborators,  the intrinsic volumes are shown to have a nontrivial log-concave upper bound~\cite[Sec.~6.1]{AmeLotMcC:13}. While this result implies that the upper and lower decay thresholds are often equal (Section~\ref{sec:thresh-phen-conj}), the log-concavity conjecture remains open.}
\item[Extensions to more general probability measures.] The
  analysis in this work focuses on a specific random model.  It
  would be interesting to incorporate more general probability
  measures into our framework.  This may be a difficult problem; by
  the results of Section~\ref{sec:relat-line-inverse}, this question
  is closely related to the observed universality phenomenon in basis
  pursuit~\cite{Donoho2009a}.

  \emph{Bayati et al.~\cite{BayLelMon:12} provide a rigorous version the universality property for basis pursuit observed in~\cite{Donoho2009a}. It remains unclear whether  their methods adapt to demixing problems considered here.}
\end{description}

\appendix

\section{Equivalence of the constrained and penalized
  methods}
\label{sec:lagrange-parameter}

This appendix provides a geometric account of the equivalence between
the constrained~\eqref{eq:genconv} and
penalized~\eqref{eq:gen-lagrange} \CDMq s.  The
results in this section let us interpret our conditions for the
success of the constrained demixing method~\eqref{eq:genconv} as
limits on, and opportunities for, the Lagrange demixing
method~\eqref{eq:gen-lagrange}.

We begin with the following well-known result; it holds without any
technical restrictions. We omit the demonstration, which is an easy
exercise in proof by contradiction. (See
also~\cite[Cor.~28.1.1]{Rockafellar1970}.)
\begin{proposition}\label{prop:lag-to-const}
  Suppose the Lagrange problem~\eqref{eq:gen-lagrange} succeeds for
  some value \mbox{\(\lambda >0\)}.  Then~\eqref{eq:genconv} succeeds.
\end{proposition}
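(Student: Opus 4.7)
The plan is to prove the contrapositive directly via a short reduction. Assume that the Lagrange program~\eqref{eq:gen-lagrange} has $(\vct x_0,\vct y_0)$ as its unique optimum for some $\lambda>0$. I would suppose, for contradiction, that~\eqref{eq:genconv} does not succeed, i.e., there exists a pair $(\vct x',\vct y')\ne(\vct x_0,\vct y_0)$ that is feasible for~\eqref{eq:genconv} and satisfies $f(\vct x')\le f(\vct x_0)$. Feasibility unpacks to $\vct x'+\mtx Q\vct y'=\vct z_0$ and $g(\vct y')\le\alpha=g(\vct y_0)$.

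Then I would combine these two inequalities. Multiplying the second by $\lambda>0$ and adding to the first yields
\[
f(\vct x')+\lambda\,g(\vct y')\ \le\ f(\vct x_0)+\lambda\,g(\vct y_0).
\]
Since $(\vct x',\vct y')$ also satisfies the equality constraint $\vct x'+\mtx Q\vct y'=\vct z_0$, it is feasible for~\eqref{eq:gen-lagrange}, and the displayed inequality shows it attains a Lagrange objective no larger than that of $(\vct x_0,\vct y_0)$. Hence $(\vct x',\vct y')$ is also optimal for~\eqref{eq:gen-lagrange}.

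By hypothesis, $(\vct x_0,\vct y_0)$ is the \emph{unique} optimum of the Lagrangian, so we must have $(\vct x',\vct y')=(\vct x_0,\vct y_0)$, contradicting our choice of $(\vct x',\vct y')$. Therefore no such competitor exists and $(\vct x_0,\vct y_0)$ is the unique optimum of~\eqref{eq:genconv}. There is no real obstacle here: the argument does not require convexity of $f,g$, nondegeneracy of $\mtx Q$, or any regularity condition beyond $\lambda>0$, which is exactly why the proposition is stated without technical restrictions. The only mild subtlety is remembering to use $\lambda>0$ (strictly) so that the inequality $g(\vct y')\le g(\vct y_0)$ aggregates correctly with $f(\vct x')\le f(\vct x_0)$ in the Lagrange objective.
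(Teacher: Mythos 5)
Your proof is correct and is exactly the "easy exercise in proof by contradiction" that the paper alludes to while omitting the details: any competitor $(\vct x',\vct y')$ for~\eqref{eq:genconv} with $f(\vct x')\le f(\vct x_0)$ and $g(\vct y')\le g(\vct y_0)$ would tie the Lagrange objective, contradicting unique optimality there. (Minor quibble: the aggregation step only needs $\lambda\ge 0$, not $\lambda>0$, but since the hypothesis already supplies $\lambda>0$ this is immaterial.)
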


Before stating a partial converse to
Proposition~\ref{prop:lag-to-const}, we require a technical
definition.  We say that a proper convex function \(f\) is
\emph{typical at \(\vct x\)} if \(f\) is subdifferentiable at \(\vct
x\) but does not achieve its minimum at \(\vct x\).  With this
technical condition in place, we have the following complement to
Proposition~\ref{prop:lag-to-const}.

\begin{proposition}\label{lem:const-to-lag}
  Suppose \(f\) is typical at \(\vct x_0\) and \(g\) is typical at
  \(\vct y_0\).  If the constrained method~\eqref{eq:genconv}
  succeeds, then there exists a parameter \(\lambda > 0\) such that \((\vct x_0,\vct y_0)\) is an optimal point for the Lagrange 
  method~\eqref{eq:gen-lagrange}.
\end{proposition}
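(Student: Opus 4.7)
The plan is to obtain the Lagrange multiplier $\lambda$ from the KKT conditions associated with the constrained problem~\eqref{eq:genconv}. Both typicality hypotheses will do work: typicality of $g$ at $\vct y_0$ supplies a constraint qualification that guarantees KKT, and typicality of $f$ at $\vct x_0$ forces the resulting $\lambda$ to be strictly positive.

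First I would verify Slater's condition for the convex constraint. Since $g$ is typical at $\vct y_0$, it does not achieve its minimum there, so there is some $\vct y'$ with $g(\vct y') < g(\vct y_0) = \alpha$. Setting $\vct x' := \vct z_0 - \mtx Q \vct y'$ yields a point $(\vct x', \vct y')$ feasible for~\eqref{eq:genconv} with strict inequality in the convex constraint. Combined with the affine equality constraint, this allows me to invoke a standard convex KKT theorem (for example, the subdifferential form of~\cite[Thm.~28.2]{Rockafellar1970}). The success of~\eqref{eq:genconv} at $(\vct x_0, \vct y_0)$ then produces a multiplier $\lambda \ge 0$ and a dual vector $\vct \mu \in \R^d$ such that
\begin{equation*}
  -\vct \mu \in \subdiff f(\vct x_0), \qquad -\mtx Q^\adj \vct \mu \in \lambda\, \subdiff g(\vct y_0),
\end{equation*}
together with primal feasibility and complementary slackness $\lambda(g(\vct y_0) - \alpha) = 0$; the latter is automatic since $g(\vct y_0) = \alpha$.

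Next I would rule out $\lambda = 0$. If $\lambda$ were zero, the second inclusion above forces $\mtx Q^\adj \vct \mu = \vct 0$, and invertibility of $\mtx Q$ gives $\vct \mu = \vct 0$. The first inclusion then degenerates to $\vct 0 \in \subdiff f(\vct x_0)$, meaning $\vct x_0$ is a global minimizer of $f$---contradicting the typicality of $f$ at $\vct x_0$. Therefore $\lambda > 0$.

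Finally, with $\lambda > 0$ in hand, I would check that $(\vct x_0, \vct y_0)$ minimizes the Lagrangian formulation~\eqref{eq:gen-lagrange}. Using the equality constraint to eliminate $\vct x$ reduces~\eqref{eq:gen-lagrange} to the unconstrained convex minimization of $\Phi(\vct y) := f(\vct z_0 - \mtx Q \vct y) + \lambda g(\vct y)$. The chain and sum rules for subdifferentials (applicable because both $f$ and $g$ are subdifferentiable at the relevant points) yield $-\mtx Q^\adj \subdiff f(\vct x_0) + \lambda\, \subdiff g(\vct y_0) \subseteq \subdiff \Phi(\vct y_0)$, and the KKT relations above exhibit $\mtx Q^\adj \vct \mu + (-\mtx Q^\adj \vct \mu) = \vct 0$ as an element of this set. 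Since $\Phi$ is convex, $\vct 0 \in \subdiff \Phi(\vct y_0)$ forces $\vct y_0$ to minimize $\Phi$, so $(\vct x_0, \vct y_0)$ is optimal for~\eqref{eq:gen-lagrange}. The main obstacle is the constraint qualification: once Slater is verified (and typicality of $g$ is precisely what supplies it), the remainder is standard subdifferential bookkeeping, with typicality of $f$ entering only to exclude the degenerate multiplier $\lambda = 0$.
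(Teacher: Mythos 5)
Your proof is correct, but it reaches the conclusion by a genuinely different route than the paper. The paper's argument is geometric: it invokes Lemma~\ref{lem:triv-intersect} to translate success of~\eqref{eq:genconv} into the trivial intersection \(\Fcone(f,\vct x_0)\cap -\mtx Q\Fcone(g,\vct y_0)=\{\zerovct\}\), separates the two cones by a hyperplane (Klee's Hahn--Banach theorem for cones), and then uses the typicality of \emph{both} \(f\) and \(g\) through the identity that the polar of the feasible cone is the cone generated by the subdifferential~\cite[Thm~23.7]{Rockafellar1970}; this converts the separating vector \(\vct u\) into subgradients \(\vct u\in\lambda_f\subdiff f(\vct x_0)\) and \(\mtx Q^\adj\vct u\in\lambda_g\subdiff g(\vct y_0)\) with \(\lambda_f,\lambda_g>0\), and sets \(\lambda=\lambda_g/\lambda_f\). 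You instead work analytically with the constrained program itself: typicality of \(g\) supplies Slater's condition (since \(g\) does not attain its minimum at \(\vct y_0\), a strictly feasible point exists), a standard Kuhn--Tucker theorem produces the multipliers, and typicality of \(f\) is used only to exclude \(\lambda=0\). The two proofs thus deploy the same two hypotheses in quite different roles. Your route is more ``off the shelf'' and avoids both the cone-separation step and the polar-cone/subdifferential identity; it also uses only the optimality of \((\vct x_0,\vct y_0)\) for~\eqref{eq:genconv}, not its uniqueness, so it establishes a marginally more general statement. The paper's route has the virtue of flowing directly from the cone-intersection condition that organizes the rest of the analysis, making visible how the geometric success certificate doubles as a dual certificate for the Lagrangian. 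The closing subdifferential bookkeeping---eliminating \(\vct x\), noting that \(-\mtx Q^\adj\subdiff f(\vct x_0)+\lambda\,\subdiff g(\vct y_0)\subseteq\subdiff\Phi(\vct y_0)\) by the easy inclusion of the sum rule, and exhibiting \(\zerovct\) in this set---is essentially identical in both arguments.
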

\noindent
Note that there is a subtlety here: the Lagrange program may have strictly more optimal points than the corresponding constrained problem even for the best choice of \(\lambda\), so that we cannot guarantee that \((\vct x_0,\vct y_0)\) is the unique optimum.  See~\cite[Sec.~28]{Rockafellar1970} for more details.
\begin{proof}[Proof of Proposition~\ref{lem:const-to-lag}]
  The key idea is the construction of a subgradient that
  certifies the optimality of the pair \((\vct x_0,\vct y_0)\) for the Lagrange
  penalized problem~\eqref{eq:gen-lagrange} for an appropriate choice
  of parameter \(\lambda\).  As with many results in
  convex analysis, a separating hyperplane plays an important role.

  By Lemma~\ref{lem:triv-intersect}, the constrained
  problem~\eqref{eq:genconv} succeeds if and only if \(\Fcone(f,\vct
  x_0) \cap -\mtx Q \Fcone(g,\vct y_0) = \{\mathbf{0}\}\).  The
  trivial intersection of the feasible cones implies that there exists
  a hyperplane that separates these cones. (This fact is a special
  case of the Hahn--Banach separation theorem for convex cones due to
  Klee~\cite{KleeJr1955}.)  In other words, there exists some vector
  \(\vct u \ne \mathbf{0}\) such that
  \begin{equation*}
    \langle \vct u, \vct x \rangle \le 0 \text{ for all } \vct x \in
    \Fcone(f,\vct x_0),
  \end{equation*}
  and moreover
  \begin{equation*}
    \langle \vct u, \vct y \rangle \ge 0 \text{ for all } \vct y \in
    -\mtx Q \Fcone(g,\vct y_0).
  \end{equation*}
  In the language of polar cones, the first separation inequality is
 simply the statement that \(\vct u \in \Fcone(f,\vct x_0)^\circ\),
  while the second inequality is equivalent to \(\mtx Q^\adj \vct u \in
  \Fcone(g,\vct y_0)^\circ\).

  We will now show that \(\vct u\) generates a subgradient optimality
  certificate for the point \((\vct x_0,\vct y_0)\) in
  problem~\eqref{eq:gen-lagrange} for an appropriate choice of parameter
  \(\lambda >0\). We denote the subdifferential map by \(\partial\).  

  At this point, we invoke our technical assumption.  Since \(f\) is
  typical at \(\vct x_0\), the polar to the feasible cone is generated
  by the subdifferential of \(f\) at \(\vct
  x_0\)~\cite[Thm~23.7]{Rockafellar1970}.  In particular, there exists
  a number \(\lambda_f\ge 0\) such that \(\vct u \in \lambda_f \partial
  f(\vct x_0)\).  In fact, the stronger inequality \(\lambda_f>0\)
  holds because \(\vct u\ne \zerovct\).  For the same reason, there
  exists a number \(\lambda_g >0\) such that \(\mtx Q^\adj \vct u \in
  \lambda_g \partial g(\vct y_0)\).

  Define \(h(\vct x) \defeq \lambda_f f(\vct x) + \lambda_g g(\mtx
  Q^\adj(\vct z_0 - x))\).  By standard transformation rules for
  subdifferentials~\cite[Thms.~23.8,~23.9]{Rockafellar1970}, we have 
  \begin{equation*}
    \partial h(\vct x_0) \supset 
    \lambda_f \partial f(\vct x_0) -\lambda_g \mtx Q \partial
    g(\vct y_0),
  \end{equation*}
  where \(A - B \defeq A + (-B)\) is the Minkowski sum of the sets
  \(A\) and \(-B\).  Since \(\vct u\in \lambda_f \partial f(\vct
  x_0)\) and \(\vct u \in \lambda_g \mtx Q \partial g(\vct y_0)\), we
  see \(\mathbf{0} \in \partial h(\vct x_0)\). By the definition of
  subgradients, \(\vct x_0\) is a global minimizer of \(h\).
  Introducing the variable \(\vct y = \mtx Q^\adj(\vct z_0-\vct x)\),
  it follows that \((\vct x_0,\vct y_0)\) is a global minimizer of
  \begin{equation*}
    \minprog{}{f(\vct x) +\frac{\lambda_g}{\lambda_f} g(\vct y)}{\vct
      x + \mtx Q \vct y = \vct z_0.} 
  \end{equation*}
  This is Lagrange problem~\eqref{eq:gen-lagrange} with the parameter
  \(\lambda = \lambda_g/\lambda_f>0\), so we have the result.
\end{proof}

\section{Regions of failure and uniform guarantees}
\label{sec:auxilliary-results}
We now present the proofs of the results of
Section~\ref{sec:find-weak-thresh} concerning regions of failure
(Theorem~\ref{thm:failure-thresh}) and strong demixing guarantees
(Theorem~\ref{thm:strong-thm}) for the \CDMq . These demonstrations
closely follow the pattern laid down by the proof of
Theorem~\ref{thm:sharp-thresh}.

\subsection{Regions of failure: The proof of
  Theorem~\ref{thm:failure-thresh}}
\label{sec:region-fail-proof}
We first state an analog of Theorem~\ref{thm:asympt-thresh-gen}.  As
usual, \(\Index\) is an infinite set of indices.
\begin{theorem}\label{thm:asympt-lower-bounds}
  Let \(\{K^{(d)}\subset \R^d\mid d\in \Index\}\) and \(\{\tilde
  K^{(d)}\subset \R^d \mid d \in \Index\}\) be two ensembles of closed
  convex cones with lower decay thresholds \(\kappa_\star\) and
  \(\tilde \kappa_\star\).  If~ \(\kappa_\star + \tilde \kappa_\star >
  1\), then there exists an  \(\varepsilon >0\) such that~
  \(\mathbb{P}\bigl\{ K^{(d)} \cap \mtx Q \tilde K^{(d)} \ne
  \{\mathbf{0}\}\bigr\} \ge 1- \mathrm{e}^{-\varepsilon d} \) for all
  sufficiently large \(d\).
\end{theorem}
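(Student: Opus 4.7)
The plan is to mirror the proof of Theorem~\ref{thm:asympt-thresh-gen} from Section~\ref{sec:proof-theo-sharp}, applying the spherical kinematic formula~\eqref{eq:sphere-kin} together with the decay hypothesis on the intrinsic volumes, but now aimed at a lower bound for the touching probability. The extra tool required beyond the proof of Theorem~\ref{thm:asympt-thresh-gen} is the spherical Gauss--Bonnet formula (Fact~\ref{fact:spherical-gauss-bonnet}), which states that for a closed convex cone \(K \subset \R^d\) that is not a subspace,
\[
\sum_{i\text{ even}} v_i(K) \;=\; \sum_{i\text{ odd}} v_i(K) \;=\; \tfrac{1}{2}.
\]
Once Theorem~\ref{thm:asympt-lower-bounds} is in hand, Theorem~\ref{thm:failure-thresh} will follow by the same reduction---via Lemma~\ref{lem:main-geom-introduction} and Remark~\ref{rem:touch-prob}---used to derive Theorem~\ref{thm:sharp-thresh} from Theorem~\ref{thm:asympt-thresh-gen}.

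For the principal case in which at least one of \(K = K^{(d)}\) or \(\tilde K = \tilde K^{(d)}\) is not a subspace, I would first reparameterize the kinematic formula by setting \(j = d-1-i+k\), so that \(i+j = d-1+k\); the factor \(1+(-1)^k\) then restricts the sum to pairs \((i,j)\) with \(i+j \ge d-1\) and \(i+j \equiv d-1 \pmod 2\), and contributes an overall factor of~\(2\). Next I would apply Gauss--Bonnet to the \emph{unrestricted} parity sum: decoupling the parity condition into an inner sum over \(i\) at each fixed \(j\), each inner sum equals \(\tfrac12\), so
\[
\sum_{\substack{-1 \,\le\, i,j\, \le\, d-1 \\ i+j\,\equiv\, d-1\;(\bmod\,2)}} v_i(K)\, v_j(\tilde K) \;=\; \tfrac{1}{2},
\]
and the contribution from \(i = -1\) or \(j = -1\) is bounded by \(v_{-1}(K) + v_{-1}(\tilde K)\), which is exponentially small by the lower decay threshold hypothesis. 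Rearranging yields
\[
\mathbb{P}\bigl\{K \cap \mtx Q \tilde K \ne \{\vct 0\}\bigr\} \;\ge\; 1 \;-\; 2\!\sum_{i+j < d-1}\! v_i(K)\, v_j(\tilde K) \;-\; \mathrm{e}^{-\varepsilon_0 d}.
\]
To control the remaining tail, pick \(\kappa < \kappa_\star\) and \(\tilde\kappa < \tilde\kappa_\star\) with \(\kappa + \tilde\kappa > 1\), possible by hypothesis. For \(d\) large, any pair with \(i + j < d-1\) must satisfy \(i \le \lceil \kappa d\rceil\) or \(j \le \lceil \tilde\kappa d\rceil\), since otherwise \(i + j \ge (\kappa + \tilde\kappa) d > d\). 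A union bound using Definition~\ref{def:aeub} then bounds the tail by \(d\,\mathrm{e}^{-\varepsilon_1 d} + d\,\mathrm{e}^{-\varepsilon_2 d}\), finishing this case.

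When both \(K\) and \(\tilde K\) are subspaces, Gauss--Bonnet is unavailable, but the argument is actually easier: Proposition~\ref{prop:subspace-iv} combined with the hypothesis forces \(\dim K + \dim \tilde K > d\) for all large \(d\), and two randomly oriented subspaces of combined dimension exceeding~\(d\) intersect non-trivially with probability one. The chief technical obstacle is the parity restriction hidden in the factor \(1+(-1)^k\) of the kinematic formula: pairs with the wrong parity of \(i+j\) account for half of the available mass yet are invisible to the sum, so a naive argument would lose a factor of two. It is precisely the spherical Gauss--Bonnet formula that certifies the surviving ``half-mass'' is exactly~\(\tfrac12\), recovering the missing factor---this is the extra technical finesse alluded to after the statement of Theorem~\ref{thm:failure-thresh}.
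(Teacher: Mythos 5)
Your proposal is correct and follows essentially the same route as the paper's proof: both rest on the spherical kinematic formula, the spherical Gauss--Bonnet identity to recover the full parity mass of \(\tfrac12\), and the lower decay thresholds to kill the complementary tail \(i+j<d-1\), with the two-subspace case dispatched by a dimension count and general position. Your bookkeeping---symmetrizing the double sum to \(i+j\ge d-1\) and subtracting the tail from the total parity mass---is a tidier packaging of the paper's ``inner sum \(=1-\tilde{\xi}_i\), outer sum \(=1-\xi\)'' computation, and it absorbs the paper's separate ``exactly one subspace'' case into the main argument. The one point to patch: your parity identity applies Gauss--Bonnet to \(K\) through the inner sum over \(i\), which requires \(K\) itself (not merely one of the two cones) to be a non-subspace; when \(K\) is a subspace and \(\tilde K\) is not, either perform the inner sum over \(j\) at fixed \(i\) instead, or swap the roles of the two cones using the symmetry of the striking probability noted in Remark~\ref{rem:symmetry}.
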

Theorem~\ref{thm:failure-thresh} follows  from
Theorem~\ref{thm:asympt-lower-bounds} in the same way that
Theorem~\ref{thm:sharp-thresh} follows from
Theorem~\ref{thm:asympt-thresh-gen} with one additional technical
point regarding closure conditions. 
\begin{proof}[Proof of Theorem~\ref{thm:failure-thresh} from
  Theorem~\ref{thm:asympt-lower-bounds}]
  By the assumptions in Theorem~\ref{thm:failure-thresh}, the
  ensembles \(\bigl\{\overline{\Fcone}(f^{(d)}, \vct
  x_0^{(d)})\bigr\}\) and \(\bigl\{-\overline{ \Fcone}(g^{(d)},\vct
  y_0^{(d)})\bigr\}\) of closed cones satisfy the hypotheses of
  Theorem~\ref{thm:asympt-lower-bounds}.  Therefore, there is an \(\eps >0\) such that the closure of the feasible cones
  have \emph{non}trivial intersection with probability at least
  \(1-\mathrm{e}^{- \varepsilon d}\), for all large enough \(d\).

  It follows from Remark~\ref{rem:touch-prob} that the probability of
  the event \(\overline{\Fcone}(f^{(d)}, \vct x_0^{(d)})\cap -\mtx Q
  \overline{\Fcone}(g^{(d)}, \vct y_0^{(d)}) \ne \{\zerovct\}\) is
  equal to the probability of the event \(\Fcone(f^{(d)}, \vct
  x_0^{(d)}) \cap - \mtx Q \Fcone(g^{(d)}, \vct y_0^{(d)}) \ne\{\zerovct\}\).
  Applying the geometric optimality condition of
  Lemma~\ref{lem:triv-intersect} immediately implies
  that~\eqref{eq:genconv} fails with probability at least
  \(1-\mathrm{e}^{-\varepsilon d}\).
\end{proof}

The proof of Theorem~\ref{thm:asympt-lower-bounds} requires an
additional fact  concerning spherical intrinsic volumes.
\begin{fact}[Spherical Gauss--Bonnet formula~\protect{\cite[P.~258]{Schneider2008}}] \label{fact:spherical-gauss-bonnet} For any
  closed convex cone \(K\subset \R^d\) that is not a subspace,
  we have
  \begin{equation*}
    \sum_{\substack{i=-1\\ i \text{ even}}}^{d-1} v_i(K) = 
    \sum_{\substack{i=-1\\ i \text{ odd}}}^{d-1} v_i(K) = \frac{1}{2}.
  \end{equation*}
\end{fact}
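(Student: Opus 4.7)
The plan is to reduce the stated identity to a single alternating-sum identity, handle the polyhedral case by a face decomposition, and invoke the spherical Gauss--Bonnet theorem for the essential geometric input.

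Since Fact~\ref{fact:obvious}.\ref{item:obv-unity} already furnishes $\sum_{i=-1}^{d-1} v_i(K) = 1$, the two equalities in the statement reduce to the single claim
\[
\sum_{i=-1}^{d-1}(-1)^i\, v_i(K) \;=\; 0
\]
for any closed convex cone $K$ that is not a subspace. By continuity of the spherical intrinsic volumes under the spherical Hausdorff metric (as invoked in the remark after Fact~\ref{fact:obvious}), and because every non-subspace closed convex cone is a Hausdorff limit of non-subspace polyhedral cones, it suffices to prove the alternating-sum vanishing for polyhedral $K$.

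For polyhedral $K$, the probabilistic characterization in Definition~\ref{def:sphere-intrinsic-vols} partitions $\R^d$, up to a Gaussian-null set, into the relatively open regions $U_F := \{\vct\omega : \Pi_K(\vct\omega)\in \mathrm{ri}(F)\}$ indexed by the faces $F$ of $K$. Writing $\alpha(F)$ for the standard Gaussian mass of $U_F$, we have $v_i(K)=\sum_{\dim F = i+1}\alpha(F)$, so the alternating sum above becomes $\sum_F (-1)^{\dim F-1}\alpha(F)$. The central step is to identify this weighted face sum with the Euler characteristic of the spherical polytope $P := K\cap\mathsf{S}^{d-1}$ via the polyhedral spherical Gauss--Bonnet--Chern formula. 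Because $K$ is not a subspace, $P$ is strictly contained in a closed hemisphere and is therefore a contractible spherical polytope; the Gauss--Bonnet identity then collapses the weighted face sum to zero.

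The principal obstacle is the Gauss--Bonnet identification itself. The most direct route is the inclusion--exclusion argument over the face lattice of $K$ going back to Schl\"afli and extended by McMullen~\cite{McMullen1975}; a streamlined modern treatment is Chapter~6.5 of Schneider \& Weil, already cited in the paper for Fact~\ref{fact:sphere-kin}. The non-subspace hypothesis is essential: for an $n$-dimensional subspace $L$, Proposition~\ref{prop:subspace-iv} gives $v_i(L)=\delta_{i,n-1}$ and the alternating sum equals $(-1)^{n-1}$, reflecting that the ``spherical polytope'' corresponding to a subspace is a full great sub-sphere rather than a proper piece of $\mathsf{S}^{d-1}$, and so contributes an extra Euler characteristic term that destroys the cancellation.
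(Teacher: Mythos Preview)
The paper does not prove this statement; it is stated as a \emph{Fact} with a citation to Schneider \& Weil~\cite[p.~258]{Schneider2008} and is used as a black box in the proof of Theorem~\ref{thm:asympt-lower-bounds}. There is therefore no ``paper's own proof'' to compare against.

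Your sketch is a reasonable outline of how such a result is established, and the reduction to the alternating-sum identity via Fact~\ref{fact:obvious}.\ref{item:obv-unity} is correct and clean. However, the proposal is close to circular at its central step: you invoke ``the polyhedral spherical Gauss--Bonnet--Chern formula'' to collapse the weighted face sum, but that identity is essentially the statement you are trying to prove (in the form it appears in Schneider \& Weil). You acknowledge this by calling it the ``principal obstacle'' and deferring to McMullen~\cite{McMullen1975} and to Chapter~6.5 of~\cite{Schneider2008}---which is exactly the reference the paper already cites for the fact itself. So as written, your proposal is less a proof than an annotated pointer to the same literature. If you want a self-contained argument, the missing piece is an independent derivation of the alternating-sum identity, for instance via the inclusion--exclusion over the face lattice (the ``Sommerville'' or ``Brianchon--Gram'' relation for polyhedral cones) rather than by invoking Gauss--Bonnet by name.
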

In the proof below, the Gauss--Bonnet formula is crucial for dealing
with the parity term \((1+(-1)^k)\) that arises in the spherical
kinematic formula~\eqref{eq:sphere-kin}.

\begin{proof}[Proof of Theorem~\ref{thm:asympt-lower-bounds}]
  Since the Gauss--Bonnet formula only applies to cones that are not
  subspaces, we split the demonstration into three cases: neither
  ensemble \(\{K^{(d)}\}\) nor \(\{\tilde K^{(d)}\}\) contains a
  subspace, one ensemble consists of subspaces, or both ensembles
  consist of subspaces.  We assume without loss that each case holds
  for every dimension \(d\in \Index\); the proof extends to the
  general case by considering subsequences where only a
  single case applies.

  We drop the superscript \(d\) for clarity.  Assume first that
  neither \(K\) nor \(\tilde K\) is a subspace.  Let \(\varpi(k) =
  (1+(-1)^k)\) be the parity term in the spherical kinematic
  formula~\eqref{eq:sphere-kin}.  Changing the order of summation in the
  spherical kinematic formula, we find
  \begin{equation*}
    P := \mathbb{P}\left\{K \cap \mtx Q \tilde K \ne
      \{\mathbf{0}\} \right\} = \sum_{i=0}^{d-1} v_i(K)
    \sum_{k=d-i-1}^{d-1} \varpi(k-d+1+i) v_k(\tilde K).
  \end{equation*}
  Let \(\kappa < \kappa_\star\) and \(\tilde \kappa < \tilde
  \kappa_\star\) with \(\kappa+ \tilde \kappa > 1\); such scalars
  exist because \(\kappa_\star + \tilde \kappa_\star >1\).  By
  positivity of the spherical intrinsic volumes
  (Fact~\ref{fact:obvious}.\ref{item:obv-positivity}), we have
  \begin{equation}\label{eq:double-sum}
    P \ge \sum_{i= \lceil \kappa d \rceil + 1}^{d-1} v_i(K)
    \sum_{k=d-i-1}^{d-1} \varpi(k-d+1+i) v_k(\tilde K)
 \end{equation}
 We will see that the inner sum above is very close to one.  Indeed,
 \begin{equation}\label{eq:inner-sum-discrep}
   \sum_{k=d-i-1}^{d-1} \varpi(k-d+1+i) v_k(\tilde K) = 
   \sum_{k=-1}^{d-1} \varpi(k-d+1+i) v_k(\tilde K)  - \tilde{\xi}_{i}=1-\tilde{\xi}_i,
 \end{equation}
 where \(\tilde{\xi}_i\) is a discrepancy term
 (see~\eqref{eq:discrepancy-obvious} below).  The second equality
 follows by the spherical Gauss--Bonnet formula
 (Fact~\ref{fact:spherical-gauss-bonnet}), and the assumption that
 \(\tilde K\) is not a subspace.

 We now bound the discrepancy term \(\tilde {\xi}_i\) uniformly over
 \(i \ge \lceil\kappa d\rceil+1\).  Since \(\kappa+\tilde \kappa>1\),
 for any \(i \ge \lceil \kappa d \rceil+1\) we have \(d-2-i \le
 \lceil\tilde \kappa d\rceil\). By definition of the lower decay
 threshold, we see that the discrepancy term must be small: for any
 \(i \ge \lceil \kappa d \rceil+1\),
 \begin{equation}\label{eq:discrepancy-obvious}
   \tilde{\xi}_i = \sum_{k=-1}^{d - 2-i} \varpi(k-d+1+i) v_k(\tilde K)
   \le 2 \sum_{k=-1}^{\lceil \tilde \kappa d\rceil } v_k(\tilde
   K)  \le 2(d-1)\mathrm{e}^{-\varepsilon' d},
 \end{equation}
 for some \(\varepsilon' >0\) and all sufficiently large
 \(d\). Applying~\eqref{eq:inner-sum-discrep}
 and~\eqref{eq:discrepancy-obvious} to~\eqref{eq:double-sum}, we find
 \begin{equation}\label{eq:prob-lower-2}
   P \ge \sum_{i=\lceil \kappa d \rceil + 1}^{d-1} v_i(K)\bigl( 1-
   2(d-1)\mathrm{e}^{-\varepsilon' d}\bigr) \ge \left(\sum_{i=\lceil \kappa d \rceil
     + 1}^{d-1} v_i(K) \right)- 2(d-1) \mathrm{e}^{-\varepsilon' d},
 \end{equation}
 where the second inequality follows from Fact~\ref{fact:obvious}: the
 spherical intrinsic volumes are positive and sum to one.
 We now reindex the sum on the right-hand side
 of~\eqref{eq:prob-lower-2} over \(i = -1,0,\dotsc, d-1\)
 with only exponentially small loss:
 \begin{equation*}
   \sum_{i=\lceil \kappa d \rceil + 1}^{d-1} v_i(K^{(d)})  =
   \sum_{i=-1}^{d-1} v_i(K) - \xi %
 \end{equation*}
 where the discrepancy \(\xi\) satisfies 
 \begin{equation*}
    \xi = \sum_{i = -1}^{\lceil \kappa d \rceil } v_k(K) \le
   (d-1) \mathrm{e}^{-\varepsilon'' d},
 \end{equation*}
 for some \(\eps''>0\) and all sufficiently large \(d\) by definition
 of the lower decay threshold.  Applying these observations
 to~\eqref{eq:prob-lower-2}, we deduce that
 \begin{equation*}
   P \ge  \sum_{i=-1}^{d-1} v_i(K)   - (d-1)\Bigl( \mathrm{e}^{-\varepsilon' d} +
   \mathrm{e}^{-\varepsilon''d}\Bigr)\ge  \sum_{i=-1}^{d-1} v_i(K) - \mathrm{e}^{-\varepsilon d}
 \end{equation*}
 for some \(\varepsilon >0\) and all sufficiently large \(d\).  Since
 \(\sum_{i=-1}^{d-1} v_i(K)= 1\) by
 Fact~\ref{fact:obvious}.\ref{item:obv-unity}, this shows the result
 when both \(K\) and \(\tilde K\) are not subspaces, completing the
 first case.

 For the second case, suppose that only one of the cones is a
 subspace.  Without loss, we may  assume \(\tilde K\) is the subspace
 by the symmetry of the spherical kinematic formula (see
 Remark~\ref{rem:symmetry}).  Denote the dimension of the subspace
 \(\tilde K\) by \(\tilde n\defeq \mathrm{dim}(\tilde K)\), and take
 parameters \(\kappa\) and \(\tilde \kappa\) as above.

 By Proposition~\ref{prop:subspace-iv}, the spherical intrinsic
 volumes of \(\tilde K\) are given by \(v_i(\tilde K) =
 \delta_{i,\tilde n-1}\).  Inserting this Kronecker~\(\delta\) into the
 spherical kinematic formula~\eqref{eq:sphere-kin} and simplifying the
 resulting expression, we find the probability of interest is given by
 \begin{equation*}\label{eq:one-subspace-copy}
   P \defeq \mathbb{P}\left\{K \cap \mtx Q \tilde K \ne
      \{\mathbf{0}\} \right\}= \sum_{k=d-\tilde n}^{d-1} \varpi(k+\tilde n-d) v_k(K).
 \end{equation*}
 Reindexing the sum over \(k=-1,0,\dotsc,d-1\), we see
 \begin{equation}\label{eq:P-one-subspace}
   P= \sum_{k=-1}^{d-1}\varpi(k+\tilde n-d) v_k(K)-
   \sum_{i=-1}^{d-\tilde n-1}\varpi(k+\tilde n-d) v_i(K)
   = 1- \sum_{i=-1}^{d-\tilde n-1}\varpi(k+\tilde n-d)v_i(K)
 \end{equation}
 where the second equality holds by the spherical Gauss--Bonnet
 formula (Fact~\ref{fact:spherical-gauss-bonnet}).

 We now show that \(\tilde n\) is relatively large. The definition of
 the lower decay threshold implies that there exists an \(\eps' >0\)
 such that
 \begin{equation*}
   v_i(\tilde K)= \delta_{i,\tilde n-1} \le \econst^{-\eps' d} \;\;\;\text{for all}\;\;
   i \le \lceil \tilde \kappa d\rceil
 \end{equation*}
 when \(d\) is sufficiently large.  This inequality cannot
 hold if \(\tilde n -1\le \lceil \tilde \kappa d\rceil \), so we deduce that
 \(\tilde n \ge \tilde \kappa d \) for all sufficiently large \(d\).

 Since \(\tilde n\ge \tilde\kappa d\) and \(\kappa +\tilde \kappa > 1\), we
 must have \(d-\tilde n-1 < \lceil \kappa d\rceil\) for all sufficiently
 large \(d\).
 Applying the definition of the lower decay threshold,
 we find the sum on the right-hand side of~\eqref{eq:P-one-subspace}
 is exponentially small: there exists an \(\eps''>0\) such
 that
 \begin{equation*}
\sum_{i=-1}^{d-\tilde n-1}\varpi(k+\tilde n - d)v_i(K)\le 2(d-1)\econst^{-\varepsilon'' d}
\end{equation*}
 for all sufficiently large \(d\).  The result for the second case
 follows immediately.

 Finally, we consider the case when both of the cones are subspaces.
 Suppose \(K\) has dimension \(n\), while \(\tilde K\) has dimension
 \(\tilde n\), and let \(\kappa\), \(\tilde \kappa\) be as above.  As
 in the second case, we find that \(n\ge \kappa d\) and \(\tilde n\ge
 \tilde \kappa d\) when \(d\) is sufficiently large.  Then the
 inequality \(\kappa_\star + \tilde \kappa_\star >1\) implies that
 \(n+\tilde n > d\), that is, the sum of the dimensions of the
 subspaces is larger than the ambient dimension.  A standard fact from
 linear algebra implies \(K \cap \mtx Q \tilde K \ne \{\vct 0\}\) for
 any unitary \(\mtx Q\)---in other words, for all \(d\) large enough,
 the probability of nontrivial intersection is one.  This
 completes the third case, and we are done.
\end{proof}

\subsection{Proof of the strong guarantees of
  Theorem~\ref{thm:strong-thm}}
\label{sec:proof-strong-theorem}

\begin{proof}[Proof of Theorem~\ref{thm:strong-thm}]
  For clarity, we drop the superscript \(d\) in this proof.
  We begin with the union bound: the probability of
  interest \(P\) is bounded above by
  \begin{equation}\label{eq:union-bd}
   P \defeq \mathbb{P}\left\{  K\cap \mtx Q 
      \tilde K \ne \{\vct 0\} \,\text{ for any }\, K \in
      \mathcal{K},\; \tilde{K} \in
      \tilde{\mathcal{K}}\right\}  \le |\mathcal{K}|
    \cdot |\tilde{\mathcal K}| \cdot \mathbb{P}\left\{ K
      \cap \mtx Q  \tilde K \ne \{\vct 0\}\right\}.
  \end{equation}
  From here, the proof closely parallels the proof of
  Theorem~\ref{thm:asympt-thresh-gen}, so we compress the
  demonstration.  We consider two cases, one where at least one cone
  is a subspace, the other where neither cone is a subspace; the
  result extends to the mixed case by considering subsequences.

  Suppose first that at least one cone is a subspace.  Let
  \(\theta>\theta_\star\) and \(\tilde \theta > \tilde \theta_\star\)
  with \(\theta+\tilde \theta <1\).  We bound the probability on the
  right-hand side of~\eqref{eq:union-bd} by
  \begin{equation*}
    \frac{1}{2}\mathbb{P}\left\{ K
      \cap \mtx Q  \tilde K \ne \{\vct 0\}\right\} 
    \le \Sigma_1+\Sigma_2+\Sigma_3+\Sigma_4,
  \end{equation*}
  where the \(\Sigma_i\) are given in~\eqref{eq:sigmas}.  The fact
  that \(\theta + \tilde \theta<1\) implies that \(\Sigma_1 = 0\) for
  sufficiently large \(d\), as in the proof of
  Theorem~\ref{thm:asympt-thresh-gen}.  Since \(\theta>
  \theta_\star\), the definition of the upper decay threshold at level
  \(\psi+\tilde \psi\) implies
  \begin{equation*}
    \Sigma_2 \le \sum_{i=\lceil \theta d \rceil + 1}^{d-1} v_i(K)
    \le (d-1) \mathrm{e}^{-d (\psi + \tilde \psi+ \varepsilon')}
  \end{equation*}
  for some \(\varepsilon'>0\) and all sufficiently large \(d\).  With
  analogous reasoning, we find similar exponential bounds for
  \(\Sigma_3\) and \(\Sigma_4\):
  \begin{equation*}
    \Sigma_3 \le (d-1) \mathrm{e}^{-d(\psi + \tilde \psi +
      \varepsilon'')} , \quad \Sigma_4 \le (d-1) \mathrm{e}^{-d( \psi
      + \tilde \psi +\varepsilon''')}
  \end{equation*}
  again for positive \(\varepsilon'',\varepsilon'''\) and all
  sufficiently large \(d\).  Summing these inequalities and taking
  \(d\) sufficiently large gives
  \begin{equation*}
       \mathbb{P}\left\{ K
      \cap \mtx Q  \tilde K \ne \{\vct 0\}\right\}  \le
    \mathrm{e}^{-d(\psi + \tilde \psi + \hat \varepsilon)}
  \end{equation*}
  for some \(\hat \varepsilon>0\).  The claim then follows from our
  exponential upper bound on the growth of \(|\mathcal{K}|\) and
  \(|\tilde{\mathcal{K}}|\) with \(\eta = \hat \varepsilon/2\):
  \begin{equation*}
    P \le 2\cdot |\mathcal{K}| \cdot |\tilde{\mathcal{K}}|\cdot \mathrm{e}^{ d (\psi + \tilde \psi + \hat \varepsilon/2) -
      d(\psi +\tilde \psi + \hat \varepsilon)} =
    2\mathrm{e}^{-\frac{\hat\varepsilon}{2}d}.
  \end{equation*}
  Taking \(\varepsilon = \hat \varepsilon/4\) and \(d\) sufficiently
  large gives the claim in the first case. 

  Now consider the case where both cones are subspaces, and let
  \(n\defeq\dim(K) \) and \(\tilde n\defeq \dim(\tilde K)\).  Take
  parameters \(\theta > \theta_\star\), and \(\tilde \theta>\tilde
  \theta_\star\), such that \(\theta+\tilde\theta <1\).  As in the
  proof of Theorem~\ref{thm:asympt-thresh-gen}, the
  Kronecker~\(\delta\) expression for the intrinsic volumes of the
  subspaces \(K\) and \(\tilde K\) given by
  Proposition~\ref{prop:subspace-iv}, combined with the definition of
  the upper decay threshold, reveals that \(n\le \lceil \theta
  d\rceil\) and \(\tilde n \le \lceil \tilde \theta d\rceil\) for all
  sufficiently large \(d\).  The fact that \(\theta+\tilde \theta <1\)
  implies \(n+\tilde n < d\) for all sufficiently large \(d\).  Since
  randomly oriented subspaces are almost always in general position,
  the probability that \(K\cap \mtx Q K\ne \{\zerovct\}\) is zero.
  This is the second case, so we are done.
\end{proof}

\section{Decay thresholds for feasible cones of the
  \protect{$\ell_1 $} norm}
\label{sec:asympt-thresh-calc}

This section describes how we compute decay thresholds for the
feasible cone of the \(\ell_1\) at sparse vectors. The 
polytope angle calculations appearing in~\cite{Donoho2006} form an
important part of this computation.  For convenient comparisons,
Table~\ref{tab:notation-translation} provides a map between our
notation and that of the reference.

Fix a sparsity parameter \(\tau\in [0,1]\), and let \(\Index\) be an
infinite set of indices. For each dimension \(d \in \Index\), we
define a vector \(\vct x^{(d)} \in \R^d\) such that \(\nnz(\vct x^{(d)}) = \lceil
\tau d\rceil\). The following results describes the behavior of the
spherical intrinsic volumes of the feasible cone
\(\Fcone(\lone{\cdot},\vct x^{(d)})\) in terms of the sparsity \(\tau\) and the
normalized index \(\theta = i/d\) when \(d\) is large.

\begin{lemma}\label{thm:l1-intrinsic-upper-bound}
  Consider the ensemble above.  There exists a function \(\Psi_{\mathrm{total}}
  \) such that, for every \(\varepsilon > 0\) and all sufficiently large
  \(d\in \mathcal{D}\), we have
  \begin{equation}\label{eq:l1-exponent-psis}
    \frac{1}{d}\log\Bigl(v_{\lceil\theta
      d\rceil}\bigl(\Fcone(\lone{\cdot},\vct x^{(d)})\bigr)\Bigr) \le
    \Psi_{\mathrm{total}}(\theta,\tau) + \varepsilon
 \end{equation}
  for all \(\theta \in [\tau,1]\), and 
  \begin{equation}\label{eq:l1-exponent-zeros}
    v_{\lceil\theta
      d\rceil}\bigl(\Fcone(\lone{\cdot},\vct x^{(d)})\bigr) = 0
  \end{equation}
  for \(\theta \in [0, \tau)\).
\end{lemma}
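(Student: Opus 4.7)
The plan is to express each spherical intrinsic volume of the feasible cone as a face-weighted sum of internal and external angles, and then invoke the asymptotic polytope-angle estimates of Donoho and Tanner~\cite{Donoho2006}. By the basis invariance of spherical intrinsic volumes (Fact~\ref{fact:obvious}.\ref{item:obv-invariant}) and since the descent cone of $\lone{\cdot}$ depends only on the support and sign pattern of its argument, I may assume without loss of generality that $\vct x^{(d)} = (\vct 1_k, \vct 0_{d-k})$ with $k = \lceil \tau d\rceil$. The feasible cone then takes the explicit form
\begin{equation*}
K_k \;=\; \Bigl\{\vct \delta \in \R^d : \textstyle\sum_{i=1}^{k}\delta_i + \sum_{j=k+1}^{d}|\delta_j| \le 0\Bigr\},
\end{equation*}
whose lineality space $L = \{\vct \delta : \vct \delta_{\{k+1,\dotsc,d\}}=\vct 0,\;\sum_{i\le k}\delta_i=0\}$ has dimension $k-1$.

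For the vanishing claim~\eqref{eq:l1-exponent-zeros}: because $L$ is the smallest face of $K_k$, every Euclidean projection onto $K_k$ lies in a face of dimension at least $k-1$, so Definition~\ref{def:sphere-intrinsic-vols} forces $v_i(K_k)=0$ whenever $i<k-2$. For any fixed $\theta<\tau$, the inequality $\lceil\theta d\rceil+3\le\lceil\tau d\rceil$ holds for all sufficiently large $d$, which delivers~\eqref{eq:l1-exponent-zeros}.

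For $\theta\ge\tau$, I will pass to the pointed quotient $K_k/L\subset L^{\perp}$ (whose intrinsic volumes relate to those of $K_k$ by a shift of $k-1$) and invoke the standard angle-sum representation of spherical intrinsic volumes for polyhedral cones,
\begin{equation*}
v_{i}(K_k) \;=\; \sum_{\dim F = i+1} \beta(F)\,\gamma(F,K_k),
\end{equation*}
where $\beta(F)$ is the internal angle of $F$ at its apex and $\gamma(F,K_k)$ is the external angle of $K_k$ along $F$. The $(i+1)$-dimensional faces of $K_k$ are indexed by a subset $T\subseteq\{k+1,\dotsc,d\}$ of off-support coordinates forced to zero together with a sign pattern on its complement; by the symmetry of $K_k$, all faces of a given dimension contribute identical angle products, so the sum collapses to a binomial count $\binom{d-k}{d-i-1}\,2^{i-k+1}$ times a single product $\beta\cdot\gamma$. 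Under the polar identification that matches the descent cone at a $k$-sparse vector with the normal geometry of a $(k-1)$-face of the cross-polytope $C^d$, this single angle product coincides with the internal-external angle product of cross-polytope faces analyzed in~\cite{Donoho2006}. Their asymptotic estimates supply exponential rates $\Psi_{\text{int}}(\theta,\tau)$ and $\Psi_{\text{ext}}(\theta,\tau)$, while Stirling's formula yields a combinatorial rate $\Psi_{\text{com}}(\theta,\tau)$ from the binomial weight. Setting $\Psi_{\text{total}} := \Psi_{\text{com}} + \Psi_{\text{int}} + \Psi_{\text{ext}}$ then yields~\eqref{eq:l1-exponent-psis}.

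The principal technical difficulty is establishing the \emph{uniformity} of the Donoho-Tanner asymptotic expansions in the parameter $\theta$ on the compact interval $[\tau,1]$, so that the finite-$d$ error can be absorbed into the additive $\varepsilon$. A secondary, largely bookkeeping obstacle is reconciling the polytope-angle indexing of~\cite{Donoho2006} (faces of the cross-polytope $C^d$) with the cone-geometric indexing here; the translation relies on the standard duality $v_i(K)=v_{d-2-i}(K^{\circ})$ together with the quotient argument that accounts for the lineality space~$L$.
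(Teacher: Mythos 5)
Your plan follows essentially the same route as the paper's proof: the angle-sum representation of spherical intrinsic volumes in terms of internal and external angles, the identification of the faces of the feasible cone with the faces of the cross-polytope that contain the $(k-1)$-dimensional face determined by the support and signs of $\vct x^{(d)}$, the resulting binomial face count, and the uniform Donoho--Tanner angle asymptotics, with the vanishing claim handled by the dimension of the minimal face. Two minor bookkeeping points: your face count $\binom{d-k}{d-i-1}2^{i-k+1}$ is off by one from the correct $2^{i-k+2}\binom{d-k}{i-k+2}$ for the $(i+1)$-dimensional faces (harmless, since a unit shift does not change the exponential rate), and the polarity/quotient detour is unnecessary---the paper obtains the angle identification directly from the identity $\Fcone(\lone{\cdot},\vct x)=\mathrm{cone}\bigl(C-\{\vct x/\lone{\vct x}\}\bigr)$, because internal and external angles depend only on the local structure of the cross-polytope at that face.
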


We discuss the definition and computation of the normalized exponent
\(\Psi_{\mathrm{total}}\) in Section~\ref{sec:computing-psis}.  This
function provides decay thresholds for the ensemble
\(\bigl\{\Fcone(\lone{\cdot},\vct x^{(d)})\mid d \in \Index\bigr\}\) in the same
way that the limit~\eqref{eq:binom-entropy-limit} provides decay
thresholds for the ensemble of orthants.  See
Section~\ref{sec:comp-asympt-expon} for details.

\begin{table}[t]
  \centering
  \renewcommand{\arraystretch}{1.1}
  \caption{\textsl{Notation translation between 
      this work
      and~\cite{Donoho2006}.} Note that in the reference, 
    \(\Psi_{\mathrm{net}}\) is defined for
    three arguments, but only depends on two parameters,
    namely \(\nu\) and \(\rho\delta\).} 
  \label{tab:notation-translation}
 \begin{tabular}{p{3.5cm}cc}
   \toprule
   \textbf{Parameter} & \textbf{Our notation} 
   & \textbf{Notation of~\cite{Donoho2006}} \\
   \midrule
   Sparsity ratio & \(\tau\) & \(\rho \delta\) \\ 
   \begin{tabular}{@{}p{3.5cm}@{}}
     Ratio of measurements to ambient dimension 
   \end{tabular}
   & \(\sigma\) &
   \(\delta\)
\\
   Undersampling ratio & \(\tau / \sigma\) & \( \rho\) \\
   Normalized index & \(\theta \)  &\(\nu\)\\
   Internal exponent & \(\Psi_{\mathrm{int}}(\theta,\tau)\) 
   &\(\Psi_{\mathrm{int}}(\nu,\rho \delta)\) \\
   External exponent 
   & \(\Psi_{\mathrm{ext}}(\theta,\tau)\) &
   \(\Psi_{\mathrm{ext}}(\nu,\rho \delta)\) \\
   Total exponent & \(\Psi_{\mathrm{total}}(\theta,\tau)\) &
   \(\Psi_{\mathrm{net}}(\nu,\rho,\delta) - \Psi_{\mathrm{face}}(\rho\delta)\)\\
   Net exponent & \(\Psi_{\mathrm{total}}(\theta,\tau) + E(\tau)\) &
   \(\Psi_{\mathrm{net}}(\nu, \rho, \delta)\) \\ 
   \bottomrule
  \end{tabular}
\end{table}

\begin{proof}[Proof of Lemma~\ref{thm:l1-intrinsic-upper-bound}]
  We leave the dependence on the dimension implicit for clarity.
  Define \(k\defeq \lceil \tau d\rceil\).

  We first show that~\eqref{eq:l1-exponent-psis} holds. The proof
  relies on an expression for spherical intrinsic volumes in terms of
  polytope angles.  For a face \(F\) of a polytope \(P\), we define
  \(\beta(F,P)\) as the internal angle of \(P\) at \(F\) and
  \(\gamma(F,P)\) as the external angle of \(P\) at \(F\)
  (see~\cite[Chapter~14]{Grunbaum1967} for the definitions).  The
  following is an important alternative characterization of the
  spherical intrinsic volumes in terms of these angles.
  \begin{fact}[\protect{\cite[Equation~(6.50)]{Schneider2008}}]
    \label{fact:angle-sum-intrinsic-volumes}
    Let \(K\) be a  polyhedral cone, and let
    \(\mathfrak{F}_i(K)\) be the set of all \(i\)-dimensional faces of
    \(K\).  Then
    \begin{equation}\label{eq:polytope-angles}
      v_i(K) = \sum_{F \in \mathfrak{F}_{i+1}(K)} \beta(\vct 0,F)
      \gamma(F, K).
    \end{equation}
  \end{fact}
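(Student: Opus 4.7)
The plan is to prove the angle-sum formula by directly computing the probabilistic quantity $v_i(K)$ from Definition~\ref{def:sphere-intrinsic-vols}, decomposing it face-by-face using the geometric structure of the metric projection onto a polyhedral cone.

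First I would partition the event appearing in Definition~\ref{def:sphere-intrinsic-vols} as a disjoint union over $(i+1)$-dimensional faces. Let $A_F \defeq \{\Pi_K(\vct \omega) \in \mathrm{relint}(F)\}$. Since every Gaussian sample $\vct\omega$ projects to the relative interior of a unique face of $K$ (the boundary overlaps are Gaussian-null), we have
\begin{equation*}
v_i(K) \;=\; \sum_{F \in \mathfrak{F}_{i+1}(K)} \mathbb{P}(A_F),
\end{equation*}
and it remains to show $\mathbb{P}(A_F) = \beta(\vct 0, F)\,\gamma(F,K)$.

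Second, I would invoke the standard characterization of the metric projection onto a polyhedral cone: for each face $F$ of $K$, the preimage $\{\vct x : \Pi_K(\vct x) \in \mathrm{relint}(F)\}$ equals the Minkowski sum $\mathrm{relint}(F) + N_F(K)$, where $N_F(K)$ denotes the normal cone of $K$ along $F$. Combined with the key orthogonality fact $N_F(K) \subseteq L^\perp$ with $L \defeq \mathrm{span}(F)$, this makes the sum a direct sum: any $\vct x$ projecting into $\mathrm{relint}(F)$ factors uniquely as $\vct x = \vct x_L + \vct x_{L^\perp}$ with $\vct x_L \in \mathrm{relint}(F) \subseteq L$ and $\vct x_{L^\perp} \in N_F(K) \subseteq L^\perp$.

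Third, I would exploit orthogonal invariance of the Gaussian measure. The components $\vct \omega_L$ and $\vct \omega_{L^\perp}$ of the standard Gaussian $\vct\omega$ under the decomposition $\R^d = L \oplus L^\perp$ are independent standard Gaussians on their respective subspaces. Consequently,
\begin{equation*}
\mathbb{P}(A_F) \;=\; \mathbb{P}\bigl\{\vct \omega_L \in \mathrm{relint}(F)\bigr\}\cdot
\mathbb{P}\bigl\{\vct \omega_{L^\perp} \in N_F(K)\bigr\}.
\end{equation*}
I would then identify the two factors with angles: because $F$ is a polyhedral cone with apex at $\vct 0$ in $L$, the Gaussian mass of $F$ in $L$ is by definition the internal solid angle $\beta(\vct 0, F)$; analogously, the Gaussian mass of the polyhedral cone $N_F(K)$ in $L^\perp$ is the external solid angle $\gamma(F, K)$. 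Summing over $F \in \mathfrak{F}_{i+1}(K)$ yields~\eqref{eq:polytope-angles}.

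The main obstacle is the pair of polyhedral-geometry facts used in step two: the explicit description of the projection preimage over $\mathrm{relint}(F)$, and the orthogonality $N_F(K) \perp \mathrm{span}(F)$. The projection description follows from the KKT optimality conditions for the projection $\min_{\vct y \in K} \enorm{\vct x - \vct y}^2$; the orthogonality requires observing that the lineality space of $F$ lies in $F$ itself (since $F$ is a cone) so translating a relative interior point of $F$ along $L$ stays inside $F$, forcing any outward normal to lie in $L^\perp$. Once these are in hand, the remaining steps are routine applications of Gaussian independence and the definitions of the internal and external angles.
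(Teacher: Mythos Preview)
The paper does not actually prove this statement; it is stated as a Fact with a citation to Schneider's book and then used as a black box inside the proof of Lemma~\ref{thm:l1-intrinsic-upper-bound}. Your proof sketch is correct and is precisely the standard derivation of this identity from the probabilistic Definition~\ref{def:sphere-intrinsic-vols}: partition over faces, use the Moreau-type decomposition $\Pi_K^{-1}(\mathrm{relint}\,F) = \mathrm{relint}(F) \oplus N_F(K)$ with $N_F(K) \subset (\mathrm{span}\,F)^\perp$, and factor the Gaussian probability along this orthogonal splitting into the internal and external angles. This is essentially the argument given in Amelunxen's thesis~\cite[Prop.~4.4.6]{Amelunxen2011}, which the paper already cites as the source of Definition~\ref{def:sphere-intrinsic-vols}.
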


  We now specialize Fact~\ref{fact:angle-sum-intrinsic-volumes} to the
  case where \(K= \Fcone(\lone{\cdot},\vct x)\).  Define the sublevel
  set \(S \defeq \{\vct w \mid \lone{\vct w}\le \lone{\vct
    x}\}\).   Recalling that our assumption
  \(\tau >0\) implies \(\lone{\vct x}>0\), we have
  \begin{equation}\label{eq:conic-def-of-l1-feas}
    \Fcone(\lone{\cdot},\vct x) = \mathrm{cone}(S-\{\vct x\}) =
    \mathrm{cone}\bigl(C-\{\vct x/\lone{\vct x}\}\bigr),
  \end{equation}
  where \(C \defeq \{\vct w\mid \lone{\vct w} \le 1\}\) is the standard
  crosspolytope.  The fact that \(\vct x\) is \(k\)-sparse is
  equivalent to the statement that \(\vct x/\lone{\vct x}\) lies in the
  relative interior of a \((k-1)\)-dimensional face of the
  crosspolytope \(C\).

  Relationship~\eqref{eq:conic-def-of-l1-feas} implies that there is a
  one-to-one correspondence between the \(i\)-dimensional faces of
  \(\Fcone(\lone{\cdot},\vct x)\) and the \(i\)-dimensional faces of
  the crosspolytope \(C\) that contain \(\vct x/\lone{\vct x}\).
  Since the internal and external angles only depend on the local
  structure of a given polytope, we find that for every nonempty face
  \(F\) of \(\Fcone(\lone{\cdot},\vct x)\) the internal and external
  angles satisfy
  \begin{equation*}
    \beta(\vct 0, F) =\beta(\vct x, \tilde F) \;\;\text{and}\;\;
    \gamma(F, K) = \gamma(\tilde F, C),
  \end{equation*}
  where \(\tilde F\) is the face of the crosspolytope \(C\) naturally
  corresponding to the face \(F\) of the feasible cone
  \(\Fcone(\lone{\cdot},\vct x)\).

  A number of important relationships due to B\"or\"oczky \&
  Henk~\cite{Brocozky1999} for faces of the crosspolytope \(C\) are
  conveniently collected in~\cite[Section~3.3]{Donoho2006}.  In
  particular, we will need the following two facts:
  \begin{enumerate}
  \item There are \(2^{i-k+2}\binom{d-k}{i-k+2}\) faces of \(C\) of
    dimension \((i+1) \ge (k-1)\) containing a given
    \((k-1)\)-dimensional face of \(C\), and
  \item The high degree of
    symmetry of the crosspolytope ensures that the internal and
    external angles at these faces depend only on the dimensional
    parameters \(k\) and \(i\).
  \end{enumerate}

  Applying the observations above to equation~\eqref{eq:polytope-angles},
  we find
  \begin{equation}\label{eq:int-vol-explicit}
    v_i\bigl(\Fcone(\lone{\cdot},\vct x)\bigr) = 2^{i-k+2}
    \binom{d-k}{i-k+2}
    \beta(T_{k-1}, T_{i+1})  \gamma( \tilde{F}_{i+1}, C)
  \end{equation}
  for \(i = \{k-1,\dotsc,d-1\}\).
  Above, \(\tilde F_{i+1}\) is any \((i+1)\)-dimensional face of the
  crosspolytope \(C\), and \(T_j\) is the \(j\)-dimensional regular
  simplex.

  The internal and external angles in~\eqref{eq:int-vol-explicit} have
  explicit expressions due to~\cite{Brocozky1999} and the
  work of Ruben~\cite{MR0121713}.  Donoho~\cite{Donoho2006} conducts
  an asymptotic investigation of these formulas.  To distill the
  essence of the analysis, Donoho gives continuous functions
  \(\Psi_{\mathrm{int}}(\theta, \tau)\) and
  \(\Psi_{\mathrm{ext}}(\theta)\) such that, for any \(\varepsilon
  >0\) and all sufficiently large \(d\), the inequalities
  \begin{align}\label{eq:psi-int}
    \frac{1}{d}\log\bigl(\beta(T_{k-1},T_{i+1})\bigr) &\le
    -\Psi_{\mathrm{int}}\left(\tfrac{i}{d},\tau\right) +\frac{\eps}{3}  \\
    \frac{1}{d}\log\bigl(\gamma(F_{i+1},C)\bigr) &\le
    -\Psi_{\mathrm{ext}}\left(\tfrac{i}{d}\right) + \frac{\eps}{3}\label{eq:psi-ext}
  \end{align}
  hold uniformly over \(i = \{\lceil \tau d \rceil,\dotsc,d-1\}\).  
  Moreover, it follows from Equation~\eqref{eq:binom-entropy-limit} that
  for sufficiently large \(d\), we have
  \begin{equation*}
    \frac{1}{d} \log\left(2^{i-k+2}\binom{d-k}{i-k+2}\right) 
    \le \Psi_{\mathrm{cont}}\left(\tfrac{i}{d},\tau\right) + \frac{\varepsilon}{3},
 \end{equation*}
 where the exponent \(\Psi_{\mathrm{cont}}\) for the number of
 \emph{cont}aining faces is defined by
 \begin{equation}\label{eq:psi-com}
  \Psi_{\mathrm{cont}}(\theta,\tau) 
  \defeq (\theta-\tau)\log(2) + (1-\theta)H\left(\frac{\theta-\tau}{1-\tau}\right).
 \end{equation}
 The function \(H(\theta)\) is the entropy defined
 by~\eqref{eq:entropy}.  Equation~\eqref{eq:l1-exponent-psis} follows
 by defining
 \begin{equation}\label{eq:psi-total}
   \Psi_{\mathrm{total}}(\theta,\tau) \defeq \Psi_{\mathrm{cont}}(\theta,\tau) -
   \Psi_{\mathrm{int}}(\theta,\tau) - \Psi_{\mathrm{ext}}(\theta)
 \end{equation}
 and taking logarithms in~\eqref{eq:int-vol-explicit}.  This is the
 first claim.
 
 We now show that, for any \(\theta<\tau\),
 Equation~\eqref{eq:l1-exponent-zeros} holds for all sufficiently
 large \(d\).  Since \(\vct x/\lone{\vct x}\) lies in a
 \((k-1)\)-dimensional face of the crosspolytope \(C\),
 every face of \( \Fcone(\lone{\cdot},\vct x)\) has dimension at least
 \((k-1)\).  It follows immediately from
 Definition~\ref{def:sphere-intrinsic-vols} that
 \begin{equation*}
   v_i\bigl(\Fcone(\lone{\cdot},\vct x)\bigr) = 0
 \end{equation*}
 for all \(i <k-1\).  Since \(k = \lceil \tau d\rceil\), we see
 that~\eqref{eq:l1-exponent-zeros} holds for all sufficiently
 large \(d\) so long as \(\theta < \tau\). This is the second claim.
\end{proof}

\subsection{Computing the exponents}
\label{sec:computing-psis}

\begin{figure}[t!]
  \centering
  \includegraphics[width=0.6\columnwidth]{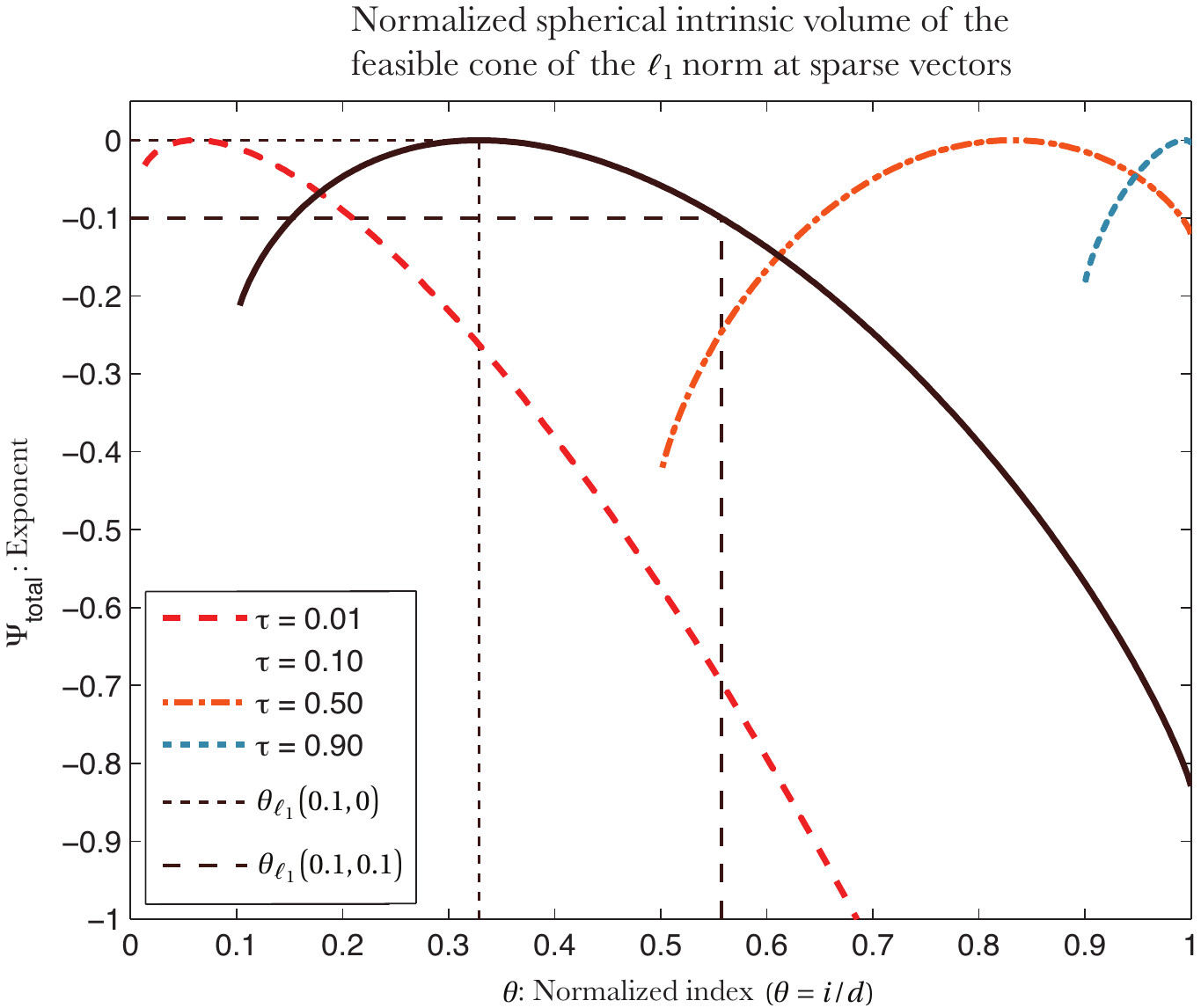}
  \caption{\textsl{Upper bounds for the exponent of the spherical
      intrinsic volumes.}  We plot \(\Psi_\mathrm{total}(\tau,\cdot)
    \) for several different values of \(\tau\).  The best upper decay
    threshold at level \(\psi\) is given by the rightmost \(\theta\)
    for which the curve intersects the horizontal line at \(-\psi\).
    The short dashes show the upper decay threshold
    \(\theta_{\ell_1}(0.1,0)\), while the long dashes show
    \(\theta_{\ell_1}(0.1,0.1)\), defined in
    Equation~\eqref{eq:theta-l1}. For each \(\tau\), the upper decay
    threshold at level zero is numerically equal to the lower decay
    threshold.}
  \label{fig:l1-psi-net}
\end{figure}

We now define the functions needed to compute \(\Psi_{\mathrm{total}}\)
in~\eqref{eq:psi-total}.  Recall that
\(\Psi_{\mathrm{cont}}\) is explicitly defined in~\eqref{eq:psi-com}.
The functions \(\Psi_{\mathrm{int}}\) and \(\Psi_{\mathrm{ext}}\) are
defined in~\cite{Donoho2006}, but we recapitulate their definitions
for completeness.  Define implicit parameters \(x = x(\theta)\) and
\(s = s(\theta,\tau)\) as  the solutions to the equations
\begin{align}\label{eq:implicit-x}
  \frac{2 x G(x)}{G'(x)} &= \frac{1-\theta}{\theta},\\
  M(s) &= 1- \frac{\tau}{\theta},\label{eq:implicit-s}
\end{align}
where \(G(x) = \frac{2}{\sqrt{\pi}} \int_0^x \exp(-t^2) \mathrm{d} t\)
is the error function \(\operatorname{erf}(x)\) and
\(M(s)\) is a variant of the Mills' ratio given by
\begin{equation*}
  M(s) = -s \mathrm{e}^{s^2/2} \int_{-\infty}^s \mathrm{e}^{-t^2/2}
  \mathrm{d}t = -s\sqrt{\frac{\pi}{2}}\; \operatorname{erfcx}\!\left(-\frac{s}{\sqrt{2}}\right),
\end{equation*}
where \(\mathrm{erfcx}(s) = \mathrm{e}^{s^2} \mathrm{erfc}(s)\) is the
scaled complementary error function.  This second form for \(M(s)\) is
convenient for numerical computations. It follows
from~\cite{Donoho2006} that \(x\) and \(s\) are well defined.
Numerical evaluation of \(x(\theta)\) and \(s(\theta, \tau)\) is
straightforward using, for example, bisection methods.

With these parameters in hand, the exponent for the internal angle
is given by\footnotemark
\begin{equation}\label{eq:int-angle}
  \Psi_{\mathrm{int}}(\theta, \tau) \defeq (\theta-\tau)\log\left(\sqrt{2\pi}
    \frac{s \theta}{\tau - \theta}\right) - \frac{\tau s^2}{2},
\end{equation}
where \(s=s(\theta,\tau)\) satisfies equation~\eqref{eq:implicit-s}.
The exponent for the external angle is
\begin{equation}
  \label{eq:ext-angle}
  \Psi_{\mathrm{ext}}(\theta) \defeq -(1-\theta) \log\bigl(G(x)\bigr) + \theta x^2,
\end{equation}
where \(x = x(\theta)\) is given by~\eqref{eq:implicit-x} above.  

Figure~\ref{fig:l1-psi-net} displays
\(\Psi_{\mathrm{total}}(\cdot,\tau)\) for a few values of the parameter \(\tau\).
Empirically, it appears that \(\Psi_{\mathrm{total}}(\cdot,\tau)\) is concave for
every value of \(\tau\in [0,1]\) and has a unique maximal value of zero.

\footnotetext{Equation~\eqref{eq:int-angle} requires a significant
  amount of wholly uninteresting algebraic simplification from the
  formulas of~\cite{Donoho2006}. The key steps
  in this simplification follow from the equations on page 638 of the
  reference.  In particular, we write \(y\) explicitly in terms of
  \(s\) with their Eq.~(6.12), and then write \(\xi\) explicitly in
  terms of \(s\) using this expression for \(y\)---see equation (6.13)
  in the reference.  Noting that the reference defines \(\gamma=
  \frac{\tau}{\theta}\) on page 631 gives~\eqref{eq:int-angle}, modulo
  trivial simplifications.}

\subsection{Defining decay thresholds}
\label{sec:comp-asympt-expon}

The exponent \(\Psi_{\mathrm{total}}\) provides decay thresholds for
the \(\ell_1\) norm at proportionally sparse vectors.  We define
\begin{equation}\label{eq:theta-l1}
  \theta_{\ell_1}(\tau,\psi) \defeq \inf\bigl\{\theta_{\star}\in [0,1]\mid
  \Psi_{\mathrm{total}}(\theta, \tau) < -\psi \text{ for all } \theta
  \in ( \theta_\star,1]\bigr\}.
\end{equation}
In words, \(\theta_{\ell_1}(\tau, \psi)\) is the rightmost point of 
intersection of the curve \(\Psi_{\mathrm{total}}(\cdot, \tau)\) with
the horizontal line at the level~\(-\psi\) (see
Figure~\ref{fig:l1-psi-net}).  Further define
\begin{equation}
  \label{eq:kappa-l1}
  \kappa_{\ell_1}(\tau) \defeq \sup\bigl\{\kappa_\star \in [0,1]\mid
  \Psi_{\mathrm{total}} (\theta,\tau) < 0 
  \text{ for all } \kappa \in [0,\kappa_\star)\bigr\}.
\end{equation}
This function \(\kappa_{\ell_1}(\tau)\) is the leftmost point of
intersection of \(\Psi_{\mathrm{total}}(\cdot,\tau)\) with the
horizontal line at level zero.  Equations~\eqref{eq:theta-l1}
and~\eqref{eq:kappa-l1} define decay thresholds for the ensemble of
feasible cones for the \(\ell_1\) norm at proportionally sparse
vectors.
\begin{proposition}[Decay thresholds for the \(\ell_1\) norm at
  proportionally sparse vectors]\label{prop:l1-decay}
  Consider the ensemble of
  Lemma~\ref{thm:l1-intrinsic-upper-bound}. The function
  \(\theta_{\ell_1}(\tau,\psi)\) is an upper decay threshold at level
  \(\psi\) for the ensemble \(\{\Fcone(\lone{\cdot},\vct
  x_0^{(d)})\mid d \in \Index\}\), while \(\kappa_{\ell_1}(\tau) \) is a
  lower decay threshold for the ensemble \(\{\Fcone(\lone{\cdot},\vct
  x_0^{(d)})\mid d \in \Index\}\).
\end{proposition}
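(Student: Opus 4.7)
The plan is to elevate the pointwise upper bound of Lemma~\ref{thm:l1-intrinsic-upper-bound} to the uniform exponential decay demanded by Definitions~\ref{def:aeub} and~\ref{def:decay-at-psi} via a compactness argument on closed subintervals of $[\tau,1]$. The only analytic ingredient needed beyond that lemma is continuity of $\theta \mapsto \Psi_{\mathrm{total}}(\theta,\tau)$ on $[\tau,1]$, which I will obtain from the explicit formulas~\eqref{eq:psi-com},~\eqref{eq:int-angle}, and~\eqref{eq:ext-angle} together with smooth dependence of the implicit parameters $x(\theta)$ and $s(\theta,\tau)$ on $\theta$ via the implicit function theorem applied to~\eqref{eq:implicit-x} and~\eqref{eq:implicit-s}.

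For the upper threshold at level~$\psi$, I will fix any $\theta > \theta_{\ell_1}(\tau,\psi)$. Definition~\eqref{eq:theta-l1} ensures $\Psi_{\mathrm{total}}(\theta',\tau) < -\psi$ throughout $[\theta,1]$, so continuity on this compact interval yields some $\eps_0 > 0$ with $\max_{\theta' \in [\theta,1]}\Psi_{\mathrm{total}}(\theta',\tau) \le -\psi - 2\eps_0$. I will then invoke the first conclusion of Lemma~\ref{thm:l1-intrinsic-upper-bound} with tolerance~$\eps_0$: for every sufficiently large $d$ and each index $\lceil \theta d\rceil \le i \le d-1$, setting $\theta_i := i/d \in [\theta,1]$ gives $\lceil \theta_i d\rceil = i$, hence
\begin{equation*}
v_i\bigl(\Fcone(\lone{\cdot},\vct x_0^{(d)})\bigr) \le \econst^{d(\Psi_{\mathrm{total}}(\theta_i,\tau) + \eps_0)} \le \econst^{-d(\psi + \eps_0)},
\end{equation*}
which is precisely the bound required by Definition~\ref{def:decay-at-psi}.

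For the lower threshold I will fix $\kappa < \kappa_{\ell_1}(\tau)$ and split into two cases. When $\kappa < \tau$ the conclusion is immediate: for all large $d$ the indices $i \le \lceil \kappa d\rceil$ satisfy $i/d < \tau$ and~\eqref{eq:l1-exponent-zeros} forces $v_i = 0$, so any $\eps>0$ works. When $\kappa \ge \tau$, definition~\eqref{eq:kappa-l1} guarantees strict negativity of $\Psi_{\mathrm{total}}(\cdot,\tau)$ on the compact set $[\tau,\kappa]$, so I extract some $\eps_0 > 0$ with $\max_{\theta'\in[\tau,\kappa]}\Psi_{\mathrm{total}}(\theta',\tau) \le -2\eps_0$; the same application of Lemma~\ref{thm:l1-intrinsic-upper-bound} then produces $v_i \le \econst^{-d\eps_0}$ for every $i$ with $\lceil \tau d\rceil \le i \le \lceil \kappa d\rceil$, and indices below $\lceil \tau d\rceil$ contribute zero by~\eqref{eq:l1-exponent-zeros}.

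The main technical subtlety will be continuity of $\Psi_{\mathrm{total}}(\cdot,\tau)$ at the endpoint $\theta = \tau$: the implicit parameter $s(\theta,\tau)$ tends to $0$ and the logarithm in~\eqref{eq:int-angle} becomes singular, but a short expansion of~\eqref{eq:implicit-s} near $\theta = \tau$ gives $s \sim -(\theta - \tau)\sqrt{2/\pi}/\tau$, so the vanishing prefactor $(\theta - \tau)$ tames the singularity and $\Psi_{\mathrm{int}}(\theta,\tau) \to 0$. Happily, this delicate point can be sidestepped entirely: in each case above I can replace the lower endpoint $\tau$ by $\tau + \eta$ for small $\eta>0$, absorbing the narrow interval $[\tau,\tau+\eta]$ into the zero-intrinsic-volume regime via~\eqref{eq:l1-exponent-zeros}, so only continuity on the open interval $(\tau,1]$ is actually required.
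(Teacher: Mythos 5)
Your argument is correct and follows the same route as the paper's own (very terse) proof: the paper simply observes that \(\Psi_{\mathrm{total}}(\theta,\tau)<-\psi\) for every \(\theta>\theta_{\ell_1}(\tau,\psi)\) and declares the conclusion immediate from Lemma~\ref{thm:l1-intrinsic-upper-bound}. Your compactness step supplies exactly the uniform gap that makes ``immediate'' honest, since the lemma's bound is uniform in \(\theta\) but Definition~\ref{def:decay-at-psi} requires \(\Psi_{\mathrm{total}}(\cdot,\tau)\) to be bounded \emph{strictly} below \(-\psi\) on all of \([\theta,1]\), not merely pointwise negative there.

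One remark in your final paragraph is wrong, though harmlessly so: equation~\eqref{eq:l1-exponent-zeros} forces the intrinsic volumes to vanish only for \(\theta\in[0,\tau)\), so indices \(i\) with \(i/d\in[\tau,\tau+\eta]\) are \emph{not} in the zero regime and cannot be ``absorbed'' there. Had you relied on that sidestep alone, the lower-threshold case \(\kappa\ge\tau\) would have a hole at the left endpoint. Fortunately you do not need it: your expansion \(s\sim-(\theta-\tau)\sqrt{2/\pi}/\tau\) shows that the argument of the logarithm in~\eqref{eq:int-angle} tends to the finite value \(2\) as \(\theta\to\tau^{+}\), so \(\Psi_{\mathrm{int}}(\theta,\tau)\to 0\) and \(\Psi_{\mathrm{total}}(\cdot,\tau)\) extends continuously to the closed interval \([\tau,1]\); the compactness argument then runs on \([\tau,\kappa]\) exactly as you wrote it, and the sidestep should simply be deleted.
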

\begin{proof}
  By definition, \(\Psi_{\mathrm{total}}(\theta,\tau)<-\psi\) for
  every \(\theta > \theta_{\ell_1}(\tau,\psi)\).  It then follows
  immediately from Lemma~\ref{thm:l1-intrinsic-upper-bound} and
  Definition~\ref{def:decay-at-psi} that \(\theta_{\ell_1}(\tau,\psi)\) is an
  upper decay threshold at level \(\psi\) for the ensemble
  \(\{\Fcone(\lone{\cdot},\vct x^{(d)})\mid d \in \Index\}\).
  The proof that \(\kappa_{\ell_1}(\tau)\) is a lower decay threshold
  for \(\bigl\{\Fcone(\lone{\cdot},\vct x^{(d)})\mid d \in \Index\bigr\}\) is
  equally straightforward, so we omit the argument.
\end{proof}
We abbreviate the upper decay threshold at level zero by
\begin{equation}
  \label{eq:upper-decay-level-zero}
  \theta_{\ell_1}(\tau) \defeq \theta_{\ell_1}(\tau,0)
\end{equation}
for consistency with Definition~\ref{def:aeub}.

Figure~\ref{fig:l1-psi-net} illustrates the definition of
\(\theta_{\ell_1}(\tau,\psi)\).  Numerically, it appears that zero is
the \emph{unique} maximal value of
\(\Psi_{\mathrm{total}}(\tau,\cdot)\) for every value \(\tau \in
(0,1]\).   If this is indeed the case, then we would be able to deduce
that \(\theta_{\ell_1}(\tau) = \kappa_{\ell_1}(\tau)\) for all values
of sparsity \(\tau\).   

\subsection{Reconciliation with~\protect{\cite{Donoho2006}}}
\label{sec:reconciliation}
We now discuss the relationship between our spherical intrinsic volume
approach and the bounds of~\cite{Donoho2006}, \cite{Donoho2005a} for basis
pursuit.  Numerically, it appears that the two approaches provide
equivalent success guarantees, but the expressions for the exponents
seem to preclude a direct proof of equivalence.  We also describe how
our approach gives matching upper bounds for region of success of
basis pursuit, which shows that our results are the best possible up
to numerical accuracy.

\subsubsection{Reconciliation with the weak threshold}
\label{sec:weak-threshold}
Recall that basis pursuit is the linear inverse
problem~(\ref{eq:generic-lin-inv}) with objective \(f(\cdot)
=\lone{\cdot}\).  By the first part of Lemma
\ref{lem:lin-inv-and-decay}, basis pursuit with a Gaussian measurement
matrix \(\mtx \Omega \in \R^{\lceil \sigma d\rceil \times d}\) succeeds at
recovering a \(k=\lceil \tau d\rceil\)-sparse vector with overwhelming
probability in high dimensions, so long as the pair \((\tau,\sigma)\) satisfies
\(\theta_{\ell_1}(\tau) < \sigma\).

We now describe the analogous result given
in~\cite[Sec.~7.1]{Donoho2006}.  Define the critical sparsity ratio
(compare with the critical proportion~\cite[Def.~2]{Donoho2006})
\begin{equation}\label{eq:tauW}
  \tau_{W}(\sigma) = \sup \bigl\{ \hat \tau
  \in [0,\sigma]\mid 
  \Psi_{\mathrm{total}}(\theta,\hat\tau)<0 \text{ for all } \theta \in [\sigma,1]\bigr\}.
\end{equation}
Then the result~\cite[Thm.~2]{Donoho2006} is equivalent to the
statement that basis pursuit with a Gaussian matrix \(\mtx \Omega \in
\R^{\lceil\sigma d \rceil\times d}\) succeeds with overwhelming
probability in high dimensions  whenever the pair \((\tau,\sigma)\)
satisfies \(\tau < \tau_W(\sigma)\).

These two approaches show strong similarities, and methods provide the
same results to numerical precision.  Indeed, under the assumption
that both \(\tau_W(\sigma)\) and \(\theta_{\ell_1}(\tau)\) are
monotonically increasing functions (this appears to hold empirically),
one can show that these approaches are equivalent.  Rather than dwell
on this fine detail, we present a matching failure region for basis
pursuit.

\subsubsection{Matching upper bound}
\label{sec:matching-upper-bound}
The following result shows links the lower decay threshold to
regions where basis pursuit fails.
\begin{proposition}\label{prop:match-upper-bp}
  Suppose \(\kappa_{\ell_1}(\tau) > \sigma\).  Then basis pursuit with
  \(n=\lceil \sigma d\rceil \) Gaussian measurements fails with
  overwhelming probability in high dimensions for the \(\tau\)-sparse
  ensemble of Lemma \ref{thm:l1-intrinsic-upper-bound}.
\end{proposition}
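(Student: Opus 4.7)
The plan is to translate basis pursuit failure into a geometric intersection problem, and then apply the lower decay threshold machinery of Theorem~\ref{thm:asympt-lower-bounds}.

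First, I would recall the null space characterization: basis pursuit with matrix $\mtx\Omega$ recovers the $\tau$-sparse vector $\vct x_0^{(d)}$ if and only if $\Fcone(\lone{\cdot},\vct x_0^{(d)}) \cap \ker(\mtx\Omega) = \{\vct 0\}$ (cf.~\cite[Prop.~2.1]{ChaRecPar:12}, as used in the proof sketch of Lemma~\ref{lem:lin-inv-and-decay}). Thus basis pursuit \emph{fails} precisely when this intersection is nontrivial. When $\mtx\Omega \in \R^{\lceil\sigma d\rceil\times d}$ has i.i.d.\ standard normal entries, its null space has dimension $m^{(d)} \defeq d - \lceil\sigma d\rceil$ almost surely and, by the rotational invariance of the Gaussian distribution, is distributed as $\mtx Q^{(d)} L^{(d)}$, where $L^{(d)} \subset \R^d$ is any fixed subspace of dimension $m^{(d)}$ and $\mtx Q^{(d)}$ is a random basis from $\mathsf{O}_d$.

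Second, I would identify the lower decay thresholds of the two relevant ensembles. By Proposition~\ref{prop:l1-decay}, the ensemble $\bigl\{\Fcone(\lone{\cdot},\vct x_0^{(d)})\bigr\}$ has lower decay threshold $\kappa_{\ell_1}(\tau)$. For the subspace ensemble $\{L^{(d)}\}$, the observation that $m^{(d)}/d \to 1-\sigma$ combined with the same reasoning as in Proposition~\ref{prop:subspace-aeub} (the intrinsic volumes $v_i(L^{(d)}) = \delta_{i,m^{(d)}-1}$ vanish for every $i \le \lceil \kappa d \rceil$ with $\kappa < 1-\sigma$ and $d$ large) shows that $\tilde\kappa_\star = 1-\sigma$ is a lower decay threshold for $\{L^{(d)}\}$.

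Third, the hypothesis $\kappa_{\ell_1}(\tau) > \sigma$ is equivalent to $\kappa_{\ell_1}(\tau) + (1-\sigma) > 1$, so Theorem~\ref{thm:asympt-lower-bounds} applied to $\bigl\{\overline{\Fcone}(\lone{\cdot},\vct x_0^{(d)})\bigr\}$ and $\{L^{(d)}\}$ yields an $\varepsilon > 0$ such that
\begin{equation*}
  \mathbb{P}\bigl\{\,\overline{\Fcone}(\lone{\cdot},\vct x_0^{(d)}) \cap \mtx Q^{(d)} L^{(d)} \ne \{\vct 0\}\bigr\} \ge 1 - \econst^{-\varepsilon d}
\end{equation*}
for all sufficiently large $d$. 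By Remark~\ref{rem:touch-prob}, the probability of touching coincides with the probability that the unclosed feasible cone meets $\mtx Q^{(d)} L^{(d)}$, and by the distributional identification of $\mtx Q^{(d)} L^{(d)}$ with $\ker(\mtx\Omega^{(d)})$, we conclude that basis pursuit fails with overwhelming probability in high dimensions.

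The main obstacle, really the only nontrivial step, is the subspace calculation: one must be careful that $m^{(d)} = d-\lceil\sigma d\rceil$ is a floored rather than a ceilinged quantity, so the argument of Proposition~\ref{prop:subspace-aeub} is adapted directly rather than cited verbatim. Everything else is bookkeeping assembling the null space reduction, the distributional identification, and Theorem~\ref{thm:asympt-lower-bounds}.
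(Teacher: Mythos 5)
Your proposal is correct and follows essentially the same route as the paper's proof: reduce to the null-space condition of~\cite[Prop.~2.1]{ChaRecPar:12}, identify the null space with a randomly rotated fixed subspace of dimension $d-\lceil\sigma d\rceil$ whose lower decay threshold is $1-\sigma$, and invoke Theorem~\ref{thm:asympt-lower-bounds} with $\kappa_{\ell_1}(\tau)+(1-\sigma)>1$. Your extra care about the floor-versus-ceiling dimension of the null space and the explicit appeal to Remark~\ref{rem:touch-prob} are minor refinements of the same argument, not a different approach.
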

Since the function \(\Psi_{\mathrm{total}}(\cdot,\tau)\) has a unique
maximal value of zero up to our ability to compute the functions
involved (see Figure~\ref{fig:l1-psi-net}), we have the equality
\(\kappa_{\ell_1}(\tau) = \theta_{\ell_1}(\tau)\) to numerical
precision.  Coupling Proposition~\ref{prop:match-upper-bp} with our
discussion in Section~\ref{sec:weak-threshold} reveals that basis
pursuit with a Gaussian measurement matrix exhibits a phase transition
between success and failure at \(\sigma =\theta_{\ell_1}(\tau)\).
\begin{proof}[Proof of Proposition~\ref{prop:match-upper-bp}]
  Let \(\{\vct x_0^{(d)} \in \R^d \}\) be an ensemble of
  \(\tau\)-sparse vectors as in
  Lemma~\ref{thm:l1-intrinsic-upper-bound}.  The null space of an
  \(n\times d\) Gaussian matrix is distributed as \(\mtx Q L\), where
  \(L\) is a linear subspace of dimension \((d-n)\).  It then follows
  from~\cite[Prop.~2.1]{ChaRecPar:12} that basis pursuit with
  \(n=\lceil \sigma d\rceil \) Gaussian measurements succeeds with the
  same probability that \(\mtx Q L \cap \Fcone(\lone{\cdot},\vct x_0)
  = \{\vct 0\}\).

  By Proposition~\ref{prop:subspace-aeub}, the value \(\kappa_\star =
  (1-\sigma)\) is a lower decay threshold for \(L\).  Our assumption
  implies \(\kappa_{\ell_1}(\tau) + \kappa_\star > 1\), so by
  Theorem~\ref{thm:asympt-lower-bounds}, we see \(\mtx Q L \cap
  \Fcone(\lone{\cdot},\vct x_0)\ne \{\zerovct\}\) with overwhelming
  probability in high dimensions.  We conclude that basis pursuit fails
  with overwhelming probability in high dimensions.
\end{proof}

\section{Proof of Lemma~\ref{lem:lin-inv-and-decay} and Corollary~\ref{cor:gauss-width-bd}}
\label{sec:proof-thoer-coroll}

We begin with the proof of the two claims of Lemma
\ref{lem:lin-inv-and-decay} concerning the relationship between the
upper decay threshold and the linear inverse
problem~\eqref{eq:generic-lin-inv}.  The first result is a corollary
of~\cite[Prop.~2.1]{ChaRecPar:12}.  We drop the superscript
\(d\) for clarity.

\begin{proof}[Proof of Lemma~\ref{lem:lin-inv-and-decay},
  Part~\ref{item: thresh-implies-success}]
  Let \(\mtx \Omega \) be the \(n\times d\) Gaussian measurement
  matrix, where \(n=\lceil \sigma d\rceil\).  The null space of \(\mtx
  \Omega\) is distributed as \(\mtx Q L\), where \(\mtx Q\) is a
  random basis, and \(L\) is any fixed \((d-n)\)-dimensional subspace of
  \(\R^d\).  Therefore, the probability
  that~\eqref{eq:generic-lin-inv} succeeds is equal to the probability
  that \( \mtx Q L \cap \Fcone(f,\vct x_0) \ne \{\vct
  0\}\)~\cite[Prop.~2.1]{ChaRecPar:12}. By
  Proposition~\ref{prop:subspace-aeub}, the subspace \(L\) has an
  upper decay threshold \(\theta_L = 1-\sigma\), so  that the assumption 
  \(\theta_\star < \sigma\) implies that \(\theta_\star + \theta_L < 1\).
  The claim follows from Theorem~\ref{thm:asympt-thresh-gen}.
\end{proof}

The second claim of Lemma~\ref{lem:lin-inv-and-decay}
requires additional effort.  We require the following
technical lemma.  
\begin{lemma}\label{lemma:aeub-from-lin-inv}
  Let \(\Index\) be an infinite set of indices.  Let \(\{K^{(d)}\mid d \in \Index\}\) be an ensemble of closed convex
  cones with \(K^{(d)}\subset \R^d\) for each \(d\), and let
  \(\{L^{(d)}\mid d \in \Index\}\) be an ensemble of linear subspaces
  of ~\(\R^d\) of dimension \(d-\lceil \sigma d\rceil\).  If there
  exists an \(\varepsilon >0\) such that for every sufficiently large
  \(d\),
  \begin{equation*}
    \mathbb{P}\bigl\{ K^{(d)} \cap \mtx Q L^{(d)} \ne
      \{\mathbf{0}\}\bigr\} \le \mathrm{e}^{-\varepsilon d},
  \end{equation*}
  then \(\{K^{(d)}\}\) has an upper decay threshold
  \(\theta_\star = \sigma\).
\end{lemma}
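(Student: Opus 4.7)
The plan is to use the spherical kinematic formula (Fact~\ref{fact:sphere-kin}) in reverse, extracting bounds on the spherical intrinsic volumes of $K^{(d)}$ from the hypothesized exponential bound on the striking probability. First I would split into two cases according to whether $K^{(d)}$ is a subspace. If $K^{(d)}$ is a subspace for a given $d$, then dimension counting forces $\dim K^{(d)} + \dim L^{(d)} \le d$---otherwise the intersection would be nontrivial with probability one, contradicting the exponentially small bound. Hence $\dim K^{(d)} \le \lceil \sigma d \rceil$, and Proposition~\ref{prop:subspace-iv} gives $v_i(K^{(d)}) = 0$ for every $i \ge \lceil \sigma d \rceil$. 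I would therefore pass to the subsequence of $d \in \Index$ on which $K^{(d)}$ is not a subspace; the subspace subsequence is handled trivially.

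On this subsequence, I would apply the kinematic formula with $\tilde K = L^{(d)}$. Writing $n = d - \lceil \sigma d \rceil$ and using $v_j(L^{(d)}) = \delta_{j,n-1}$ collapses the double sum in~\eqref{eq:sphere-kin} to
\[
  \mathbb{P}\bigl\{K^{(d)} \cap \mtx Q L^{(d)} \ne \{\vct 0\}\bigr\}
  \;=\; \sum_{j=\lceil \sigma d\rceil}^{d-1} \bigl(1 + (-1)^{j-\lceil \sigma d\rceil}\bigr)\, v_j(K^{(d)}).
\]
Only intrinsic volumes with $j - \lceil \sigma d\rceil$ even contribute, each with coefficient $2$, so the hypothesis yields an exponentially small bound on the sum of $v_j(K^{(d)})$ over the indices $j \ge \lceil \sigma d\rceil$ of one fixed parity.

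The main obstacle is this parity restriction: the identity above controls only half of the relevant intrinsic volumes. To reach the opposite parity, I would fix a sub-subspace $L'' \subset L^{(d)}$ of dimension $n-1$ (which exists for all large $d$ when $\sigma < 1$; the case $\sigma = 1$ is vacuous since no $\theta > 1$ arises in Definition~\ref{def:aeub}). Because $\mtx Q L'' \subset \mtx Q L^{(d)}$, monotonicity of the striking event under inclusion gives $\mathbb{P}\bigl\{K^{(d)} \cap \mtx Q L'' \ne \{\vct 0\}\bigr\} \le \mathrm{e}^{-\varepsilon d}$, and $\mtx Q L''$ remains Haar-distributed on its Grassmannian, so reapplying the kinematic formula with $L''$ produces a parallel exponential bound for exactly the opposite parity over indices $j \ge \lceil \sigma d\rceil + 1$.

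Adding the two bounds covers every index $j \ge \lceil \sigma d\rceil$, and positivity of the intrinsic volumes (Fact~\ref{fact:obvious}) then gives $v_j(K^{(d)}) \le \mathrm{e}^{-\varepsilon d}$ for each such $j$. Since any $\theta > \sigma$ satisfies $\lceil \theta d\rceil \ge \lceil \sigma d\rceil$ for all sufficiently large $d$, Definition~\ref{def:aeub} is satisfied with $\theta_\star = \sigma$ and the same $\varepsilon > 0$ as in the hypothesis, which completes the argument.
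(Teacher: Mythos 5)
Your proposal is correct and follows essentially the same route as the paper's proof: collapse the kinematic formula using the Kronecker-delta intrinsic volumes of the subspace, recover the missing parity class by passing to a codimension-one subspace contained in $L^{(d)}$ and using monotonicity of the striking event, handle the subspace case by dimension counting, and treat mixed ensembles via subsequences. No gaps.
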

Again, the spherical kinematic formula~\eqref{eq:sphere-kin} is at the
heart of the proof.  
\begin{proof}[Proof of Lemma~\protect{\ref{lemma:aeub-from-lin-inv}}]
  We split the argument into two cases: first, we consider the case
  where \(K^{(d)}\) is not a subspace, and then consider the case
  where \(K^{(d)}\) is a subspace.  The general mixed-cone case
  follows by applying these arguments to the subsequences consisting
  of only one type of cone.

  We drop the superscript \(d\) for clarity.  Let \(n = \lceil \sigma
  d\rceil\).  We first assume that \(K\) is not a subspace. By the spherical
  kinematic formula, the probability of interest \(P\) is given by
   \begin{equation*}
   P \defeq \mathbb{P}\bigl\{K\cap \mtx Q L \ne
       \{\mathbf{0}\}\bigr\} = \sum_{k=0}^{d-1}(1+(-1)^k)
     \sum_{i=k}^{d-1}v_i(L) v_{d-1-i + k}(K ).
   \end{equation*}
   By Proposition~\ref{prop:subspace-iv}, \(v_i(L) =
   \delta_{i,d-n-1}\), so by replacing \(k\) with \(k-n\), the probability above reduces to
   \begin{equation}\label{eq:one-subspace}
     P = \sum_{k=n}^{d-1} (1+(-1)^{k-n}) v_k(K).
   \end{equation}
   By assumption, we have \(P\le \mathrm{e}^{-\varepsilon d}\) for all
   sufficiently large \(d\), so the positivity of spherical
   intrinsic volumes
   (Fact~\ref{fact:obvious}.\ref{item:obv-positivity}) implies
   \begin{equation*}
     v_k(K) \le \mathrm{e}^{-\varepsilon d},   \text{ for any \(k\ge n\) such that \( k
     \equiv n \mod 2\). }
  \end{equation*}
 
  It requires an additional geometric observation to remove the
  dependence on parity. Let \(\tilde L\) be a \((d-n-1)\)-dimensional
  subspace contained in \(L\).  By containment, it is immediate that
  \begin{equation*}
    \tilde P \defeq \mathbb{P}\left\{K\cap \mtx Q \tilde L \ne
       \{\mathbf{0}\}\right\}   \le  \mathbb{P}\left\{K\cap \mtx Q L \ne
       \{\mathbf{0}\}\right\}  \le \mathrm{e}^{-\varepsilon d},
  \end{equation*}
  where the last inequality is by assumption.  But the same manipulations
  as before show
  \begin{equation*}
    \tilde P = \sum_{k=n+1}^{d-1}(1+(-1)^{k-n-1}) v_k(K)
  \end{equation*}
  so we have \(v_k(K)\le \mathrm{e}^{-\varepsilon d} \)
  for every \(k \ge n+1\) such that \(k \equiv n+1 \mod 2\).   
  In summary, for every \(d\)
  sufficiently large and any \(k \ge n =\lceil \sigma d\rceil\),  we
  have \(v_k(K) \le \mathrm{e}^{-\varepsilon d}\).   By definition,
  \(\sigma\) is an upper decay threshold for the ensemble of cones
  \(K\).  This completes the first case.

  Now suppose that \(K\) is a subspace and define \(m \defeq
  \dim(K)\).  Since \(\dim(L) = d- \lceil \sigma d\rceil\), we have
  \begin{equation*}
    P\defeq    \Prob{K\cap \mtx Q L \ne \{\zerovct \}} =
    \left\{\begin{array}{ll}0, &
        m<  \lceil \sigma d\rceil  \\ 1, & \text{otherwise}\end{array}\right.
  \end{equation*}
because randomly oriented subspaces are almost always in general position.
  The assumption that \(P \le \econst^{-\eps d}\) for all sufficiently
  large \(d\) requires that \(m< \lceil \sigma d \rceil\) for all
  sufficiently large \(d\).  By Proposition~\ref{prop:subspace-aeub},
  the scalar \(\sigma\) is an upper decay threshold for subspace
  \(K\).  This is the result for the second case, so we are done.
\end{proof}

\begin{proof}[Proof of Lemma~\ref{lem:lin-inv-and-decay},
  Part~\ref{item:success-implies-thresh}]
  The results of~\cite{ChaRecPar:12} imply that the linear
  inverse problem~\eqref{eq:generic-lin-inv} with a Gaussian
  measurement matrix \(\mtx \Omega\) succeeds with the same probability
  that a randomly oriented \((n-d)\)-dimensional subspace \(\mtx Q L\)
  strikes the feasible cone \(\Fcone(f,\vct x_0)\) trivially.  The
  result then follows from Lemma~\ref{lemma:aeub-from-lin-inv}.
\end{proof}

We conclude with the proof of Corollary~\ref{cor:gauss-width-bd}, which asserts that  \(\theta_\star\) is an upper decay threshold for the ensemble \(\{\Fcone(f^{(d)},\vct x_0^{(d)})\mid d\in \mathcal{D}\}\)  whenever the bound~\eqref{eq:limsup-width}  holds.
\begin{proof}[Proof of Corollary~\ref{cor:gauss-width-bd}]
  To shorten notation, we drop the explicit dependence on \(d\), and we define the width \(W:= W\bigl(\Fcone(f,\vct x_0)\cap \mathsf{S}^{d-1}\bigr)\).   The result\footnote{This result is a corollary of a Gaussian process inequality due to    Gordon~\cite{Gordon1987}, \cite{Gordon1988}. }%
  ~\cite[Cor.~3.3(1)]{ChaRecPar:12} states
  that for any \(\eps>0\),
  \begin{equation}\label{eq:CRPW-bd}
    n  \ge \bigl(W+\eps\sqrt{d}\bigr)^2 +1 \ \implies\text{\eqref{eq:lin-inv-by-dim} fails with probability}\le    \econst^{-\eps^2d/2}.
  \end{equation}
  
  Fix \(\eps \in (0,1)\) and define \(\theta_\eps \defeq \theta_\star + 2\eps(\theta_\star^{1/2}+1)\).  We claim that \(\theta_\eps\) is an upper decay threshold for the ensemble \(\{\Fcone(f,\vct x_0)\}\).  To see this, choose the number of measurements \(n= \lceil \theta_\eps d\rceil\) in the linear inverse problem~\eqref{eq:lin-inv-by-dim}. Then for \(d\ge \eps^{-1}\), we have
  \begin{equation*}
    n \ge \theta_\eps d  \ge \bigl[\theta_\star +\eps(2\theta_\star^{1/2}+1)\bigr] d + 1
  \end{equation*}
  by our choice of \(n\).
  We  bound the bracketed expression  using convexity of the map \(\eps \mapsto \left(\theta_\star^{1/2} + \eps\right)^2\):
  \begin{equation*}
    \theta_\star +\eps(2\theta_\star^{1/2}+1) = (1-\eps)\theta_\star + \eps\left(\theta_\star^{1/2}+1\right)^2 \ge \left(\theta_\star^{1/2} + \eps\right)^2
  \end{equation*}
  because \(\eps \in (0,1)\) and a convex function lies below the chord connecting its endpoints. Combining the two displayed equations above,
  \begin{equation*}
    n -1 \ge \left(\theta_\star^{1/2} + \eps\right)^2 d \ge \bigl(W + \eps\sqrt{d}\bigr)^2
  \end{equation*}
  where the second inequality holds for all sufficiently large \(d\) by assumption~\eqref{eq:limsup-width}.  The implication~\eqref{eq:CRPW-bd} thus implies that the linear inverse problem~\eqref{eq:lin-inv-by-dim} succeeds with overwhelming probability in high  dimensions when \(n = \lceil \theta_\eps d\rceil\).  From the second part of Lemma~\ref{lem:lin-inv-and-decay}, we conclude that \(\theta_\eps\) is an upper decay threshold for \(\{\Fcone(f,\vct x_0)\}\), as claimed.    The proof is completed by taking \(\eps \to  0\) and verifying that a limit of decay thresholds is itself a decay threshold.  We omit this straightforward, but technical, argument.
\end{proof}

\small

\let\oldbibliography\thebibliography
\renewcommand{\thebibliography}[1]{%
  \oldbibliography{#1}%
  \setlength{\itemsep}{0pt}%
}
\bibliographystyle{abbrv}
\bibliography{final}

\end{document}